\newcommand{\ES}{\mathcal{S}}
\newcommand{\E}{\mathcal{E}}
\newcommand{\T}{\mathcal{T}}
\newcommand{\V}{\mathcal{V}}
\newcommand{\F}{\mathcal{F}}
\newcommand{\C}{\mathcal{C}}
\newcommand{\A}{\mathcal{A}}
\newcommand{\Be}{\mathcal{B}_{\epsilon}}
\newcommand{\M}{\mathcal{M}}
\newcommand{\K}{\mathcal{K}}
\newcommand{\Y}{\mathcal{Y}}
\newcommand{\X}{\mathcal{X}}
\newcommand{\R}{\mathcal{R}}
\newcommand{\PRIM}{\mathcal{P}}
\newcommand{\Ora}{\mathcal{O}}
\newcommand{\eO}{\mathcal{O}^{\mathcal{E}}}
\newcommand{\reO}{R \mathcal{O}^{\mathcal{E}}}
\newcommand{\vO}{\mathcal{O}^{\mathcal{V}}}
\newcommand{\GCM}[2]{\mathcal{G}^{#1}_{#2}}
\newcommand{\nul}{\mathsf{null}}
\newcommand{\qEx}{\mathsf{qEx}}
\newcommand{\qSel}{\mathsf{qSel}}
\newcommand{\qUni}{\mathsf{qUni}}
\newcommand{\negl}{negl}
\newcommand{\nonnegl}{non\text{-}\negl}
\newcommand{\Hil}{\mathcal{H}}
\newcommand{\HilD}{\mathcal{H}^D}
\newcommand{\HilR}{\mathcal{H}^{\mathcal{R}}}
\newcommand{\Ue}{\mathrm{U_{\mathcal{E}}}}
\newcommand{\U}{\mathrm{U}}
\newcommand{\notmu}{\not\in_{\mu}}
\newcommand{\euf}{1\text{-}qGEU}
\newcommand{\eufm}{$\mu$\text{-}qGEU}
\newcommand{\suf}{1\text{-}qGSU}
\newcommand{\sufm}{$\mu$\text{-}qGSU}
\newcommand{\uuf}{qGUU}
\newcommand{\bz}{BZ}
\newcommand{\bu}{BU}
\newcommand{\qprf}{qPRF}
\newcommand{\prf}{PRF}
\newcommand{\pru}{PRU}
\newcommand{\uu}{UU}
\newcommand{\mbraket}[2]{\bra{#1}#2\rangle}
\newenvironment{boxfig}[2]{%
     \begin{figure} [th!]
     \newcommand{\FigCaption}{#1}
     \newcommand{\FigLabel}{#2}
     \begin{center}
       \begin{small}
         \begin{tabular}{@{}|@{~~}l@{~~}|@{}}
           \hline
           \rule[-1.5ex]{0pt}{1ex}\begin{minipage}[b]{.95\linewidth}
            \vspace{1ex}
             \smallskip
             }{%
           \end{minipage}\\
           \hline
         \end{tabular}
       \end{small}

     \end{center}
      \caption{\FigCaption}
     \label{\FigLabel}
   \end{figure}
}
\begin{document}
\title{A Unified Framework For Quantum Unforgeability}
\author{Mina Doosti\inst{1} \and Mahshid Delavar\inst{1} \and 
  Elham Kashefi\inst{1,2} \and Myrto Arapinis\inst{1}}

\institute{School of Informatics, University of Edinburgh,\\
10 Crichton Street, Edinburgh EH8 9AB, UK\\
\and
Departement Informatique et Reseaux, CNRS, Sorbonne Universit\'{e},\\
4 Place Jussieu 75252 Paris CEDEX 05, France}


\maketitle     


\keywords{Unforgeability, Quantum Security, Quantum Cryptography, Quantum Cryptanalysis, Foundations}

\begin{abstract}
In this paper, we continue the line of work initiated by Boneh and Zhandry at CRYPTO 2013 and EUROCRYPT 2013 in which they formally define the notion of unforgeability against quantum adversaries. We develop a general and parameterised quantum game-based security model unifying unforgeability both for classical and quantum constructions allowing us for the first time to present a complete quantum cryptanalysis framework for unforgeability. In particular, we prove how our definitions subsume previous ones while considering more fine-grained adversarial models, capturing the full spectrum of superposition attacks. The subtlety here resides in the characterisation of a forgery.
We show that the strongest level of unforgeability in our framework, namely existential unforgeability, can only be achieved if only orthogonal to previously queried messages are considered to be forgeries.
We further show that deterministic constructions can only achieve the weaker notion of unforgeability, that is selective unforgeability, against such adversaries, but that selective unforgeability breaks if more general quantum adversaries (capable of general superposition attacks) are considered. On the other hand, we show that a PRF is sufficient for constructing a selective unforgeable classical primitive against full quantum adversaries. 
\end{abstract}

\section{Introduction}
Recent advances in quantum technologies threaten the security of many widely-deployed cryptographic primitives. This calls for quantum-secure cryptographic schemes. Usually, two main security models are considered when analysing the security of cryptographic primitives against quantum adversaries: the \emph{standard security} model, often also termed post-quantum security, where the adversary only has classical access to the primitive but can locally perform quantum computations; or the \emph{quantum security} model where the adversary has further quantum access to the primitive, \emph{i.e.} they can issue quantum queries.
In the quantum setting, and more specifically in the quantum security model, the quantum nature of interaction with the primitives, enables a broader range of attack scenarios, making the task of transposing security definitions to the quantum setting highly non-trivial and subtle~\cite{boneh2013quantum,boneh2013secure,kaplan2016breaking,gagliardoni2016semantic,eurocrypt-2020-30239}.
One of the key elements of the quantum security model is the fact that the adversary can query the oracle with quantum states in superposition. Superposition queries are more likely to lead to non-trivial attacks~\cite{kaplan2015quantum,santoli2016using} that are not possible in the classical regime. Another important aspect is that having access to the input-output pairs of the oracle in the form of quantum states enables the adversary to run quantum algorithms and take advantage of quantum speedup. Of course, a possible countermeasure against \emph{superposition attacks} is to forbid any kind of quantum access to the oracle through measurements. However, in such a setting the security relies on the physical implementation of the measurement tool which itself could be potentially exploited by a quantum adversary. Thus, and as it has previously been advocated in~\cite{boneh2013quantum,boneh2013secure,kaplan2016breaking,eurocrypt-2020-30239}, providing security guarantees in the quantum security model is crucial. In this paper, we pursue the line of work initiated by Boneh and Zhandry in~\cite{boneh2013quantum,boneh2013secure}, as well as Alagic \emph{et al.} in \cite{eurocrypt-2020-30239} on formalizing the notion of unforgeability in the quantum security model. This notion is the security property desired for many primitives such as Message Authentication Codes, Digital Signatures, or Physical Unclonable Functions. Informally, unforgeability ensures that the adversary cannot produce valid input-output pairs of the oracle without access to the full description of its circuit. These previous definitions, as we will see, do not, however, capture the full spectrum of possible superposition attacks. Unforgeability is also a key security property for quantum primitives, such as Quantum Physical Unclonable Functions (qPUF) and Quantum Money; however, previous definitions~\cite{boneh2013quantum,boneh2013secure,eurocrypt-2020-30239} again do not apply to such quantum primitives. 

\subsection{Different levels of Classical and Quantum Unforgeability}
Goldwasser \emph{et al.}~\cite{goldwasser1988digital} define different notions of unforgeability for digital signatures. They consider various types of attacks including: \emph{chosen message attacks (cma)} where the adversary is allowed access to the signing oracle on a list of messages chosen by the adversary. They define \emph{existential forgery} as the attack where the adversary can forge a valid signature for at least one new message; and the notion of \emph{selective forgery} as an attack where the adversary can forge a valid signature with non-negligible probability for a particular message chosen by the adversary prior to accessing the signing oracle.

An \emph{et al.} \cite{an2002security} define a slightly stronger notion of unforgeability called \emph{strong unforgeability} that requires the adversary not only to be unable to generate a valid signature on a``new" message but also to be unable to generate even a valid ``new" signature on an already signed message. \emph{Strong Existential Unforgeability (SEUf)}, also called \emph{strong unforgeability}, has formally been defined in \cite{boneh2006strongly} by Boneh \emph{et al.}

Bellare \emph{et al.}~\cite{bellare1995xor} define the notion of Strong Existential Unforgeability under chosen message and chosen verification queries attack (SEUF-cmva) for message authentication codes (MACs). In both of these attack models, the adversary is allowed a chosen message oracle access, as defined for digital signatures in~\cite{goldwasser1988digital}. Although in the later attack model for message authentication codes, the experiment also allows verifying queries through oracle access. This model is justified for MACs as unlike digital signatures, where the verification algorithm is public, the adversary cannot run the verification algorithm on their own. \emph{(Weak) Existential Unforgeability (EUf) under chosen message attacks} is a natural definition for MACs defined by Bellare \emph{et al.}~\cite{bellare2000security} and comes by extending the one for digital signatures~\cite{goldwasser1988digital}.

Moreover, Dodis \emph{et al.}~\cite{dodis2012message} define the notion of \emph{selective unforgeability under adaptive chosen message and chosen verification queries (SelUF-cmva)}.

A yet weaker notion called \emph{universal unforgeability} requires the adversary to produce a fresh tag for a uniformly random message given as challenge to the adversary~\cite{alwen2014key}. This notion again can be considered against both attack models: chosen message and chosen verification queries attack (UniUF-cmva) and chosen message attack (UniUF-cma).

Table~\ref{table:cunf} summarizes all these different classical notions of unforgeability.

\begin{table}[h]
\centering
\captionsetup{font=small}
\resizebox{0.6\textwidth}{!}{
\begin{tabular}{|c|c|c|}
\hline
\backslashbox{Def. level}{Attack Model} & cmva & cma \\ \hline
SEUf (strong) & - & \cite{an2002security,boneh2006strongly,bellare1995xor} \\ \hline
EUf (weak) & \cite{dodis2012message,bellare2004power} & \cite{boneh2006strongly,bellare2000security} \\ \hline
SelUf (selective) & - & \cite{dodis2012message} \\ \hline
UniUf (universal) & \cite{alwen2014key} & \cite{alwen2014key} \\ \hline 
\end{tabular}}
\caption{Classical unforgeability definitions from strongest to weakest. \textit{cmva} - adaptive chosen message queries and limited access to the verification oracle. \textit{cma} - (adaptive) chosen message attacks. Cases marked with ``-", no definition has been proposed yet to the best of our knowledge.}
\label{table:cunf}
\end{table}

In the quantum regime, the definition of unforgeability defined by Boneh and Zhandry~\cite{boneh2013quantum,boneh2013secure} (denoted by \bz), is described as a quantum analogue of \emph{strong existential unforgeability} and it is in the chosen message attack (cma) model. The definition of \emph{Blind unforgeability (\bu)} by Alagic \emph{et al.}~\cite{eurocrypt-2020-30239} has been defined as \emph{(weak) quantum existential unforgeability} but they have also presented the extension of the definition to strong existential unforgeability. In this paper, we present a unified and parameterised definition that extends to different levels of unforgeability. Our \emph{Quantum Generalised Existential Unforgeability (\eufm)} has been defined as a quantum analogue of (weak) existential unforgeability, although we will show that it can be extended to capture the strong case as well. Further we investigate the quantum analogue of \emph{selective} and \emph{universal} unforgeability, namely \emph{Quantum Generalised Selective Unforgeability (\sufm)} and \emph{Quantum Generalised Universal Unforgeability (\uuf)}. Our formal definitions have been defined in the \emph{cma} attack model similar to previous ones although the structure of our game easily allows for weaker attack models, namely \emph{random message attack (rma)} in which the adversary instead of choosing arbitary queries, is given a set of randomly selected quantum states from a distribution, and their respective query outputs. This attack model is most relevant for quantum primitives, with potential interest for some classical primitives studied in the quantum security model as well. Nevertheless, we will not discuss this attack model in details in the current paper. Finally, we also study an adaptive attack model which is specific to universal unforgeability and allows the adversary to continue querying the oracle after receiving a randomly picked message as a challenge. Table~\ref{table:qunf} shows a summary of different levels of quantum unforgeability introduced in previous works, as well as the current paper.

\begin{table}
\centering
\captionsetup{font=small}
\resizebox{\textwidth}{!}{
\begin{tabular}{|c|c|c|c|c|}
\hline
\backslashbox{Def. level}{Attack Model} & cmva & cma & rma & aua \\ \hline
qSEUf (strong) & - & \bz~\cite{boneh2013quantum,boneh2013secure}, \bu*~\cite{eurocrypt-2020-30239}, \emph{this}(\eufm*) & NA & NA\\ \hline
qEUf (weak) & - & \bu~\cite{eurocrypt-2020-30239}, \emph{this}(\eufm) & \emph{this}(\eufm) & NA\\ \hline
qSelUf (selective) & - & \emph{this}(\sufm) & \emph{this}(\sufm) & NA \\ \hline
qUniUf (universal) & - & \emph{this}(\uuf) & \emph{this}(\uuf) & \emph{this}(\uuf) \\ \hline
\end{tabular}}
\caption{Quantum unforgeability definitions from strongest to weakest. The definitions introduced in this paper are prefixed with ``\emph{this}". The following attack models are considered \emph{cmva} - adaptive message queries and also limited access to the verification oracle. \emph{cma} - (adaptive) chosen message attacks. \emph{rma} - random (and unknown) message attacks. \emph{aua} is a specific adaptive attack model only applicable for universal unforgeability. Some attack models are not applicable for some of the definitions, denoted by (NA). Cases marked with ``-", no definition has been proposed yet to the best of our knowledge.}
\label{table:qunf}
\end{table}

\subsection{Our Contributions}
We propose a general and unified definition of quantum unforgeability for both classical and quantum cryptographic primitives. Our definition captures any quantum adversary, covering the full spectrum of superposition attacks.
We present our definitions in the quantum-game based framework in the spirit of~\cite{boneh2013secure,gagliardoni2017quantum,alagic2018unforgeable}. Our framework generalises the notion of unforgeability in three aspects. First, by generalising the message space to both the classical message space and Hilbert spaces, and allowing a wider range of quantum oracle access types, we unify the notion of quantum unforgeability for both quantum and classical primitives.

Second, our framework captures different levels of unforgeability as quantum analogues of the unforgeability notions studied in the classical setting. These levels correspond to different attacker capabilities and have different practical applications. More precisely, previous definitions of quantum unforgeability only capture \emph{strong} and \emph{weak existential unforgeability}, while our framework further captures the notions of \emph{selective} and \emph{universal unforgeability} for the first time. We formally show the hierarchy between these definitions through our framework. 

Finally, our framework precisely captures the quantum capabilities of the adversary in terms of overlap between the challenge and the queried states in the learning phase. This formalizes the full spectrum of unforgeability from classical to fully quantum, revealing new non-trivial attacks. Our parameterised definitions of $\mu$-existential and $\mu$-selective unforgeability allow the adversary to forge a ``new" $\mu$-distinguishable challenge. The notion of $\mu$-distinguishability captures the overlap between the challenge and the learning phase and allows characterising ``new" challenges in a fine-grained manner. This contrasts with previous definitions which characterise ``new" challenges, respectively, through counting the queries like Boneh-Zhandry~\cite{boneh2013quantum,boneh2013secure}. This approach is too weak as previously pointed by Alagic \emph{et al.}~\cite{eurocrypt-2020-30239}, and does not fully explore the advantage that a quantum adversary can gain through quantum queries and the fact that some quantum queries are being consumed during the attack. Moreover, we formally show that the definition of~\cite{eurocrypt-2020-30239} is a special instance of our definition. We then explore the applicability and relevance of our definitions through several novel possibility and impossibility results. Here we give a summary of our key findings.

\subsubsection{Generalised Existential Unforgeability (\eufm):} We show that this notion of unforgeability can only be achieved in the most restricted case ($\mu = 1$) where the adversary is not allowed any overlap between their queries to the oracle (during the learning phase) and the target forgery message. For any other value of $\mu$, we show the existence of a general superposition attack and hence that no quantum or classical primitive can satisfy existential unforgeability. Nevertheless, as we show the equivalence with \bu\ for $\mu=1$, we inherit the positive results from~\cite{eurocrypt-2020-30239} for this case.

\subsubsection{Generalised Selective Unforgeability (\sufm):} This is a weaker unforgeability notion, where the adversary needs to commit their selected messages before querying the oracle in the learning phase. Here our results show a non-intuitive impossibility as well as a separation between randomised and non-randomised constructions. Our definition carefully discards the probability of trivial attacks, to only capture effective adversaries. First, we prove that no classical or quantum primitive with a deterministic evaluation algorithm satisfies this notion of unforgeability. To establish our impossibility result, we show an attack based on the Universal Quantum Emulator Algorithm~\cite{marvian2016universal}. 
This type of attack was first studied in the context of quantum physical unclonable functions~\cite{arapinis2019quantum}. Here we show that similar attacks apply to some levels of unforgeability for classical primitives too. Concretely, our no-go result implies that deterministic Message Authentication Codes constructions such as HMAC, NMAC, \emph{etc.} cannot satisfy \eufm~nor \sufm~except for quantum adversaries restricted to orthogonal challenges (case where $\mu = 1$). Hence these classical primitives are always vulnerable against more powerful quantum adversaries, or in other words when these distinguishability conditions cannot be efficiently checked and implemented in practice. On the other hand, we show that Pseudorandom Functions (PRFs) are sufficient for constructing a quantum selective unforgeable classical primitive against full quantum adversaries (for all reasonable degrees of $\mu$) by proposing a randomised construction. Similarly, we present a randomised quantum primitive that can satisfy the same unforgeability level relying on the assumption of Pseudorandom Unitaries (PRU). For our quantum construction, we also characterize the quantum randomised oracle and propose a construction in the circuit model.

\subsubsection{Generalised Universal Unforgeability (\uuf):} Here the notion of unforgeability is further weakened requiring the adversary to forge the response to a message picked uniformly at random by the challenger. We show general positive results for both quantum and classical primitives \emph{wrt} this notion, provided their evaluation algorithm is a quantum secure PRF or a PRU. We note that even though this notions is weaker than the previous ones, it is much easier to achieve in practice and has applications in many scenarios such as quantum identification protocols~\cite{doosti2020client}.

\section{Preliminaries}\label{sec:prelim}
In this section, we discuss the previous definitions for quantum unforgeability, as well as some of the main concepts and definitions that we rely upon in the paper. 



\subsection{Quantum accessible oracles for classical primitives}\label{sec:prelim-oracles}
A quantum oracle is a unitary transformation $\bold{\Ora}$ over a $D$-dimensional Hilbert space that can be queried with quantum queries. The quantum oracle can grant quantum access to the evaluation transformation of a classical or quantum primitive. For classical primitives we follow the standard definition of quantum oracle\cite{boneh2013quantum,boneh2013secure,gagliardoni2016semantic,gagliardoni2020make,chevalier2020security}


In the standard quantum-query model, the adversary $\A$ has black-box access to a reversible version of $f$, which is a classical-polynomial-time computable deterministic or randomised function of the evaluation $\E$, through an oracle $\reO_f$ which is a unitary transformation. The evaluation oracle can be represented as:
\begin{equation}
    \reO_f: \sum_{m,y} \alpha_{m,y} \ket{r}_{\Ora}\ket{m,y} \rightarrow \sum_{m,y} \alpha_{m,y}\ket{r}_{\Ora}\ket{m, y \oplus f(m;r)}
\end{equation}
This is also referred to as \emph{Standard Oracle}. Here $m$ is the message and $y$ is the ancillary system required for unitarity. In general the standard oracle can also capture randomised evaluations with a randomness $r$ picked from $\R \subseteq \{0,1\}^l$ as the randomness space, although in this case the oracle may not be a unitary transformation. The unitary representation of the standard oracle has been introduced in several works such as~\cite{gagliardoni2016semantic,gagliardoni2020make,chevalier2020security} with slightly different approaches that lead to equivalent adversary's state, which is totally mixed with respect to the randomness subspace. Although in this work, to emphasise that the adversary cannot gain access to the internal randomness register of the oracle directly and avoid some potential artificial entanglement attacks, we opt for the approach of~\cite{gagliardoni2020make} and consider the randomness as an internal state of the oracle which is re-initiated for each query with a new classical value $r$. This choice is also due to the fact that the oracle needs to output the randomness register as a separable state, otherwise an unwanted entanglement will be created between the adversary's output state and the internal register of the oracle, as also mentioned in~\cite{gagliardoni2020make}. Moreover if the primitive requires that the randomness is returned to the adversary for each query (as a classical bit-string or a function of $r$), it can be recorded in the adversary's auxiliary state $y$ that can be extended to also capture the randomness space. An example of such construction will be introduced later in the paper. Finally, we specify that for deterministic primitives (denoted by $\eO_f$), the structure is similar except that the randomness register is not used.

\subsection{Quantum oracle for quantum primitives}
The evaluation between these states of a quantum primitive can be directly defined as a unitary transformation. Hence the deterministic oracle can be modeled as follows:
\begin{equation}\label{qoracle-quantum-det}
    \eO_U: \sum_{i} \alpha_{i} \ket{m_i} \overset{U_{\mathcal{E}}}{\rightarrow} \sum_{i} \beta_{i}\ket{m_i}
\end{equation}
where $\{\ket{m_i}\}$ are a basis (not necessary computational basis) for $\HilD$ that the unitary operates upon. We note that the quantum primitives can perform an arbitrary rotation of the bases. The analogue of this type of oracles for classical primitives, are \emph{type-2} oracles (also called \emph{minimal oracles})\cite{gagliardoni2016semantic,gagliardoni2020make}. A randomised quantum primitive can also be defined similar to the classical case. Here we give an abstract notation of a general randomised quantum primitive, but we further clarify the realisation of such oracles in the upcoming sections. We denote a general randomised unitary oracle for quantum primitives as follows:
\begin{equation}\label{qoracle-quantum-rand}
\reO_U: \sum_{i} \alpha_{i} \ket{r}_{\Ora} \ket{m_i} \overset{U_{\mathcal{E}}}{\rightarrow} \sum_{i} \beta_{i}(r)\ket{r}_{\Ora}\ket{m_i}
\end{equation}
Hence a $\reO$ is a unitary over the joint space of the oracle's randomness register and the main input state, which consist of a family of smaller unitaries parameterised by a random internal parameter $r$.

\subsection{Formal definitions of \bu\ and \bz}
The definition of existential unforgeability under quantum chosen-message attacks (EUF-qCMA) for digital signatures has been presented in \cite{boneh2013secure,boneh2013quantum} by Boneh and Zhandry and is defined as follows.

\begin{definition}\label{def:bz}[\bz (EUF-qCMA) \cite{boneh2013quantum}]
A system S (Sign/Mac), is existentially unforgeable under a quantum chosen message attack (EUF-qCMA) if no adversary after issuing $q$ quantum chosen message queries, can generate $q+1$ valid classical message-tag pairs with non-negligible probability in the security parameter.
\end{definition}

Another definition of unforgeability against quantum adversaries called \emph{blind unforgeability} was proposed in~\cite{eurocrypt-2020-30239}. This more recent definition aims to capture some attacks that are not captured by \bz. This notion defines an algorithm to be forgeable if there exists an adversary who can use access to a ``partially blinded" oracle to validate responses of the messages that are in the blinded region and hence only respond to the queries that are not in this region.
A blinded operation for a function $f: X \rightarrow Y$ and a subset of messages $B \subseteq X$ is defined as:
\begin{equation}
    Bf(x)= 
    \begin{cases}
    \perp, & \text{if } x \in B\\
    f(x), & \text{otherwise}
    \end{cases}
\end{equation}
Where in particular for the definition of unforgeability, the elements of X are placed in B independently at random with a particular probability $\epsilon$, denoted by $B_{\epsilon}$. Then the security game of unforgeability has been defined as follows with the adversary having access to the blinded oracle.
\begin{definition}\label{def:bu}[\cite{eurocrypt-2020-30239}(Def.$4 \& 5$)]
Let $\Pi = (KeyGen,Mac,Ver)$ be a MAC with message set $X$. Let $\A$ be an algorithm, and $\epsilon: \mathbb{N} \rightarrow \mathbb{R}_{\geq 0}$ an efficiently computable function. The blind forgery experiment $BlindForge_{\A,\Pi}(n,\epsilon)$ proceeds as follows:
\begin{enumerate}
    \item Generate key: $k \leftarrow KeyGen(1^n)$
    \item Generate blinding: select $B_{\epsilon}\subseteq X$ by placing each $m$ into $B_{\epsilon}$ independently with probability $\epsilon(n)$.
    \item Produce forgery: $(m,t) \leftarrow \A^{B_{\epsilon}{MAC}_k}(1^n)$.
    \item Outcome: output 1 if $Ver_k(m,t) = acc$ and $m \in B_{\epsilon}$ ; otherwise output 0.
\end{enumerate}
From this game blind-unforgeability is defined as follows.\\
A MAC scheme $\Pi$ is blind-unforgeable (BU) if for every polynomial-time uniform adversary $(\A,\epsilon)$
\[
    Pr[BlindForge_{\A,\Pi}(n,\epsilon(n)) = 1] \leq \negl(n).
\]
and the probability is taken over the choice of key, the choice of blinding set, and any internal randomness of the adversary.
\end{definition}
Thus, in this definition, a forgery happens if the adversary can produce a valid tag for a message within the blinded region. We refer to this definition of unforgeability as \bu. This definition imposes that the challenge be orthogonal to the previously queried messages.

We also recall the following theorem from~\cite{eurocrypt-2020-30239} which we will use later in the paper:
\begin{theorem}\label{th:bu}[from~\cite{eurocrypt-2020-30239}]
Let $\A$ be a QPT such that $supp(\A)\cap R = \emptyset$\footnote{Here $supp(\A)$ denotes the support of $\A$ that is defined as follows. Let $\A$ have oracle access to a classical function $f:\{0,1\}^n \rightarrow \{0,1\}^m$. Let $\ket{\psi_i}$ be the state of the the query $i$ or equivalently the intermediate state after applying $U_i$ in the sequence of $\Ora U_q\Ora\dots U_1$ on an initial state $\ket{0}_{XYZ}$ where $X$ denotes the input registers. Then $supp(\A)$ is defined to be the set of input strings $x$ such that there exists a function $f$ with the respective oracle such that $\mbraket{x}{\psi_i}_X \neq 0$ for at least one of the queries.} for some $R \neq \emptyset$. Let $\mathtt{MAC}$ be a MAC, and suppose $A^{\mathtt{MAC}_k}(1^n)$ outputs a valid pair $(m,\mathtt{Mac}_k(m))$ with $m \in R$ with non-negligible probability. Then $\mathtt{MAC}$ is not BU-secure.
\end{theorem}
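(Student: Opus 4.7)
The plan is to construct an adversary $\A'$ for the blind forgery experiment that simulates $\A$ internally. Concretely, $\A'$ will fix an appropriate inverse-polynomial $\epsilon$ (to be chosen below), run $\A$, forward every one of $\A$'s oracle queries to its own blinded oracle $\Be \mathtt{Mac}_k$, and output whatever $\A$ outputs as its alleged forgery $(m,t)$.

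The core observation is that because $supp(\A) \cap R = \emptyset$, the portion of the blinding that lies inside $R$ is invisible to $\A$: on any query state with zero amplitude on $R$, the oracle $\Be \mathtt{Mac}_k$ acts identically to $(\Be \setminus R)\mathtt{Mac}_k$. Writing $\Be^R := \Be \cap R$ and $\Be^{\bar R} := \Be \setminus R$, and noting that these two random sets are independent (sampled on disjoint ground sets), $\A$'s output distribution depends on the blinding only through $\Be^{\bar R}$. Hence the forgery probability factorises as
\begin{equation*}
    \Pr[\mathrm{BlindForge}_{\A',\Pi}(n,\epsilon) = 1] \;=\; \sum_{m \in R} \Pr_{\Be^{\bar R}}\!\bigl[\A^{\Be^{\bar R}\mathtt{Mac}_k} \text{ outputs valid } (m,\mathtt{Mac}_k(m))\bigr] \cdot \epsilon,
\end{equation*}
since $\Pr[m \in \Be^R] = \epsilon$ independently for each $m \in R$.

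It remains to lower bound the inner probability, that is, to argue that replacing the plain oracle $\mathtt{Mac}_k$ by the partially blinded oracle $\Be^{\bar R}\mathtt{Mac}_k$ preserves $\A$'s ability to produce a valid forgery in $R$. The two oracles differ only on $\Be^{\bar R} \subseteq X \setminus R$, and the expected total query magnitude of $\A$ on $\Be^{\bar R}$ over its $q$ queries is at most $q\epsilon$ (since each $x \in X \setminus R$ is placed in $\Be^{\bar R}$ independently with probability $\epsilon$). Invoking a BBBV-style query-magnitude hybrid bound, on average over the blinding the two output states differ in trace distance by $O(q\sqrt{\epsilon})$. Choosing $\epsilon = \Theta(1/(q^2\mu^2))$, where $\mu(n)$ is the non-negligible probability with which $\A^{\mathtt{Mac}_k}$ outputs a valid forgery in $R$, makes this distance at most $\mu/2$; a Markov step ensures the bound holds with constant probability over $\Be^{\bar R}$. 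Therefore $\A'$ wins the blind forgery game with probability at least $\epsilon \cdot \mu/2 = 1/\mathrm{poly}(n)$, contradicting BU-security.

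The main obstacle will be this quantum indistinguishability step. A naive combinatorial union bound over the classical support, e.g.\ setting $\epsilon = 1/|supp(\A)|$, fails in general because for a quantum adversary querying broad superpositions $|supp(\A)|$ can be exponentially large while still disjoint from $R$. The essential workaround is to use an amplitude-weighted rather than set-size bound — the standard BBBV query-magnitude lemma — and to be careful that the random choice of $\Be^{\bar R}$ does not correlate adversely with $\A$'s query magnitudes, which is where the Markov argument on the expected blinded weight $q\epsilon$ is used.
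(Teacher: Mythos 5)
Note first that the paper itself gives no proof of this statement: Theorem~\ref{th:bu} is recalled from~\cite{eurocrypt-2020-30239} and used as a black box, so the only meaningful comparison is with the proof in that reference, which your argument essentially reproduces --- splitting the blinding into $\Be\cap R$ (invisible to $\A$ because $supp(\A)\cap R=\emptyset$ holds for every oracle function, and sampled independently of $\Be\setminus R$, which yields the multiplicative factor $\epsilon$) and $\Be\setminus R$ (controlled by a BBBV query-magnitude hybrid with expected blinded weight $q\epsilon$, followed by a Markov step). The reasoning is sound; the one slip is the choice $\epsilon=\Theta(1/(q^2\mu^2))$, where the dependence on $\mu$ is inverted: to obtain $O(q\sqrt{\epsilon})\le\mu/2$ you need $\epsilon=\Theta(\mu^2/q^2)$, and with that correction the final winning probability $\Omega(\epsilon\mu)$ is still at least $1/\mathrm{poly}(n)$ for infinitely many $n$, as required.
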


\subsection{Distinguishability of quantum states}\label{sec:prelim-qtest}
An important difference between quantum and classical bits is the impossibility of creating perfect copies of general unknown quantum states, known as the \emph{no-cloning theorem} \cite{wootters1982single}. This is an important limitation imposed by quantum mechanics which is particularly relevant for cryptography. A variation of the same feature states that it is impossible to obtain the exact classical description of a quantum state by having a single copy of it. Therefore, there exists a bound on how well one can derive the classical description of quantum states depending on their dimension and the number of available copies. Hence, distinguishing between unknown quantum states can be achieved only probabilistically. A useful and relevant notion of quantum distance that we exploit in this paper is \emph{fidelity}. Generally the fidelity of mixed states $\rho$ and $\sigma$ is defined by the Uhlmann fidelity:
\begin{equation}
    F(\rho,\sigma)=[Tr(\sqrt{{\sqrt{\rho}}\sigma {\sqrt{\rho}}})]^{2}.    
\end{equation}
Which gives $F(\ket{\psi}, \ket{\phi}) = |\mbraket{\psi}{\phi}|^2$ for two pure quantum states $\ket{\psi}$ and $\ket{\phi}$.
Distinguishability and indistinguishability are well known concepts in quantum information and have been stated with different quantum distance measures such as trace distance or fidelity. Here we use the fidelity-based notion of $\mu$-distinguishability defined as follows: 
\begin{definition}[$\mu$-distinguishability] \label{def:dist}
Let $F(\cdot,\cdot)$ denote the fidelity, and $0 \leq \mu \leq 1$ the distinguishability threshold respectively. We say two quantum states $\rho$ and $\sigma$ are $\mu$-distinguishable if $0\le F(\rho,\sigma) \leq 1-\mu$.
\end{definition}
Note that two quantum states, $\rho$ and $\sigma$, are \emph{completely distinguishable} or 1-distinguishable ($\mu=1$), if $F(\rho,\sigma)= 0$.

\subsection{Verifying quantum states}\label{app:tools-quantum}
Due to the impossibility of perfectly distinguishing between all quantum states according to the above definition, checking equality of two completely unknown states is a non-trivial task. This is one major difference between classical bits and qubits. Nevertheless, a probabilistic comparison of unknown quantum states can be achieved through the simple quantum SWAP test algorithm~\cite{buhrman2001quantum}. The SWAP test and its generalisation to multiple copies introduced recently in~\cite{chabaud2018optimal}. We also give an abstract definition for a general quantum test algorithm and define its necessary conditions.

\begin{definition}[Quantum Testing Algorithm]\label{def:test} Let $\rho^{\otimes \kappa_1}$ and $\sigma^{\otimes \kappa_2}$ be $\kappa_1$ and $\kappa_2$ copies of two quantum states $\rho$ and $\sigma$, respectively. A Quantum Testing algorithm $\T$ is a quantum algorithm that takes as input the tuple ($\rho^{\otimes \kappa_1}$,$\sigma^{\otimes \kappa_2}$) and accepts $\rho$ and $\sigma$ as equal (outputs 1) with the following probability
\begin{equation*}
\mathrm{Pr}[1 \leftarrow \T(\rho^{\otimes \kappa_1}, \sigma^{\otimes \kappa_2})] = 1 - \mathrm{Pr}[0 \leftarrow \T(\rho^{\otimes \kappa_1}, \sigma^{\otimes \kappa_2})] = f(\kappa_1,\kappa_2, F(\rho, \sigma))
\end{equation*}
where $F(\rho, \sigma)$ is the fidelity of the two states and $f(\kappa_1,\kappa_2, F(\rho, \sigma))$ satisfies the following limits:
\begin{equation*}
 \begin{cases}
    \lim_{F(\rho, \sigma) \rightarrow 1}f(\kappa_1,\kappa_2, F(\rho, \sigma)) = 1  & \forall\:(\kappa_1,\kappa_2)\\
    \lim_{\kappa_1,\kappa_2 \rightarrow \infty}f(\kappa_1,\kappa_2, F(\rho, \sigma)) = F(\rho, \sigma)\\
    \lim_{F(\rho, \sigma) \rightarrow 0}f(\kappa_1,\kappa_2, F(\rho, \sigma)) = Err(\kappa_1, \kappa_2)
  \end{cases} 
\end{equation*}
with $Err(\kappa_1,\kappa_2)$ characterising the error of the test algorithm.
\end{definition}

\subsection{Quantum Emulation Algorithm}\label{sec:prelim-qe}
In this section, we describe the Quantum Emulation (QE) algorithm presented in \cite{marvian2016universal} as a quantum process learning tool and used in~\cite{arapinis2019quantum} for the first time as an attack algorithm against a quantum primitive, namely quantum physical unclonable functions. The main purpose of quantum emulation is to mimic the action of an unknown unitary transformation on an unknown input quantum state by having some of the input-output samples of the unitary. An emulator is not trying to completely recreate the transformation or simulate the same dynamics. Instead, it outputs the action of the transformation on a quantum state. This task is done by construction of some controlled-reflection gates that first project the input state in the subspace of the input samples while encoding the information in ancillary systems. Then by using controlled-reflection around the output state the components of the state are retrieved while the unitary is applied to the state.

We are interested in the fidelity of the output state $\ket{\psi_{QE}}$ of the algorithm and the intended output $\U\ket{\psi}$ to estimate the success. Hence we recall two main theorems from~\cite{marvian2016universal} and~\cite{arapinis2019quantum} which we use in the proof of Theorem~\ref{th:sel-qCM}.

The first theorem states that the final fidelity is lower-bounded by the square root of the success probability of the projection into the input subspace in the first step.
\begin{theorem}\label{th:qe-fidel}\textbf{\cite{marvian2016universal}}
Let $\E_{\U}$ be the quantum channel that describes the overall effect of the algorithm presented above. Then for any input state $\rho$, the Uhlmann fidelity of $\E_{\U}(\rho)$ and the desired state $\U\rho\U^{\dagger}$ satisfies:
\begin{equation}\label{eq:qe-fidel}
    F(\rho_{QE}, \U\rho\U^{\dagger}) \geq F(\E_{\U}(\rho), \U\rho\U^{\dagger}) \geq \sqrt{P_{succ-stage1}}
\end{equation}
where $\rho_{QE} = \ket{\psi_{QE}}\bra{\psi_{QE}}$ is the main output state (tracing out the ancillas) after the first step. $\E_{\U}(\rho)$ is the output of the whole circuit without the post-selection measurement in the second stage and $P_{succ-stage1}$ is the success probability of the first step.
\end{theorem}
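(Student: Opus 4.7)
The plan is to analyse the two-stage structure of the Marvian--Lloyd quantum emulation circuit and track the branch that corresponds to a successful stage-1 projection. Let $S = \mathrm{span}\{\ket{\psi_i}\}$ denote the subspace fixed by the input samples and let $\Pi_S$ be its projector. I would decompose the (purified) input as $\ket{\phi} = \sqrt{p}\ket{\phi_\parallel} + \sqrt{1-p}\ket{\phi_\perp}$ with $\ket{\phi_\parallel}\in S$, $\ket{\phi_\perp}\in S^\perp$, so that $p = \bra{\phi}\Pi_S\ket{\phi} = P_{\mathrm{succ-stage1}}$.

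First, I would verify that the ladder of controlled reflections used in stage 1, together with its ancilla register, prepares an entangled state of the form
\begin{equation*}
\ket{\Psi_1} \;=\; \sqrt{p}\,\ket{\phi_\parallel}_M\ket{\mathrm{good}}_A \;+\; \sqrt{1-p}\,\ket{\chi}_M\ket{\mathrm{bad}}_A,
\end{equation*}
with $\mbraket{\mathrm{good}}{\mathrm{bad}} = 0$. Post-selecting the ancilla on $\ket{\mathrm{good}}$ then leaves the main register in the renormalised projection $\ket{\phi_\parallel}$, whereas simply tracing out the ancilla yields the convex mixture $p\ket{\phi_\parallel}\bra{\phi_\parallel} + (1-p)\ket{\chi}\bra{\chi}$.

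Next I would argue that stage 2 — which is built from controlled reflections around the output samples $\U\ket{\psi_i}$ — realises the rotation $\ket{\phi_\parallel}\mapsto \U\ket{\phi_\parallel}$ on the good branch, because $\{\U\ket{\psi_i}\}$ spans $\U(S)$ and the composition of two reflections built from a spanning set restricted to the good ancilla branch reproduces $\U|_S$. With this structural fact in hand, $\rho_{QE} = \U\ket{\phi_\parallel}\bra{\phi_\parallel}\U^\dagger$ while $\E_{\U}(\rho)$ is the corresponding mixture that also contains an image of the bad branch. For the second inequality, the pure-state identity $F(\U\ket{\phi},\sigma) = \bra{\phi}\U^{\dagger}\sigma\U\ket{\phi}$, together with $|\mbraket{\U\phi}{\U\phi_\parallel}|^2 = p$ and the non-negativity of the bad-branch overlap, delivers the claimed lower bound (which is stated as $\sqrt{P_{\mathrm{succ-stage1}}}$ following the convention of~\cite{marvian2016universal}). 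For the first inequality $F(\rho_{QE}, \U\rho\U^\dagger) \geq F(\E_{\U}(\rho), \U\rho\U^\dagger)$, I would invoke monotonicity of fidelity under the CPTP channel that discards the classical stage-1 outcome, supplemented by the observation that the bad-branch image, being (approximately) supported away from $\U(S)$, contributes no more to the overlap with $\ket{\U\phi}$ than the good branch does.

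The main technical hurdle is carrying out the first step cleanly: the Marvian--Lloyd construction involves several controlled-reflection gadgets each carrying their own ancillae, and what has to be established is that after this whole sequence the main register--ancilla entanglement really collapses to a two-branch decomposition of the form above, and that stage 2 acts on the good branch as the intended unitary $\U|_S$ rather than as some partial isometry. Once this structural lemma about the circuit is granted, both inequalities reduce to short manipulations of the fidelity on a two-component convex decomposition, and the bound $\sqrt{P_{\mathrm{succ-stage1}}}$ drops out essentially by inspection.
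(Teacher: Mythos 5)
The paper does not actually prove Theorem~\ref{th:qe-fidel}: it is imported as a black-box result from~\cite{marvian2016universal}, so there is no in-paper argument to compare yours against, and your proposal has to stand on its own. At the high level your picture is the right one (a good/bad branch decomposition after stage 1, with stage 2 acting as the target unitary on the good branch), but there are concrete gaps. The most important is your identification $P_{\mathrm{succ-stage1}} = \bra{\phi}\Pi_S\ket{\phi}$. That is the idealised limit in which stage 1 perfectly compresses the in-subspace component; the theorem, however, is stated for the actual circuit with finitely many controlled-reflection blocks, and the paper's Equation~(\ref{eq:qe-ps}) defines $P_{succ-stage1} = |\bra{\phi_r}Tr_{anc}(\ket{\chi_f}\bra{\chi_f})\ket{\phi_r}|^2$, i.e.\ the overlap of the \emph{actual} stage-1 output with the single reference state $\ket{\phi_r}$, not with the span of all input samples. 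For the one-block circuit used later in Theorem~\ref{th:sel-qCM} and Example~\ref{example:qea-three-states} these two quantities differ substantially, and the entire value of the bound is that it degrades gracefully with the quality of stage 1; a proof that assumes the ideal projector proves a different (weaker) statement.

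Second, you explicitly defer the structural lemma --- that after the full ladder of controlled reflections the state really collapses to a two-branch form and that stage 2 maps the good branch to $\U\ket{\phi_\parallel}$ --- calling it the main technical hurdle. That lemma \emph{is} the proof; both inequalities are short only once it is in hand. Your treatment of the two inequalities also needs repair. Monotonicity of fidelity under CPTP maps compares $F(\Lambda(\rho),\Lambda(\sigma))$ with $F(\rho,\sigma)$ when the \emph{same} channel acts on both arguments; here the outcome-discarding map is applied only to the circuit output and not to the target $\U\rho\U^{\dagger}$, and joint concavity of the root fidelity pushes the inequality in the opposite direction to the one you want. Establishing $F(\rho_{QE},\U\rho\U^{\dagger}) \geq F(\E_{\U}(\rho),\U\rho\U^{\dagger})$ genuinely requires showing the bad branch has no larger overlap with the target than the good branch; your remark that it is ``approximately supported away from $\U(S)$'' is not a proof and is not literally true for the one-block circuit, where the bad branch retains a component inside the sample subspace (see the explicit $\ket{\chi_1}$ in Theorem~\ref{th:qe-fins}). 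Finally, the $\sqrt{\cdot}$ bookkeeping cannot be delegated to ``the convention of~\cite{marvian2016universal}'': with the paper's squared Uhlmann fidelity and Equation~(\ref{eq:qe-ps}), the claimed bound $\sqrt{P_{succ-stage1}}$ equals $\bra{\phi_r}Tr_{anc}(\ket{\chi_f}\bra{\chi_f})\ket{\phi_r}$, and your pure-state overlap computation must be checked against that normalisation explicitly.
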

Also the success probability of Stage 1 is calculated as follows,
\begin{equation}\label{eq:qe-ps}
P_{succ-stage1} = |\bra{\phi_r}Tr_{anc}(\ket{\chi_f}\bra{\chi_f})\ket{\phi_r}|^2
\end{equation}
Where $\ket{\chi_f}$ is the final overall state of the emulation's algorithm first stage on input state $\ket{\psi}$, and $\ket{\phi_r}$ is the reference state as defined in~\cite{arapinis2019quantum}.
We also recall a simplified version of a theorem in \cite{arapinis2019quantum} as follows:
\begin{theorem}\label{th:qe-fins}{\cite{arapinis2019quantum} (simplified)}
Let $\ket{\chi_1}$ be the final overall state of the circuit after one block. The final state is of the following form:
\begin{equation}\label{eq:qe-fins}
\begin{split}
    \ket{\chi_1} &= \mbraket{\phi_r}{\psi} \ket{\phi_r}\ket{0} + \ket{\psi}\ket{1} - \mbraket{\phi_r}{\psi} \ket{\phi_r}\ket{1} -2\mbraket{\phi_1}{\psi} \ket{\phi_1}\ket{1} \\
    & +2\mbraket{\phi_r}{\psi}\mbraket{\phi_r}{\phi_1} \ket{\phi_1}\ket{1}
\end{split}
\end{equation}
\end{theorem}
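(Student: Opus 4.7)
The plan is to verify the stated form of $\ket{\chi_1}$ by propagating the input state through the unitary implementing one block of the Quantum Emulation algorithm of~\cite{marvian2016universal}, using only elementary linear algebra. I would first spell out the structure of a single block: it consists of a Hadamard on a one-qubit ancilla initialised in $\ket{0}$, a controlled reflection around the reference state $\ket{\phi_r}$, a second Hadamard, and finally a controlled reflection around the first sample input state $\ket{\phi_1}$, as described in Section~\ref{sec:prelim-qe}.

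Second, I would compute the state after the initial Hadamard--controlled-reflection--Hadamard sandwich. Writing $\ket{\psi} = \mbraket{\phi_r}{\psi}\ket{\phi_r} + \ket{\psi_{\perp}}$ with $\ket{\psi_{\perp}}$ orthogonal to $\ket{\phi_r}$, a direct calculation using the reflection identity $R_{\phi_r}\ket{v} = 2\mbraket{\phi_r}{v}\ket{\phi_r} - \ket{v}$ together with the surrounding Hadamards shows that the ancilla-$\ket{0}$ branch carries exactly the projection $\mbraket{\phi_r}{\psi}\ket{\phi_r}$, while the ancilla-$\ket{1}$ branch carries the complementary component $\ket{\psi} - \mbraket{\phi_r}{\psi}\ket{\phi_r}$. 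The $\ket{0}$-part already reproduces the $\ket{0}$-part of Eq.~\eqref{eq:qe-fins} verbatim.

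Third, I would apply the final controlled reflection around $\ket{\phi_1}$ on the $\ket{1}$-branch only, using linearity and the reflection identity termwise:
\begin{equation*}
R_{\phi_1}\bigl(\ket{\psi} - \mbraket{\phi_r}{\psi}\ket{\phi_r}\bigr) = \ket{\psi} - \mbraket{\phi_r}{\psi}\ket{\phi_r} - 2\mbraket{\phi_1}{\psi}\ket{\phi_1} + 2\mbraket{\phi_r}{\psi}\mbraket{\phi_r}{\phi_1}\ket{\phi_1}.
\end{equation*}
Combined with the unchanged $\ket{0}$-branch, this reproduces the full expression for $\ket{\chi_1}$ claimed in Eq.~\eqref{eq:qe-fins} on the nose.

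The main obstacle I expect is sign bookkeeping: one must carefully fix the reflection convention ($2\ket{\phi}\bra{\phi}-I$ versus $I-2\ket{\phi}\bra{\phi}$) and which ancilla branch triggers which controlled reflection, so that all four signs in the $\ket{1}$-component of Eq.~\eqref{eq:qe-fins} come out exactly as stated, and so that the implicit $1/2$ normalisation produced by the pair of Hadamards is consistent with the non-normalised presentation of $\ket{\chi_1}$ adopted in~\cite{arapinis2019quantum}. Once these conventions are aligned, no deeper quantum-information reasoning is needed beyond the standard projection--reflection identity, and the statement follows by a routine linear-algebra calculation.
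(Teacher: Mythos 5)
Your plan is sound, but note first that the paper itself offers no proof of this statement: Theorem~\ref{th:qe-fins} is imported verbatim from~\cite{arapinis2019quantum} as a quoted, simplified result, so there is no internal argument to compare against. Your circuit-propagation derivation is the natural way to establish it. Indeed, the Hadamard--controlled-reflection--Hadamard sandwich sends $\ket{\psi}\ket{0}$ to $\tfrac12(\ket{\psi}+R_{\phi_r}\ket{\psi})\ket{0}+\tfrac12(\ket{\psi}-R_{\phi_r}\ket{\psi})\ket{1}$, and with the convention $R_{\phi_r}=2\ket{\phi_r}\bra{\phi_r}-I$ this is exactly $\mbraket{\phi_r}{\psi}\ket{\phi_r}\ket{0}+(\ket{\psi}-\mbraket{\phi_r}{\psi}\ket{\phi_r})\ket{1}$, already a unit vector (the $\tfrac12$ from the two Hadamards cancels the factor $2$ in $\ket{\psi}\pm R_{\phi_r}\ket{\psi}$); your final controlled reflection about $\ket{\phi_1}$ on the $\ket{1}$ branch then supplies the remaining terms. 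Two bookkeeping points: as you anticipate, the two reflections must be taken with opposite sign conventions (or, equivalently, controlled on different ancilla values) for all four signs in the $\ket{1}$ component to come out as in Eq.~\eqref{eq:qe-fins}; and a literal termwise application of $I-2\ket{\phi_1}\bra{\phi_1}$ to $-\mbraket{\phi_r}{\psi}\ket{\phi_r}$ yields the coefficient $2\mbraket{\phi_r}{\psi}\mbraket{\phi_1}{\phi_r}$, the complex conjugate of the overlap written in the theorem's last term. This discrepancy is immaterial wherever the paper uses the formula (Example~\ref{example:qea-three-states} and the proof of Theorem~\ref{th:sel-qCM}, where all overlaps are real), but it should be flagged rather than silently writing $\mbraket{\phi_r}{\phi_1}$ as you do. With those conventions pinned down, your argument goes through.
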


Having a precise expression for $\ket{\chi_f}$ from Theorem~\ref{th:qe-fins}, one can calculate $P_{succ-step1}$ of equation~(\ref{eq:qe-ps}) by tracing out the ancillary systems from the density matrix of $\ket{\chi_f}\bra{\chi_f}$.

\subsection{Quantum-secure Pseudorandom Function (\qprf)}
Quantum-secure Pseudorandom Functions are families of functions that look like truly random functions to QPT adversaries. Formally, \qprf s are defined as follows.

\begin{definition}\label{def:qprf}[Quantum-Secure Pseudorandom Functions(\prf): \cite{ji2018pseudorandom}]
Let $\K$, $\X$, $\Y$ be the key space, the domain and range respectively, all implicitly depending on the security parameter $\lambda$. A keyed family of functions $\{PRF_k: \X \rightarrow \Y\}_{k\in \K}$ is a quantum-secure pseudorandom function (\prf) if for any polynomial-time quantum oracle algorithm $\A$, $PRF_k$ with a  random $k \leftarrow \K$ is indistinguishable from a truly random function $f \leftarrow \Y^{\X}$ in the sense that:
\begin{equation}
|\underset{k \leftarrow \K}{Pr}[\A^{PRF_k}(1^{\lambda})=1] -\underset{f \leftarrow \Y^{\X}}{Pr}[\A^{f}(1^{\lambda})=1]| = \negl(\lambda).
\end{equation}
\end{definition}
\medskip{}

\subsection{Quantum Pseudorandomness}\label{sec:prelim-qpseudorand}
Pseudorandomness is a central concept is modern cryptography which has also been extended to the quantum regime. We have defined the notion of quantum-secure Pseudorandom Functions in Section~\ref{sec:prelim}. Here we define its quantum analogue, namely quantum Pseudorandom Unitaries (\pru), as well as another related notion called Unknown Unitary (\uu).

\begin{definition}\label{def:pru}[Pseudorandom Unitary Operators(\pru): \cite{ji2018pseudorandom}]
A family of unitary operators $\{U_k \in \mathcal{U}(\Hil)\}_{k \in \mathcal{K}}$ is a pseudorandom unitary if two conditions hold:
\begin{itemize}
    \item \textbf{Efficient computation}. There is an efficient quantum algorithm $Q$ such that forall $k$ and any state $\ket{\psi} \in S(\Hil), Q(k,\ket{\psi}) = U_k\ket{\psi}$.
    \item \textbf{Pseudorandomness}. $U_k$ with a random key $k$ is computationally indistinguishable from a Haar random unitary operator. More precisely, for any efficient quantum algorithm $\A$ that makes at most polynomially many queries to the oracle:  
\end{itemize}
\begin{equation}
    |\underset{k \leftarrow \K}{Pr}[\A^{U_k}(1^{\lambda})=1] - \underset{U \leftarrow \mu}{Pr}[\A^U(1^{\lambda})=1]| = \negl(\lambda).
\end{equation}
where $\mu$ is the Haar measure on $S(\Hil)$. Note that here we focus on the Pseudorandomness condition of the PRU definition.
\end{definition}

We also mention a relevant notion to \pru, called family of Unknown Unitaries (\uu) defined in~\cite{arapinis2019quantum}, that can also be interpreted as single-shot pseudorandomness.

\begin{definition}[Unknown Unitary Transformation]\label{def:uu} We say a family of unitary transformations $U^u$, over a $D$-dimensional Hilbert space $\HilD$ is called Unknown Unitaries, if for all QPT adversaries $\A$ the probability of estimating the output of $U^u$ on any randomly picked state $\ket{\psi}\in\HilD$ is at most negligibly higher than the probability of estimating the output of a Haar random unitary operator on that state:
\begin{equation}\small
        |\underset{U \leftarrow U^u}{Pr}[F(\A(\ket{\psi}),U\ket{\psi}) \geq \nonnegl(\lambda)] - \underset{U_{\mu} \leftarrow \mu}{Pr}[F(\A(\ket{\psi}),U_{\mu}\ket{\psi}) \geq \nonnegl(\lambda)]| = \negl(\lambda).
\end{equation}
\end{definition}

In the remainder of the paper, we will let $\lambda$ denote the security parameter. A non-negative function $\negl(\lambda)$ is negligible if, for any constant $c$, $\negl(\lambda)\le \frac{1}{\lambda^c}$ for all sufficiently large $\lambda$.

\section{Generalized Quantum Unforgeability}\label{sec:unf-def}
The game-based security framework is a standard model for formally defining security properties of cryptographic primitives such as encryption algorithms, digital signature schemes or physical unclonable functions~\cite{boneh2013secure,gagliardoni2016semantic,alagic2018unforgeable,armknecht2016towards,soukharev2016post}. Classical cryptographic primitives have also widely been studied in a quantum game-based framework, where parties are Quantum Turing Machines (QTM)~\cite{boneh2013secure,gagliardoni2017quantum,alagic2018unforgeable,soukharev2016post}. Inspired by these works, we generalise the quantum game-based framework to define quantum unforgeability. Our definitions unify different levels of unforgeability as well capturing quantum and classical primitives. In this section we mostly focus on classical primitives. In Section~\ref{sec:qGUnf-quantum}, we show how the framework can naturally cater for quantum primitives as well.

\subsection{Motivations for Generalised Quantum Unforgeability}\label{sec:def-motivation}
The first motivation for a new definition lies within the intuitive meaning of unforgeability definition in classical cryptography and its difference within the quantum world. Existential unforgeability is a security notion that formally describes conditions for a function to be \emph{unpredictable} against an adversary who gets access to some query information of that function. To capture this unpredictability at the highest level, an adversary should not be able to produce the output of the function even for a message of his choice. Although to avoid trivial attacks, this message should be ``new", or not equal to any of the queries in the learning phase. This condition can easily be checked by string equality. On the other hand, when translating to the quantum world and giving the adversary quantum access to the oracle, the ``new challenge" can no longer be intuitively defined as before, since the learning phase queries belong to the Hilbert space that can include any desired superposition of classical messages and consequently information from several classical queries. This means an adversary by querying the superposition of all messages can access the output of the function for all of the classical queries in the superposition. Nevertheless, this information needs to be extracted from the quantum state by procedures that are probabilistic in their quantum nature such as measurements. Also, a measurement in the computational basis leads to the collapse of the state into one of the basis states. Hence due to the nature of the measurement and the no-cloning theorem, no more than one classical output can be extracted from such queries by measurements~\cite{holevo1973bounds}. As mentioned in the preliminaries, the first intuitive quantum definition of unforgeability given by \bz\ aims to eliminate trivial attacks by counting the adversary's queries and forcing them to output $q+1$ ``classical" input-output pairs from any $q$ quantum queries. For several reasons, this approach does not properly deal with quantum queries. As also mentioned in~\cite{eurocrypt-2020-30239}, many quantum algorithms need to consume or destroy the quantum states to extract some useful information, such as symmetry in the oracle. As a result, the definition seems to be more restrictive than necessary on a quantum adversary and potentially miss some meaningful attacks. We can also demonstrate this through an example. Assume the adversary issues the following queries to a deterministic oracle and receives the corresponding outputs:

\begin{equation}
    \begin{split}
        & \ket{\phi_1} = \ket{m_1}, \quad \ket{\phi^{out}_1} = \ket{m^{out}_1}\\
        & \ket{\phi_2} = \sigma\ket{m_1} + \gamma \ket{m_2}, \quad \ket{\phi^{out}_2} = \sigma\ket{m^{out}_1} + \gamma \ket{m^{out}_2}
    \end{split}
\end{equation}

where $m_1, m_2$ are bit-strings of length $n$ and $\ket{m^{out}_i}$ are outputs of the oracle's unitary evaluation. And let's assume the oracle is the deterministic quantum oracle corresponding to a MAC algorithm. In that case, $\ket{m^{out}_i} = U_{MAC}\ket{m_i, y} = \ket{m_i, MAC(m_i) \oplus y}$. Now assume that the adversary is trying to forge message $m_2$. One trivial strategy is that the adversary measures the output superposition query and as long as the overlap with $m_2$ is non-negligible, can forge with non-negligible measurement probability. However, as we show in the proof of Theorem~\ref{th:sel-qCM}, there exists more sophisticated quantum attacks for the adversary where they can forge $m_2$, with non-negligible probability even if one excludes the success probability of such trivial attacks. More specifically, for some specific values of $\sigma$ and $\gamma$, there is a quantum emulation attack that produces the output forgery with probability almost 1, and hence will always have a gap with the trivial measurement attacks. We note that such attacks are fairly general and independent of the structure of the underlying deterministic primitive, and hence it is desired for a primitive to resist such attacks. We show later that having schemes that can resists such attacks is possible using randomisation in an effective way, which prevents the adversary to mount emulation or machine learning types of attacks.

Moreover, the \bz\ definition inherently only captures classical challenges and cannot be used for cases where the challenge can be any generic quantum state on an arbitrary basis. Examples of this case are many quantum primitives that use conjugate bases like quantum money~\cite{wiesner1983conjugate,bozzio2018experimental,kumar2019practically} or general input states like quantum PUFs~\cite{arapinis2019quantum}.

In the \bu\ approach, some of the issues of \bz\ have been resolved as this definition does not count the queries and defines the notion of ``new" message in a more natural way using the blind oracle defined in the Preliminary section (Definition~\ref{def:bu}). This definition however, is also only applicable to classical primitives and morally as we will show later, is equivalent to the case where forgeries have no overlap with the adversary's subspace. This definition leads to interesting results, although we believe that some of the attacks we will present, cannot be captured by \bu\ either. 


Following the literature on quantum information, we capture this difference of queries and challenges by a \emph{distance measure} between the respective quantum states. This allows working with natural properties of quantum states irrespective of any assumption on the primitive that generates their output, as well as smoothly capturing all the possible levels of unforgeability as far as the adversary's capabilities go, and hence closing the existing gap. Moreover, having a definition of unforgeability based on quantum distance measures such as fidelity and trace distance allows us to use the quantum information toolkit more easily and intuitively in proofs. Finally, we believe our general unforgeability provides a quantum counterpart for all the different levels of classical unforgeability presented in Table~\ref{table:cunf}. This will also allow us to show which levels of unforgeability and under what assumptions can be achieved in the quantum world.

\subsection{Framework and Formal definitions}\label{sec:game}
Let $\F = (\ES, \E, \V)$ be a classical or quantum primitive with $\ES$, $\E$, and $\V$ being the setup, evaluation, and verification algorithms respectively. Here we focus on classical primitives, and the generalisation can be found in Section~\ref{sec:qGUnf-quantum}. We specify unforgeability as a game between a challenger $\C$ (that models the honest parties) and an adversary $\A$ (that captures the corrupted parties). The adversary's goal is to \emph{closely approximate} the output of the evaluation algorithm $\E$ on a \emph{new challenge} such that it passes the verification with high probability. As we work in the quantum regime, where the adversary has quantum oracle access to the primitive, we adopt the technique of quantum oracles defined in~\cite{boneh2013quantum,boneh2011random} for formalizing quantum query-response interaction between the adversary and the challenger.


The security game considered here consists of several phases. 
First, $\C$ runs the setup algorithm $\mathcal{S}$ to generate the parameters required throughout the game, and instantiates the evaluation oracle $\eO$, the verification oracle $\vO$, and the message space $\M$. The learning phase defines the threat model (we only consider chosen message attacks here). The challenge phase determines the security notion captured by the game. The formal specification of our quantum games is presented in Figure~\ref{fig:game}. But let us first go informally over each phase of the game.


\paragraph{\bf Setup.} In the setup phase, $\C$ generates the parameters required in subsequent phases by running the setup algorithm of the primitive $\F$ on input $\lambda$ (the security parameter), and the oracles are being instantiated accordingly. 

\paragraph{\bf Learning phase.}
In the learning phase, the adversary interacts with the evaluation oracle. Here we only focus on chosen-message attack (cma) security, yet the game can be easily generalised to weaker models such as random-message queries. $\A$ requires the oracle evaluation on any input state $\rho^{in}_i$. The oracle evaluations are handled by $\C$ who issues the requests on $\rho^{in}_i$ to $\eO$ and forwards to $\A$ the respectively received outputs $\rho^{out}_i$, where $i = \{1,\dots, q=poly(\lambda)\}$. We also note that $\A$ can have an internal register $\sigma$ and we allow for creating entanglement between $\A$'s register and output queries. Specifically for classical primitives, each $\rho^{in}_i = \ket{\phi^{in}_i}\bra{\phi^{in}_i}$ where $\ket{\phi^{in}_i} = \sum_{m_i,y_i}\ket{m_i, y_i}$ is usually a pure state with $m_i$ being the message and $y_i$ the ancillary system. If the queries are being generated by $\A$, in most cases it can be assumed that they have the classical information underlying them, while output queries need to be considered as unknown quantum states to the adversary.

\paragraph{\bf Challenge phase.} In this phase, the challenge that the adversary has to respond to, is chosen in three different ways, each corresponding to a specific level of unforgeability. Similar to classical notions of unforgeability, the strongest notion is \emph{existential unforgeability} denoted by $\qEx$ in the game, and whereby the adversary picks the message for which it will produce a forgery. On the other hand, in \emph{selective unforgeability}, denoted  $\qSel$, the adversary picks the challenge but needs to commit to it before interacting with the oracle. Hence in Figure~\ref{fig:game} the selective challenge phase happens before the learning phase. A further way of weakening the unforgeability notion is when the challenge message is chosen by the challenger $\C$ uniformly at random from the set of all the messages.
In any case a classical message $m \in \M$ is selected (for classical primitives) where $\M$ is the set of classical messages.

 
We impose different conditions on the challenge phases which will be formalized later in the guess phase. These conditions prevent the adversary from mounting trivial attacks.

\paragraph{\bf Guess phase.} In this phase, the adversary submits their forgery $t$ for challenge $m$. They win the game if the output pair $(m,t)$ passes the verification algorithm with high probability. In addition, for $\qSel$, the message $m$ should be the same as the message submitted in the challenge phase. Here the condition in the challenge phase that we have mentioned is formally checked. The quantum challenge phase needs to be carefully specified to avoid capturing trivial attacks such as sending one of the previously learnt states as the challenge of the adversary. As a result, we have introduced the notation $m \notmu \rho^{in}$ denoting  $\mu$-distinguishability from all the input learning phase states. When $m$ is a classical bit-string the same condition should hold for the quantum encoding of $m$ into a computational basis i.e. $\ket{m}$ (or $\ket{m, 0}$). Note that the case $\mu=1$ implies the challenge quantum state has no overlap with any of the quantum states queried in the learning phase. 

We emphasize that we do not specify how the challenger could check whether the adversary meets the condition or not. Implementing this check is not crucial for our security analysis, where we only need to be able to characterise the instances that might present a security violation. The key point to note is that this can effectively be checked given a run against a given adversary. Indeed, then $\rho^{in}_i$ and $\rho^{out}_i$ can be characterised allowing proofs of security and exhibition of attacks.


Regarding the verification oracle, for classical primitives the forgery pair $(m,t)$ is classical and the verification oracle $\vO_f$ runs the classical verification algorithm $\V = \mathtt{Ver}(k,m,t,r)$. Here $r$ is the randomness if the primitive is randomised.


\begin{boxfig}{Formal definition of the quantum games $\GCM{\F}{q, c,\mu}$($\lambda,\A$) where $\lambda$ is the security parameter, $q$ the number of queries issued to the evaluation oracle in the learning phase, $\mu$ the overlap allowed between the challenge and previously queries messages, and $c$ the level of unforgeability.}{fig:game}\underline{\emph{The game $\GCM{\F}{q, c,\mu}$($\lambda,\A$)}\footnote{$c\in\{\qEx, \qSel, \qUni\};\ 0 < \mu\le 1$.}}
\medskip\\
{\bf Setup phase:}
\begin{itemize}
\item $\texttt{param} \leftarrow \ES(\lambda)$
\item The oracles $\eO$ and $\vO$ and the message space $\M$ are instantiated given $\texttt{param}$.
\end{itemize}
{\bf Selective challenge phase:}
\begin{itemize}
\item if $c=\qSel$: $\A$ picks $m \in \M$ and sends it to $\C$.
\end{itemize}
{\bf First learning phase:}
\begin{itemize}
    \item $\A$ issues queries $\rho^{in}_1, \dots, \rho^{in}_q$ (where $q=poly(\lambda)$) to $\C$. To each query $\rho^{in}_i$ the challenger $\C$ queries $\eO$ on $\rho^{in}_i$, and forwards the received respective output $\rho^{out}_i$ to $\A$. The adversary can also have an internal register $\sigma$ which may be entangled with the output queries.
\end{itemize}
{\bf Challenge phase:}
\begin{itemize}
\item if $c=\qEx$: $\A$ picks $m \in \M$ and sends it to $\C$.
\item if $c=\qUni$: $\C$ picks $m \overset{\$}{\leftarrow} \M$ uniformly at random and sends $m$ to $\A$
\end{itemize}
{\bf Second learning phase:}
As the \emph{first learning phase}

{\bf Guess phase:}
\begin{itemize}
        \item if $c=\qEx$ OR $c=\qSel$: continue if $m \notmu \rho^{in}$.\footnote{$\notmu \rho^{in}$ denotes at least $\mu$-distinguishability from all the $\rho^{in}_{i}$. For the classical message $m \in \{0,1\}^n$, the condition should hold for $\ket{m}$, i.e. the quantum encoding of $m$ in computational basis.}
        \item $\A$ generates the forgery $t$, and outputs to $\C$ the pair $(m,t) \leftarrow \A(\{\rho_i^{in},\rho_i^{out}\}^q_{i=1},\sigma)$
        \item $\C$ queries the verification oracle: $b \leftarrow \vO (m,t)$
        \item $\C$ outputs $b$
\end{itemize} 
\end{boxfig}\label{fig:game}

We omit the parameter $q$ when we consider arbitrarily polynomially many queries to the evaluation oracle issued by $\A$. We can now formally define \emph{Existential}, \emph{Selective} and \emph{Universal Unforgeability} of primitives as instances of our game as follows.
\medskip{}
\begin{definition}[\eufm]\label{def:euf-qCMA} A cryptographic primitive $\F$ provides $\mu$-quantum existential unforgeability if the probability of any QPT adversary $\A$ of winning the game $\GCM{\F}{\qEx, \mu}(\lambda, \A)$ is at most negligible in the security parameter,
\begin{equation}
Pr[1\leftarrow \GCM{\F}{\qEx, \mu}(\lambda, \A)] \leq \negl(\lambda).
\end{equation}
\end{definition}

We also define a stronger security notion for existential unforgeability which considers any overlap $\mu$.
\begin{definition}[qGEU]\label{def:euf-without-mu} A cryptographic primitive $\F$ provides quantum existential unforgeability if it provides $\mu$-quantum existential unforgeability for all non-negligible $\mu$.
\end{definition}

\begin{definition}[\sufm]\label{def:sel-qCMA} A cryptographic primitive $\F$ provides $\mu$-quantum selective unforgeability if for any $q$ the advantage of any QPT adversary $\A$ of winning the game $\GCM{\F}{q, \qSel, \mu}(\lambda, \A)$ over $P_{ov}(q)$ is at most negligible in the security parameter,
\begin{equation}
Pr[1\leftarrow \GCM{\F}{q, \qSel, \mu}(\lambda, \A)] \leq P_{ov}(q, \mu) + \negl(\lambda).
\end{equation}
We call $P_{ov}(q, \mu)$ the ``overlap probability" describing the probability for trivial attacks via the overlap allowed by the parameter $\mu$.\footnote{Note that by definition $\A$ can always achieve $P_{ov}(q, \mu)$, hence $\A$'s winning probability is always lower-bounded by this value.}
\end{definition}

The need for allowing an adversary to win with probability $P_{ov}(q, \mu)$ is similar to the classical definitions where the adversary is required to boost the success probability from some trivial value such as random guess. Here, by allowing the adversary to create an overlap between the learning phase space and challenge, some unavoidable attacks exist which are independent of the actual primitive at hand, and as such needs to be extracted to characterise the gap between trivial and effective adversaries and hence precisely define a proper distance-based definition.

\begin{definition}[$P_{ov}$ for classical primitives]\label{def:pov-standard-orc}
For all $q$ for all $\mu$
For a classical primitive where the evaluation oracle is a standard oracle $\eO_f$, for any overlap $\mu$ the overlap probability for $q$-query games is equal to $P_{ov}(q, \mu) = 1 - \mu^q$.
\end{definition}

A similar notion for quantum primitive is defined in Section~\ref{sec:qGUnf-quantum}. When selective unforgeability holds for any overlap $\mu$ we say that the primitive is quantum selective unforgeable.

\begin{definition}[qGSU]\label{def:suf-without-mu} A cryptographic primitive $\F$ provides quantum selective unforgeability if it provides $\mu$-quantum selective unforgeability for all non-negligible $\mu$.
\end{definition}

\begin{definition}[\uuf]\label{def:uni-qCMA}
 A cryptographic primitive $\F$ is quantum universally unforgeable if the probability of any QPT adversary $\A$ of winning the game $\GCM{\F}{\qUni}(\lambda, \A)$ is negligible in the security parameter $\lambda$,
\begin{equation}
Pr[1\leftarrow \GCM{\F}{\qUni}(\lambda, \A)] \leq \negl(\lambda).
\end{equation}
\end{definition}

Note that the $\mu$-distinguishability condition is not necessary for Universal Unforgeability, as the challenge is chosen by the challenger, independently of the adversary's queries and the probability is taken over all the choices of the challenge state hence it is no longer meaningful to count for possible overlaps as trivial attacks.

\subsection{Hierarchy and Relationship to other definitions}\label{sec:other-defs}
To demonstrate the generality of our framework and the full context that our results will apply to, we investigate how our definitions formally relate to the previously proposed ones. In particular, we show that \euf\ is equivalent to BU, and hence implies the \bz\ definition (we draw the latter from~\cite{alagic2018unforgeable}). We further formally establish the hierarchy between the different notions of Generalised Unforgeability. In Figure~\ref{fig:hierarchy}, we map out the results presented in this section.

\begin{figure}[h!]
   \centering
     \includegraphics[width=1
     \textwidth]{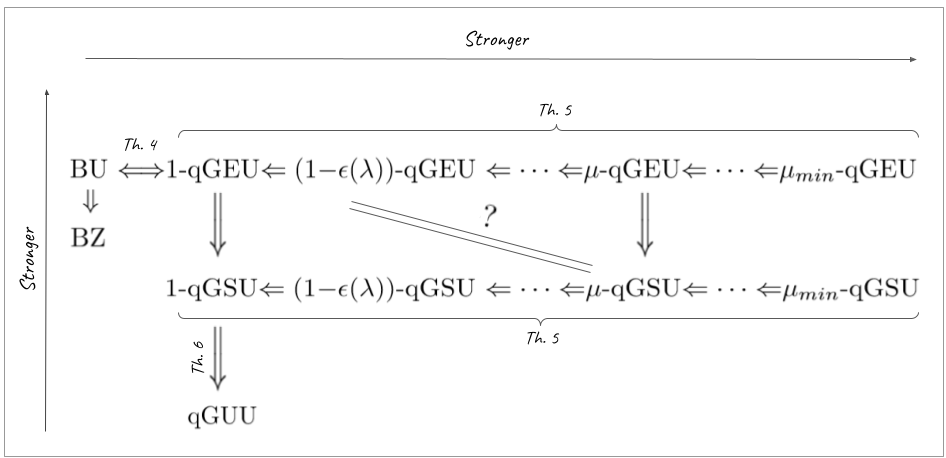}
     \caption{Relationship between different definitions of \emph{Generalised Quantum Unforgeability}, \bu\ and \bz. From down to up and left to right the definitions become stronger. $\epsilon(\lambda)$ is a negligible function in the security parameter and $\mu_{min} = \nonnegl(\lambda)$ is the smallest valid degree for $\mu$. It is unknown whether \sufm\ with smaller $\mu$, implies \eufm\ with bigger $\mu$. }\label{fig:hierarchy}
 \end{figure}

\begin{theorem}\label{th:1guf-bu}
\euf\ is equivalent to BU.
\end{theorem}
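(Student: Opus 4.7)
The plan is to establish both directions of the equivalence separately, invoking Theorem~\ref{th:bu} for one direction and constructing an \euf\ adversary from a BU adversary for the other.

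For the direction "BU-secure $\Rightarrow$ \euf-secure" (contrapositively, "\euf-break $\Rightarrow$ BU-break"), I would take a QPT adversary $\A_1$ winning $\GCM{\F}{\qEx, 1}(\lambda, \A_1)$ with non-negligible probability. The $\mu = 1$ distinguishability condition in the guess phase forces the forgery message $m$ to satisfy $\bra{m}\rho^{in}_i\ket{m} = 0$ for every input query $\rho^{in}_i$. Unfolding the definition of $\mathrm{supp}(\A_1)$ recalled in the footnote of Theorem~\ref{th:bu}, this orthogonality yields $m \notin \mathrm{supp}(\A_1)$. Setting $R = \{m\}$, the hypotheses of Theorem~\ref{th:bu} are met and I immediately conclude that $\F$ is not BU-secure.

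For the direction "\euf-secure $\Rightarrow$ BU-secure" (contrapositively, "BU-break $\Rightarrow$ \euf-break"), I would take a QPT adversary $\A_2$ winning the blind-forgery experiment with non-negligible probability and construct an \euf\ adversary $\A_2'$ as follows: $\A_2'$ samples a blinding set $B_\epsilon$ internally, simulates the blinded oracle $B_\epsilon \mathtt{Mac}_k$ using quantum access to $\mathtt{Mac}_k$, runs $\A_2$ on the simulated oracle, and outputs the resulting pair $(m, t)$. The key observation is that the action of the blinded oracle on inputs in $B_\epsilon$ (XOR-ing $\perp$ into the response register) is independent of $\mathtt{Mac}_k$, and therefore that branch can be implemented without querying the real oracle at all: a coherent conditional routing of the input register (a controlled-swap against a fresh ancilla, conditioned on the classical bit $[x \in B_\epsilon]$) ensures that every query by $\A_2'$ to the real $\mathtt{Mac}_k$ oracle is supported on $X \setminus B_\epsilon$. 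Since $\A_2$ wins only when $m \in B_\epsilon$, with non-negligible probability over the internal choice of $B_\epsilon$ both events co-occur: $(m,t)$ is a valid forgery and $m \notin \mathrm{supp}(\A_2')$. The latter gives the 1-distinguishability condition required by the \euf\ game, so $\A_2'$ wins $\GCM{\F}{\qEx, 1}(\lambda, \A_2')$ with non-negligible probability.

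The main obstacle lies in the second direction: a naive simulation would let $\A_2'$ inherit $\A_2$'s queries verbatim, possibly placing amplitude on $m$ and violating the 1-distinguishability condition. The subtle technical point is to exploit the $\mathtt{Mac}_k$-independence of the $\perp$-branch of the blinded oracle so that the simulation can be performed with a coherent conditional routing that keeps the real oracle's inputs entirely within $X \setminus B_\epsilon$, which is exactly what allows Theorem~\ref{th:bu}'s support-based characterisation to be reused in the other direction.
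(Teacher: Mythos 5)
Your overall architecture coincides with the paper's: both directions are handled by mutual reduction, and the direction ``a \euf-forger yields a \bu-forger'' is in both cases an application of Theorem~\ref{th:bu}, using the orthogonality of the forgery message to the learning-phase queries to place it in a set disjoint from the adversary's support. Where you genuinely diverge is the converse direction. The paper takes the \bu-adversary's queries to the \emph{blinded} oracle, argues that the blinding makes the output queries orthogonal to $\ket{m^*,t^*}$, and then appeals to unitarity to assert the existence of an ``equivalent'' adversary against the unblinded oracle whose \emph{input} queries are orthogonal to $\ket{m^*,0}$; it never spells out how that adversary actually interacts with the real oracle. You instead have $\A_2'$ sample the blinding set internally and \emph{simulate} the blinded oracle, using a coherent controlled routing so that amplitude supported on $\Be$ never reaches the real oracle, exploiting the fact that the $\perp$-branch is independent of $\mathtt{Mac}_k$. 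This is more constructive and, in my view, more robust: it directly guarantees the $\mu=1$ distinguishability condition on the states actually submitted to the evaluation oracle, which is exactly the point the paper's unitarity step leaves implicit. (Do note that the routed call still acts on the whole register, so the $b=1$ branch must either write the dummy input's tag into a discardable ancilla or uncompute it; these details are standard but should be stated.)

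Two small gaps remain in your first direction. First, Theorem~\ref{th:bu} defines $supp(\A_1)$ as a union over \emph{all} functions $f$ consistent with the oracle interface, not just the instance actually sampled in the game; the \euf\ condition only constrains the queries made against that one instance, so $m \notin supp(\A_1)$ does not follow immediately from $\bra{m}\rho^{in}_i\ket{m}=0$. The paper explicitly patches this by modifying the reduction to suppress, for other functions, any query whose support would contain $m^*$; your argument needs the same (or an appeal to the per-instance reading of the support). Second, $R=\{m\}$ is run-dependent because the existential adversary chooses $m$ adaptively, whereas Theorem~\ref{th:bu} is stated for a fixed set $R$; a short averaging step fixing a single $m^*$ that is output with non-negligible probability (or taking $R$ to be the large fixed complement of the queried basis, as the paper first does) is needed to close this.
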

\begin{proof}
We show that \euf\ implies BU and vice versa. First, we show that if a scheme is not BU unforgeable against a QPT adversary then it is not \euf\ unforgeable either. Let $\A$ be a QPT adversary who forges a scheme $\F = (\ES, \E, \V)$ with message set $\M = \{0,1\}^n$ in the BU definition. Following the formal definition of BU provided in Definition~\ref{def:bu}, $\A$ selects an $\epsilon$ for which the blinded region $\Be$ is created by selecting each $m \in \M$ at random with an $\epsilon$-related probability. Then there exists a non-empty set $\Be$ for which $\A$ interacts with the blinded oracle associated with it and outputs a pair $(m^*,t^*)$ where $t^* = f(m^*)$ (where $f$ is the classical function of the evaluation $\E$, for instance a \textit{MAC}(.)) such that $\V = Ver_k(m^*,t^*) = acc$, and also the $m^* \in \Be$ with non-negligible probability in $\lambda = poly(n)$. By the definition of the blinding oracle, $\A$ receives a $\ket{\perp}$ for any of the computational basis that are in the blinded region. As a result, we can write $\A$'s input and output queries as follows:
\begin{equation*}
\begin{split}
    & \ket{\phi_i} = \sum_{m_i \not\in \Be} \alpha_i \ket{m_i, y_i} + \sum_{\overline{m}_j \in \Be} \beta_j \ket{\overline{m}_j, y_j} \\
    & \ket{\phi^{out}_i} = \sum_{m_i \not\in \Be} \alpha_i \ket{m_i, y_i\oplus f(m_i)} + \sum_{\overline{m}_j \in \Be} \beta_j \ket{\overline{m}_j, y_j \oplus \perp}
\end{split}
\end{equation*}
Now assuming the quantum encoding of the challenge $m^* \in \Be$ to be $\ket{m^*, 0}$ and the tag/output to be $\ket{m^*, t^*} = \ket{m^*, f(m^*)}$, we can see that $\mbraket{m^*, t^*}{\phi^{out}_i} = 0$ since $m^*$ will have no overlap with the first part of the superposition, and also to the second part due to the blinding. Now, we show that there exists a unitary non-blinding oracle that generates equivalent queries for this scenario. Let $\Ue$ be the unitary evaluation oracle such that $\ket{m^*, t^*} = \Ue\ket{m^*, 0}$, and similarly for all the queries. Due to the unitarity, we have that $\mbraket{m^*, t^*}{\phi^{out}_i} = \bra{m^*, 0}\Ue^{\dagger}\Ue\ket{\phi_i} = 0$. Thus there will also exist an adversary $\A'$ with equivalent queries except that the target forgery will be always orthogonal to the selected challenge. Hence for this adversary, the condition of \euf\ is satisfied. Then by calling $\A$, the adversary $\A'$ can generate an output state $\ket{m^*, t^*}$ that passes the test algorithm with also non-negligible probability. Hence we have shown that \euf\ implies \bu. 


To prove the other way of implication we need to show whenever there is an attack on \euf, then there will also be an attack on \bu\ definition and hence the scheme is also \bu\ insecure. This time we consider $\A$ to be a QPT adversary who wins \euf\ by selecting a challenge state $\ket{m^*, y}$ where the $m*$ is the classical challenge and $y$ is the ancillary register, and querying a set of states $\{\ket{\phi_i}\}^{q}_{i=1}$ s.t. $\forall \ket{\phi_i}: \mbraket{m^*}{\phi_i} = 0$ and $q=poly(n)$. Then by definition, $\A$ can output a $\ket{m^*, t^*} = \Ue\ket{m^*, y}$ that passes the test algorithm with non-negligible probability. Now an adversary $\A'$ calls $\A$ to win the \bu\ with non-negligible probability.

At this stage we recall the Theorem~\ref{th:bu} and we show that an $\A'$ satisfies the conditions of this theorem. Let us write the learning phase queries in the computational basis as follows:
\begin{equation}
    \ket{\phi^{out}_i} = \sum^{d}_{j=1} \alpha_{i,j} \ket{b_j}
\end{equation}
where $\{\ket{b_j}\}^{d}_{j=1}$ is the set of computational bases spanning the effective learning phase subspace. Now we create a non-empty set $R$ by selecting each $x \in \M$ as follows
\begin{equation}
    R = \{x \in \M : \ket{x} \neq \ket{b_j}_X \}
\end{equation}
Where $\ket{b_j}_X$ denotes the input register of the full basis. 
Note that $R$ will always be non-empty as the basis set will only cover a polynomial-size subspace of the whole Hilbert space of messages. Moreover, since $\A'$ includes $\A$ and $m^*$ has no overlap with any of the input queries, it will also have no overlap with the input register of the output queries. As a result, $R$ has at least one element. Hence the set of all input elements that have non-zero overlap with the queries and the elements included in $R$ have no intersection. This shows that $supp(\A)\cap R = \emptyset$ if the support is defined for the oracle $\Ora_f$ for a fixed randomly picked classical function $f$ (or key $k$) during the game. Thus we also have $supp(\A')\cap R = \emptyset$ and $m^* \in R$. Nevertheless, in~\cite{alagic2018unforgeable} has been mentioned that the support is taken to be the union of the support of all the queries over the choice of the function. In this case we can also redefine our set, and the queries of $\A'$ such that it satisfies the condition of the theorem respectively. We take the set $R'$ to only include one element which is the forgery message $m^*$. As in the \euf\ the function (or the key for the keyed functions) is selected at random in the setup phase, the success probability of $\A$ is inherently taken over the choice of the function. Then $\A'$ queries all the queries of $\A$ for any randomly selected $f$ during the experiment. For any other functions, excludes any queries for which the support will include $m^*$. Now we can see that $\A'$ can output a valid pair $(m^*, t^*)$ by measuring $\ket{m^*, t^*}$ in the computational basis with probability 1 while $supp(\A')\cap R' = \emptyset$ and $m^* \in R'$. Hence $\A'$ breaks the \bu\ unforgeability and we have shown that \bu\ implies \euf. This mutual implication shows that these definitions are equivalent and the proof is complete.\qed
\end{proof}

From the above theorem and the equivalence of \bu\ and \bz\ against classical adversaries we derive the following corollary.
\begin{corollary}
$\euf \equiv \bu \equiv \bz$ against classical adversaries.
\end{corollary}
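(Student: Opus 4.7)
The plan is to chain two equivalences that are each essentially already in hand. First, Theorem~\ref{th:1guf-bu} establishes $\euf \equiv \bu$ for arbitrary QPT adversaries; this equivalence survives restriction to classical adversaries provided the reductions in its proof do not intrinsically rely on superposition. A direct inspection confirms this: the forward direction builds a blinding set $\Be$ and exploits orthogonality of the quantum encoding $\ket{m^*,0}$ to the unitarily-evolved learning queries---when those queries are purely on computational basis states, the constructed wrapper adversary also queries purely classically; the converse direction analogously lifts an \euf\ adversary to a \bu\ adversary in a way that preserves the classical nature of the queries, since it only instruments the oracle interface and records which messages fall into the blinded region.

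Second, I would invoke the equivalence $\bu \equiv \bz$ against classical adversaries, which is already noted in~\cite{eurocrypt-2020-30239}. The standard reduction proceeds as follows: if a classical \bz\ adversary outputs $q+1$ valid message--tag pairs from $q$ classical queries, then by pigeonhole at least one forged pair lies on a message $m^*$ never queried in the learning phase, and tuning the blinding probability $\epsilon$ appropriately makes $m^* \in \Be$ hold with non-negligible probability, producing a \bu\ attack. Conversely, a classical \bu\ adversary queries the blinded oracle on at most $q$ classical messages; the queries that happen to land inside $\Be$ return $\perp$ and provide no useful information, so the adversary effectively obtains at most $q$ genuine input--output pairs which, together with the blinded forgery $(m^*, t^*)$, exceed the query count and thus constitute a \bz\ forgery.

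Combining these two equivalences transitively yields $\euf \equiv \bu \equiv \bz$ against classical adversaries, which is the desired corollary. The only place requiring care is the first step, where one must verify that the reductions used in the proof of Theorem~\ref{th:1guf-bu} remain valid when specialised to classical $\A$; but since those reductions act at the oracle boundary rather than exploiting any genuinely quantum feature of the adversary, this restriction is immediate and the corollary follows without any additional technical work.
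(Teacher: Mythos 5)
Your proposal is correct and follows essentially the same route as the paper: the corollary is obtained there by combining Theorem~\ref{th:1guf-bu} with the already-known equivalence of \bu\ and \bz\ against classical adversaries from~\cite{eurocrypt-2020-30239}. Your additional check that the reductions in the proof of Theorem~\ref{th:1guf-bu} preserve the classicality of the adversary is a worthwhile point of rigour that the paper leaves implicit, but it does not alter the argument.
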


Next, we establish the relation between different instances of our game-based definition. First, we emphasise that as expected for both existential and selective unforgeability, the definitions become stronger when decreasing the $\mu$ parameter from 1 and hence \eufm\ implies \euf. 

\begin{theorem}\label{th:mu-smaller-stronger}
If $\mu_1 \leq \mu_2$ then $\mu_1$-qGEU ($\mu_1$-qGSU) implies $\mu_2$-qGEU ($\mu_2$-qGSU)
\end{theorem}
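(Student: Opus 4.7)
The plan is to prove both implications by contradiction via a single reduction: every QPT adversary $\A$ winning the $\mu_2$-game is already winning the $\mu_1$-game. The only fact needed is monotonicity of Uhlmann fidelity in the distinguishability parameter: since $\mu_1\le\mu_2$ yields $1-\mu_2\le 1-\mu_1$, any pair of states $\rho,\sigma$ with $F(\rho,\sigma)\le 1-\mu_2$ also satisfies $F(\rho,\sigma)\le 1-\mu_1$. By Definition~\ref{def:dist}, $\mu_2$-distinguishability therefore implies $\mu_1$-distinguishability, and every challenge $m$ satisfying $m\not\in_{\mu_2}\rho^{in}$ also satisfies $m\not\in_{\mu_1}\rho^{in}$. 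Because $\GCM{\F}{c,\mu_2}$ and $\GCM{\F}{c,\mu_1}$ differ only through the guess-phase distinguishability check, every accepting transcript of the $\mu_2$-game is an accepting transcript of the $\mu_1$-game for $c\in\{\qEx,\qSel\}$.

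For the \eufm\ case (Definition~\ref{def:euf-qCMA}) this observation is enough on its own. I would assume towards contradiction a QPT adversary $\A$ with $\Pr[1\leftarrow\GCM{\F}{\qEx,\mu_2}(\lambda,\A)] = \nonnegl(\lambda)$, run $\A$ verbatim inside $\GCM{\F}{\qEx,\mu_1}(\lambda,\A)$, and note that the winning probability is preserved because every previously accepted run is still accepted. This contradicts $\mu_1$-qGEU.

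For the \sufm\ case (Definition~\ref{def:sel-qCMA}) the same reduction produces a $\mu_1$-adversary with winning probability $p = P_{ov}(q,\mu_2) + \nonnegl(\lambda)$. To contradict $\mu_1$-qGSU I must beat the larger baseline $P_{ov}(q,\mu_1)\ge P_{ov}(q,\mu_2)$, so I would augment $\A$ with learning-phase queries that exploit the wider overlap budget $1-\mu_1$ now permitted in the $\mu_1$-game. Arguing that the trivial-overlap contribution and the non-trivial quantum contribution of $\A$ decompose independently, the non-trivial gap $\epsilon = p - P_{ov}(q,\mu_2)$ survives the augmentation and lifts the winning probability to at least $P_{ov}(q,\mu_1) + \epsilon - \negl(\lambda)$. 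For the standard-oracle case, Definition~\ref{def:pov-standard-orc} yields the explicit formula $P_{ov}(q,\mu) = 1-\mu^q$, making this hybrid clean: partition the query budget into a trivial block whose maximal-overlap queries realise $P_{ov}(q,\mu_1)$ and an $\A$-block whose non-trivial contribution preserves $\epsilon$.

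The main obstacle will be this last step: one must decompose $\A$'s attack into its trivial-overlap and non-trivial quantum parts, then re-assemble them with a freshly inflated overlap budget without the two parts cannibalising each other. The product structure of $P_{ov}(q,\mu) = 1-\mu^q$ makes this transparent in the standard-oracle setting, but extending to general oracles -- in particular to the quantum-primitive analogue developed in Section~\ref{sec:qGUnf-quantum} -- will likely require an abstract monotonicity assumption on $P_{ov}(q,\mu)$ in $\mu$ in order to push the same argument through.
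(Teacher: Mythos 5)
Your treatment of the existential case is exactly the paper's: the paper likewise runs the $\mu_2$-adversary unchanged inside the $\mu_1$-game and observes that, by Definition~\ref{def:dist}, $F(\rho,\sigma)\le 1-\mu_2\le 1-\mu_1$ means every challenge passing the $\mu_2$-check also passes the $\mu_1$-check, so every accepting run carries over. That half is correct and complete.

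The selective case is where your proposal has a genuine gap, and it is exactly the step you yourself flag as the main obstacle. Two concrete problems. First, $P_{ov}$ depends on the total query count: if $\A$ makes $q$ queries and you add $k$ maximal-overlap queries, the baseline you must beat becomes $P_{ov}(q+k,\mu_1)=1-\mu_1^{q+k}$, not $1-\mu_1^{q}$. Second, even granting the independence you only "argue" for (the trivial block fails with probability $\mu_1^{k}$, the $\A$-block fails with probability $\mu_2^{q}-\epsilon$, and the failures multiply), the advantage of the augmented adversary over the new baseline is
\[
\bigl[1-\mu_1^{k}(\mu_2^{q}-\epsilon)\bigr]-\bigl[1-\mu_1^{q+k}\bigr]
=\mu_1^{k}\bigl(\epsilon-(\mu_2^{q}-\mu_1^{q})\bigr),
\]
which is negative whenever the constant $\mu_2^{q}-\mu_1^{q}$ exceeds the merely non-negligible $\epsilon$. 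So the hybrid does not close the gap even under its own assumptions; the difficulty you identified is real, but the proposed repair does not resolve it. For comparison, the paper does not augment the adversary at all: it keeps $\A$ fixed and reduces the claim to $\Pr[1\leftarrow\GCM{\F}{\qSel,\mu_2}(\lambda,\A)]-\Pr[1\leftarrow\GCM{\F}{\qSel,\mu_1}(\lambda,\A)]\ge P_{ov}(\mu_2)-P_{ov}(\mu_1)=\mu_1^{q}-\mu_2^{q}\le 0$, asserting that the left-hand side is non-negative. You should test that step against your own observation: for a fixed $\A$ whose challenge already satisfies the stricter $\mu_2$-condition the two experiments coincide, so the left-hand side is $0$ and the paper's chain yields $\mathrm{adv}_{\mu_2}(\A)\ge\mathrm{adv}_{\mu_1}(\A)$, i.e.\ an upper bound on the $\mu_1$-advantage --- precisely the direction your ``larger baseline'' worry predicts. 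In short: your qGEU argument matches the paper; your qGSU argument is incomplete, and the accounting above shows the augmentation cannot be made to work without a substantially different idea.
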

\begin{proof}
The proof is straightforward for qGEU. Let $\A$ win against $\mu_2$-qGEU, Let $\A'$ be the adversary who wants to attack $\mu_1$-qGEU. $\A'$ queries the same learning phase queries as $\A$ and then calls $\A$. Since $\mu_1 \leq \mu_2$ any two states that are $\mu_2$-distinguishable are also $\mu_1$-distinguishable, then the challenge of $\A$ will necessarily satisfy the condition for $\mu_1$-qGEU. Then $\A'$ can also win the game with non-negligible probability. For $\mu$-qGSU the distinguishability argument is similar, although there is also the $P_{ov}$ probability that is function of $\mu$. Thus we need to show the following:
\begin{equation*}
    Pr[1\leftarrow \GCM{\F}{\qSel, \mu_2}(\lambda, \A)] - P_{ov}(\mu_2) \geq  Pr[1\leftarrow \GCM{\F}{\qSel, \mu_1}(\lambda, \A)] - P_{ov}(\mu_1)
\end{equation*}
Which is also equivalent to showing the following statement:
\begin{equation*}
    Pr[1\leftarrow \GCM{\F}{\qSel, \mu_2}(\lambda, \A)] - Pr[1\leftarrow \GCM{\F}{\qSel, \mu_1}(\lambda, \A)] \geq P_{ov}(\mu_2) - P_{ov}(\mu_1)
\end{equation*}
The LHS of the inequality is always positive due to the above distinguishability argument, and the $P_{ov}$ is always a non-increasing function of $\mu$ for both types of primitives. Take the $P_{ov}$ for the classical primitives for instance, which is equal to $1-\mu^q$. Therefore, the RHS of the inequality will be equal to $\mu_1^q - \mu_2^q$ which is always a non-positive value as $\mu_1 \leq \mu_2$. Then the above inequality holds and the theorem has been proved. \qed 
\end{proof}

Furthermore, it is easy to observe that for any given $\mu$, \eufm\ implies \sufm. This is due to the fact that if the adversary wins the game by committing to their favourite message before the learning phase, they will necessarily win when picking the message after the learning phase.

Universal unforgeability is also intuitively weaker than existential unforgeability similarly to their classical counterpart. This holds, despite the winning condition for these two instances being very different. In universal unforgeability, the adversary wins only if they win the game on average over all the different randomly picked messages. Since in our case, we are only interested in QPT adversaries, and as the universal definition is not parameterised by $\mu$, it is not obvious that $\uuf$ is weaker than \sufm. In the following theorem, we formally establish the implication. We prove the theorem for \suf\ which in turn implies \sufm\ for any $\mu$.

\begin{theorem}\label{th:suf-uuf}
\sufm\ implies \uuf.
\end{theorem}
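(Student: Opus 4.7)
The natural approach is a reduction by contrapositive. Suppose a QPT adversary $\A$ wins $\GCM{\F}{\qUni}(\lambda, \A)$ with non-negligible probability $p(\lambda)$; I will construct a QPT adversary $\A'$ that violates \suf\ (i.e., $1$-qGSU). By Theorem~\ref{th:mu-smaller-stronger}, $\mu$-qGSU implies $\mu'$-qGSU whenever $\mu \le \mu'$, so by contrapositive a failure of \suf\ propagates to a failure of every \sufm; the general statement \sufm\ implies \uuf\ then follows from a single reduction to the weakest parameter $\mu=1$.

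The reduction is as follows. Before any interaction with the challenger, $\A'$ samples a classical message $m^{*} \overset{\$}{\leftarrow} \M$ uniformly at random and commits $m^{*}$ in the selective challenge phase. During both learning phases, $\A'$ transparently simulates $\A$'s view of the universal game: it forwards $\A$'s queries to the real evaluation oracle and returns the responses to $\A$, and at the universal challenge step it hands $m^{*}$ to $\A$ in place of the challenger's uniform draw. When $\A$ outputs $(m^{*}, t)$, $\A'$ submits the same pair to the verification oracle.

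For the analysis, the joint distribution over $m^{*}$ and the oracle transcripts in the simulation matches exactly that of the universal game, so the verification oracle accepts $(m^{*}, t)$ with probability $p(\lambda)$. It then remains to lower-bound the probability that the $1$-distinguishability condition $m^{*} \notmu \rho^{in}$ holds. Since $m^{*}$ is uniform over the exponentially-large classical message space $\M$ and independent of $\A$'s first-phase queries, for every query the expected fidelity satisfies $\mathbb{E}_{m^{*}}[F(\ket{m^{*}}, \rho^{in}_i)] \le 1/|\M|$; a Markov bound combined with a union bound over the $q=\mathrm{poly}(\lambda)$ queries then yields $\Pr[\neg(m^{*} \notmu \rho^{in})] \le \negl(\lambda)$. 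Consequently $\Pr[\A' \text{ wins}] \ge p(\lambda) - \negl(\lambda)$, which exceeds $P_{ov}(q,1) = 0$ by a non-negligible margin, contradicting \suf.

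The hard part will be controlling the $\mu$-distinguishability condition for the \emph{second}-learning-phase queries, since these are adaptive in $m^{*}$ and an adversary could in principle just query $\ket{m^{*}}$ itself. This case is exactly the trivial ``query the challenge, read off the tag'' strategy, which is the quantum analogue of the classical fresh-message restriction implicit in universal unforgeability; equivalently, one can replace $\mu=1$ by any $\mu = 1-\nonnegl(\lambda)$, for which the Markov bound above remains effective and the extra loss $P_{ov}(q,\mu) \approx q(1-\mu)$ stays negligible, and then apply Theorem~\ref{th:mu-smaller-stronger} to descend to every valid $\mu$.
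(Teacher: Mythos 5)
Your reduction is, at its core, the same contrapositive argument the paper gives: assume a universal forger $\A$ with non-negligible success probability, have the selective adversary commit to a uniformly random message, simulate $\A$'s view of the universal game, and argue that the committed message satisfies the distinguishability condition with overwhelming probability. Your version is in fact tidier than the paper's: you replace their detour through the set $\M'$ of zero-overlap challenges and the subsequent partition of $\M'$ into polynomially many blocks (used only to argue that a randomly chosen block contains a good message) with the direct observation that sampling the challenge yourself reproduces the universal game's distribution exactly, so the acceptance probability is $p(\lambda)$ on the nose.

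The genuine gap is in the distinguishability step. The theorem is needed (and the paper proves it) for \suf, i.e.\ $\mu=1$, where the condition $m^* \notmu \rho^{in}$ demands \emph{exact} orthogonality, $F(\ket{m^*},\rho^{in}_i)=0$. Markov's inequality bounds $\Pr[F \ge t]$ for $t>0$; it says nothing about $\Pr[F>0]$. A universal adversary whose single first-phase query is the uniform superposition $\frac{1}{\sqrt{|\M|}}\sum_m\ket{m,0}$ has $F(\ket{m^*},\rho^{in}_1)=1/|\M|>0$ for \emph{every} $m^*$, so your bad event occurs with probability $1$, not $\negl(\lambda)$, and $\A'$ never satisfies the guess-phase check of the \suf\ game. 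Your fallback of targeting $\mu = 1-\nonnegl(\lambda)$, where Markov does apply and $P_{ov}(q,\mu)\approx q(1-\mu)$ is indeed negligible, does not repair this: Theorem~\ref{th:mu-smaller-stronger} transfers an attack on $\mu'$-qGSU only downward to $\mu \le \mu'$, and $\mu=1$ lies strictly above any $\mu'<1$, so breaking $(1-\epsilon)$-qGSU for negligible $\epsilon$ does not break \suf\ --- which is precisely the instance the paper later relies on in Theorem~\ref{th:qprf-uuf}. The paper's proof confronts this point head-on by restricting to the set $\M'$ of challenges with \emph{zero} overlap and asserting $|\M'|/|\M|\approx 1$ on the grounds that the queries span only a $\mathrm{poly}(\lambda)$-dimensional subspace; your expected-fidelity bound is not a substitute for that claim. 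Separately, your worry about second-learning-phase queries is a non-issue in this theorem: the paper treats post-challenge queries as a strictly stronger attack model (aua), for which \uuf\ provably fails (Supplementary Materials~\ref{ap:uni-adaptive}), so the statement is to be read in the non-adaptive model and no choice of $\mu$ is meant to absorb an adversary that queries $\ket{m^*}$ after receiving it.
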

\begin{proof}[sketch]
The full proof can be found in Supplementary Materials~\ref{app:proof-suf-uuf}. Here we present the key ideas of the proof. We show if there exists an adversary $\A$ that wins the \uuf\ game then \euf\ (\suf) also breaks and the implication to \eufm\ (\sufm) is straightforward. First, we show that the distinguishability condition for $\mu=1$ can be satisfied. Thus we write the winning probability of $\A$ as the combination of probabilities of winning with respect to the selected message being orthogonal to the learning phase or not:
\begin{equation}
\begin{split}
    \underset{x \in \M}{Pr}[1\leftarrow \A(x)] & = \underset{x \in \M'}{Pr}[1\leftarrow \A(x)]Pr[x \in \M'] + \underset{x \not\in \M'}{Pr}[1\leftarrow \A(x)]Pr[x \not\in \M'] \\
    & = \nonnegl(\lambda)
\end{split}
\end{equation}
where $\M'$ is the set of all the challenges with no overlap with the learning-phase states. By calculating this probability we show that $\underset{x \in \M'}{Pr}[1\leftarrow \A(x)]$ is also non-negligible. In the second part of the proof we show that as long as the previous average probability holds, we can always construct an efficient adversary $\A'$ that uses $\A$ to win the selective unforgeability game. We prove this by partitioning the space of $\M'$ into equal polynomial-size subspaces and show that if the average probability over $\M'$ is non-negligible, then $\A'$ can always win the \euf\ game by randomly picking one of the subsets to pick the message from, as there will exist at least one message that allows $\A$ to win the game with non-negligible probability. As a result, $\A'$ wins the game with non-negligible probability. \qed
\end{proof}
\section{Possibility and Impossibility results}
\subsection{Generalised Existentially Unforgeable Schemes}
\label{sec:ex-unf}
In this section, we turn our attention to \euf. First, we show a general and intuitive, yet important no-go result for \eufm\ that is, no classical primitive (deterministic nor randomized) can satisfy this level of unforgeability for any $\mu \neq 1$. This result states that \euf, which is equivalent to \bu\ as shown in the previous section, is the strongest notion of existential unforgeability that any classical primitive can achieve.


\begin{theorem}[No classical primitive $\F$ is \eufm~ secure]\label{th:ex-qCM} For any classical primitive $\F$ and for any $\mu$ such that $\mu \leq 1 - \frac{1}{2^n}$, there exists a QPT adversary $\A$ such that
\begin{equation}
    Pr[1\leftarrow \GCM{\F}{\qEx, \mu}(\lambda, \A)] = \nonnegl(\lambda).
\end{equation}
\end{theorem}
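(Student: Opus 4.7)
The plan is to construct a one-query uniform-superposition-and-measure attack. Let $\M = \{0,1\}^n$ be the classical message space. In the first learning phase, the adversary $\A$ issues the single query
\[
\ket{\phi^{in}} = \frac{1}{\sqrt{2^n}}\sum_{m\in\M}\ket{m,0}.
\]
For a deterministic primitive the oracle returns $\ket{\phi^{out}} = \frac{1}{\sqrt{2^n}}\sum_m\ket{m, f(m)}$; for a randomised primitive, conditioned on the fresh internal randomness $r$ that $\reO_f$ samples for this query, $\A$ receives $\frac{1}{\sqrt{2^n}}\sum_m\ket{m, f(m;r)}$, where by the modelling convention of Section~\ref{sec:prelim-oracles} any randomness needed by $\V$ to re-verify is encoded in the ancilla register $y$. $\A$ then measures $\ket{\phi^{out}}$ in the computational basis to obtain a classical pair $(m^*, t^*)$; announces $m^*$ in the challenge phase; skips the second learning phase; and outputs $(m^*, t^*)$ as its forgery.

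Next I would check that the $\mu$-distinguishability gate-keeping condition of the guess phase is met. The fidelity between the computational-basis encoding $\ket{m^*,0}$ of the forgery message and the unique learning-phase input state $\ket{\phi^{in}}$ is
\[
F\bigl(\ket{m^*,0}\bra{m^*,0},\ket{\phi^{in}}\bra{\phi^{in}}\bigr) = |\mbraket{m^*,0}{\phi^{in}}|^2 = \tfrac{1}{2^n}.
\]
By Definition~\ref{def:dist} the condition $m^* \notmu \rho^{in}$ is therefore satisfied iff $1/2^n \le 1-\mu$, i.e.\ $\mu \le 1 - 1/2^n$, which is exactly the hypothesis of the theorem. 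Hence every measurement outcome passes the gate-keeping check and the game proceeds to verification.

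Finally, I would observe that $(m^*, t^*)$ is accepted by $\vO$ with probability $1$: the measured pair is by construction what an honest evaluation of $\E$ produces on $m^*$ (and, in the randomised case, on the particular $r$ that the oracle used for the single query), so correctness of $\F$ implies acceptance. This gives $\Pr[1 \leftarrow \GCM{\F}{\qEx,\mu}(\lambda,\A)] = 1$, which is plainly non-negligible in $\lambda$.

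The only real subtlety I would treat carefully is the randomised case: one has to rule out the scenario in which $\V$ is invoked with a different randomness $r'\neq r$ from the one used by the oracle to produce $t^*$, which could in principle cause verification to reject a measurement outcome that would otherwise be honestly produced. This is exactly where the paper's modelling of randomness --- where each query samples a fresh classical $r$ that is either consumed internally by the verification subroutine or carried back into the adversary's ancilla via $y$ --- is essential. Once that convention is in place, the one-query measurement attack above is the whole argument, showing that as soon as any non-trivial overlap between the challenge and the learning-phase queries is tolerated, the strongest level of quantum existential unforgeability collapses for every classical primitive.
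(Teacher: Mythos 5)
Your proposal is correct and is essentially the paper's own proof: a single uniform-superposition query followed by a computational-basis measurement, with the observation that the resulting challenge has overlap $1/2^n$ with the queried state and therefore passes the $\mu$-distinguishability check precisely when $\mu \le 1 - 1/2^n$. Your added care about the randomised case and the explicit fidelity computation only make explicit what the paper leaves implicit.
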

\begin{proof}
There exists a simple superposition attack that breaks \eufm. Let $\A$ issue only one query which is the uniform superposition of all the inputs, which leads to an output of the form $\frac{1}{\sqrt{2^n}}\sum_{m} \ket{r}_{\Ora}\ket{m, f(m;r)}$. Then by measuring the first part of the register in the computational basis, the state will collapse to one of the basis and the adversary is able to produce a valid message-tag pair for a classical message with a negligible overlap with the learning phase. Hence $\A$ can always win the game for any any $\mu \leq 1 - \frac{1}{2^n}$. \qed
\end{proof}


Nevertheless, it is still possible to have schemes that are \euf\ secure through the following positive result:
\begin{theorem}\label{th:qprf-1euf} $\qprf$s are \euf\ (\suf) unforgeable.
\end{theorem}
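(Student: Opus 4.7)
The plan is to leverage the equivalence between \euf\ and \bu\ established in Theorem~\ref{th:1guf-bu}, together with the BU-security of qPRF-based MACs shown in~\cite{eurocrypt-2020-30239}. Chaining these two results immediately yields \euf-security of qPRFs. For the \suf\ part, observe that \eufm\ trivially implies \sufm\ for the same value of $\mu$ (as already remarked in the paragraph preceding Theorem~\ref{th:suf-uuf}, a selective adversary has strictly fewer degrees of freedom than an existential one), so \euf\ in particular implies \suf.

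To keep the argument self-contained, I would also sketch a direct reduction to the qPRF distinguishing game of Definition~\ref{def:qprf}. Suppose, towards contradiction, that some QPT adversary $\A$ wins the \euf\ game against the qPRF-based primitive with non-negligible probability. Construct a distinguisher $\mathcal{D}$ that, given oracle access to an unknown $\mathcal{O}\in\{PRF_k, f\}$ with $k\leftarrow\K$ and $f\leftarrow\Y^{\X}$, simulates the \euf\ game for $\A$ by forwarding each quantum learning-phase query to $\mathcal{O}$; when $\A$ returns a classical forgery $(m^*,t^*)$, $\mathcal{D}$ issues one additional classical query on $m^*$ to obtain $t'$, and outputs $1$ iff $t^*=t'$.

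The two cases are then analysed as follows. If $\mathcal{O}=PRF_k$, the simulation is perfect and $\mathcal{D}$ outputs $1$ precisely when $\A$ produces a valid forgery, which happens with non-negligible probability by assumption. If $\mathcal{O}=f$ is a truly random function, the $\mu=1$ condition forces $\ket{m^*}$ to be orthogonal to the input register of every learning-phase query, so that $f(m^*)$ is information-theoretically independent of $\A$'s entire view. Hence $\Pr[t^*=f(m^*)]\le 1/|\Y|$, which is negligible in~$\lambda$. The resulting non-negligible gap between the two cases contradicts qPRF security, completing the proof.

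The main delicate step is the second case: one must carefully invoke the fact that a uniformly random function has independent outputs on distinct inputs to conclude that $f(m^*)$ remains uniform given the learning-phase transcript, even when that transcript consists of quantum queries in superposition, precisely because the forgery message lies outside the support of those superpositions. Everything else reduces to bookkeeping on the hierarchy of definitions already established in Section~\ref{sec:other-defs}.
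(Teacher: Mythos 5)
Your first paragraph is exactly the paper's proof: the paper disposes of this theorem in one line by invoking the equivalence of \euf\ and \bu\ (Theorem~\ref{th:1guf-bu}) together with Corollary~4 of~\cite{eurocrypt-2020-30239}, and the \suf\ part follows from the hierarchy remark that \eufm\ implies \sufm\ for the same $\mu$. So on the main route you and the paper agree completely.

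Your additional direct reduction to Definition~\ref{def:qprf} is a genuinely different and more self-contained argument, and it is essentially sound, but be aware that the ``delicate step'' you flag is not mere bookkeeping --- it is precisely the content of the blinding lemma (Theorem~\ref{th:bu}) that the cited Corollary~4 rests on, so you are partially re-proving rather than bypassing the imported result. The subtlety is adaptivity: the learning-phase queries $\rho^{in}_i$ may depend on earlier oracle answers, so the event ``every query is orthogonal to $\ket{m^*}$ on the input register'' is an event you are conditioning on, and in the truly-random-function case one must argue that this conditioning does not bias $f(m^*)$. The clean way to do this (as in~\cite{eurocrypt-2020-30239}, and as the paper itself discusses inside the proof of Theorem~\ref{th:1guf-bu}) is to require the orthogonality to hold for the support taken over all choices of $f$; then the oracle unitary restricted to the adversary's accessible subspace genuinely does not involve $f(m^*)$, the final state is identical for every value of $f(m^*)$, and your bound $\Pr[t^*=f(m^*)]\le 1/|\Y|$ follows. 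With that caveat made explicit, your reduction buys a proof that does not route through the \bu\ formalism at all, at the cost of re-deriving its key technical lemma; the paper's proof is shorter but imports that lemma wholesale.
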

\begin{proof}
This is a straightforward result via equivalence of \euf\ to \bu ~and Corollary 4 in~\cite{eurocrypt-2020-30239}, where it is shown that $\qprf$s are $\bu$ secure.\qed
\end{proof}

\subsection{Generalised Selectively Unforgeable Schemes}
\label{sec:sel-unf}
In this section, we establish results for \sufm\ which restricts the adversary in two ways. First, by requiring the adversary to commit to the challenge before the learning phase, we prevent the adversary to pick any post-measurement state as their forgery challenge. Second, by subtracting the probability of any potential trivial attack, especially for classical primitives, from the winning probability of the game, we make the probability bounds tighter for the adversary. We show that defining unforgeability in this way leads to non-trivial results and establish a separation between randomised and non-randomised constructions. 

\subsubsection{Non-randomised Schemes}
We show a general impossibility result using the \emph{quantum emulation attack} introduced in~\cite{arapinis2019quantum}. Here we only show this no-go result for classical non-randomised primitives to avoid repetitions, but the same result holds for quantum construction too. 

\begin{theorem}[No classical (or quantum) non-randomised primitive $\F$ is \sufm\ secure]\label{th:sel-qCM} For any classical/quantum primitive $\F$ and for any $\mu$, in the range $\frac{1}{4} + \nonnegl(\lambda)\leq \mu \leq 1-\nonnegl(\lambda)$, there exists an effective QPT adversary $\A$ such that
\begin{equation}
Pr[1\leftarrow \GCM{\F}{q(\lambda), \qSel, \mu}(\lambda, \A)] - P_{ov}(q(\lambda), \mu) = \nonnegl(\lambda).
\end{equation}
\end{theorem}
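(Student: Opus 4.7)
The plan is to construct an explicit QPT attack based on the Quantum Emulation (QE) algorithm of Section~\ref{sec:prelim-qe}. In the selective challenge phase, the adversary $\A$ commits to an arbitrary classical target $m^*\in\M$, then in the learning phase issues only two carefully designed queries whose input--output pairs are fed into QE in order to approximate $U\ket{m^*}$. Measuring the QE output in the computational basis returns a candidate pair $(m^*,t^*)$ that $\A$ submits to $\vO$. The non-trivial part of the proof is to show that the resulting success probability is strictly above the trivial-attack threshold $P_{ov}(q,\mu)=1-\mu^q$ by a non-negligible margin, rather than merely being positive.

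Concretely, pick any $m_1\neq m^*$ and query the oracle on the two states $\ket{m_1}$ and $\sigma\ket{m_1}+\gamma\ket{m^*}$ with $|\sigma|^2+|\gamma|^2=1$ and $|\gamma|^2\le 1-\mu$, so that each query is $\mu$-distinguishable from $\ket{m^*}$. Run QE with $\ket{\phi_r}:=\ket{m_1}$ as reference, $\ket{\phi_1}:=\sigma\ket{m_1}+\gamma\ket{m^*}$ as the second sample, and $\ket{\psi}:=\ket{m^*}$ as target. Since $\langle\phi_r|\psi\rangle=0$, $\langle\phi_1|\psi\rangle=\gamma^*$ and $\langle\phi_r|\phi_1\rangle=\sigma$, Theorem~\ref{th:qe-fins} collapses the post-block state to $\ket{\chi_1}=\bigl(\ket{m^*}-2\gamma^*(\sigma\ket{m_1}+\gamma\ket{m^*})\bigr)\ket{1}$. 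Plugging into~(\ref{eq:qe-ps}) yields $\bra{\phi_r}\mathrm{Tr}_{anc}(\ket{\chi_1}\bra{\chi_1})\ket{\phi_r}=4|\gamma|^2|\sigma|^2$, and Theorem~\ref{th:qe-fidel} then gives the closed-form lower bound $F(\rho_{QE},U\ket{m^*})\ge 4|\gamma|^2(1-|\gamma|^2)$. Because a computational-basis measurement on a state of fidelity $F$ with a pure target returns that target with probability at least $F$, the same quantity lower-bounds $\Pr[1\leftarrow\GCM{\F}{2,\qSel,\mu}(\lambda,\A)]$.

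It then remains to tune $|\gamma|^2\in(0,\,1-\mu]$ so that this bound beats $P_{ov}(2,\mu)=1-\mu^2$ by a non-negligible quantity throughout the admissible range. Two regimes split the analysis naturally. For $\mu\le 1/2$ the symmetric choice $|\gamma|^2=\tfrac12$ is feasible and saturates the QE bound at $1$, leaving a gap of $\mu^2$, which is non-negligible whenever $\mu$ is (and in particular whenever $\mu\ge \tfrac14+\nonnegl(\lambda)$). For $\mu>1/2$ the distinguishability constraint is active, so we take $|\gamma|^2=1-\mu$ and obtain a gap of $4\mu(1-\mu)-(1-\mu^2)=(3\mu-1)(1-\mu)$, which stays non-negligible for every $\mu\le 1-\nonnegl(\lambda)$. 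Patching the two regimes covers the stated interval. Since the oracle we use is unitary on the joint message/ancilla register, the same construction transfers verbatim to quantum non-randomised primitives, with the test algorithm of Definition~\ref{def:test} playing the role of the computational-basis measurement.

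The main obstacle is the fidelity bookkeeping for QE: Theorem~\ref{th:qe-fins} mixes overlaps with both $\ket{\phi_r}$ and $\ket{\phi_1}$, and the only reason the post-block state simplifies to a clean expression is the orthogonality $\langle m_1|m^*\rangle=0$; generalising the attack to arbitrary $\ket{\phi_1}$ would blow up the closed form. The delicate point is to verify that the resulting bound on $F$ really exceeds $1-\mu^q$ \emph{after} subtracting $P_{ov}$, not just in absolute terms -- the margin degenerates at both endpoints of the admissible $\mu$-range, which is exactly where the regime-splitting and the lower bound $\tfrac14+\nonnegl(\lambda)$ become necessary.
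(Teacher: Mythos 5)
Your proposal is correct and follows essentially the same route as the paper's own proof: the identical two-query quantum-emulation attack (one basis state orthogonal to the committed target plus one superposition with overlap $\gamma$), the same appeal to Theorems~\ref{th:qe-fidel} and~\ref{th:qe-fins}, and the same accounting against $P_{ov}(2,\mu)=1-\mu^2$. The only deviations are minor: you swap which query plays the reference role in the emulator, which yields the slightly weaker but still sufficient fidelity bound $4\gamma^2(1-\gamma^2)$ in place of the paper's $\gamma^2\bigl(1+4(1-\gamma^2)^2\bigr)$, and your two-regime choice of $\gamma$ --- fixing $\gamma^2=\tfrac{1}{2}$ whenever $\mu\le\tfrac{1}{2}$ instead of always taking $\gamma_{\max}=\sqrt{1-\mu}$ as the paper does --- actually produces a positive gap for every non-negligible $\mu$, not only for $\mu>\tfrac{1}{4}$.
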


\begin{proof}[sketch]
We show the proof for classical primitives but the same attack and results also holds for quantum primitives. We show that there exists a QPT adversary $\A$ who can win the game with non-negligible probability for any $\mu$ except when it is negligibly close to 0 or 1. A more detailed version of the proof is given in the Supplementary Materials~\ref{app:proof-selective-unf-nogo}. The attack we present is an emulation attack based on the universal quantum emulator~\cite{marvian2016universal}. First $\A$ picks any two messages $m, m' \in \M$ and sets $m$ as the challenge. Then $\A$ queries the states $\ket{\phi_1} = \ket{m', 0}$ and $\ket{\phi_r} = \sqrt{1-\gamma^2}\ket{m', 0} + \gamma \ket{m, 0}$ from $\eO_f$,
where $\gamma$ is a real value such that $0 \leq \gamma \leq \sqrt{1-\mu}$ and such that the distinguishability condition of the \sufm\ game is satisfied. After the learning phase, $\A$'s output state is $\sigma_{out} = \ket{\phi^{out}_1}\otimes\ket{\phi^{out}_r}$ where $\ket{\phi^{out}_1} = \Ue \ket{\phi_1}$ and $\ket{\phi^{out}_r} = \Ue \ket{\phi_r}$. Followed by the fidelity analysis of the attack algorithm given in Supplementary Materials~\ref{app:proof-selective-unf-nogo}, we show that the success probability of $\A$ in producing the output of $m$ i.e. $f(m)$ is $Pr[1\leftarrow \GCM{\F}{2, \qSel, \mu}(\lambda, \A)] = \gamma^2(1 + 4(1-\gamma^2)^2)$. We also note that for $\gamma = \frac{1}{\sqrt{2}}$, there exists an emulation with success probability 1. 
Also, we let $\A$ to set $\gamma$ to the maximum value allowed by the overlap condition i.e. $\gamma = \gamma_{max} = \sqrt{1-\mu}$. Finally, we need to subtract the $P_{ov}$ from this probability for the adversary to be effective. For this attack we have $q=2$ and the $P_{ov}(2, \mu) = 1 - \mu^2$ according to Definition~\ref{def:pov-standard-orc}. Thus we have
\begin{equation}
    Pr[1\leftarrow \GCM{\F}{q, \qSel, \mu}(\lambda, \A)] - P_{ov}(2, \mu) =  \mu(1-\mu)(4\mu - 1) = \nonnegl(\lambda)
\end{equation}
which concludes the proof. \qed
\end{proof}

Despite the above no-go result, \qprf s still provide \suf\ security, as mentioned in Theorem~\ref{th:qprf-1euf}. However, the above theorem shows a fundamental vulnerability of any non-randomised classical primitive against forgeries, since the only way to ensure the security of primitives against such effective attacks is to guarantee that the adversary's forgery message is orthogonal to their learning subspace by relying on the device implementation which is in contradiction with the whole motivation of obtaining security against more powerful quantum adversaries, to begin with. More precisely, our Theorem~\ref{th:sel-qCM} shows that non-randomised MAC schemes such as HMAC and NMAC do not satisfy existential nor selective unforgeability except for $\mu = 1$ and hence are always vulnerable against more powerful quantum adversaries implementing superposition attacks. At this point, we go back to the same example that we have presented in Section~\ref{sec:def-motivation}, which illustrates more clearly why the current definition and the quantum emulation class of attacks shows a forgery that clearly needs to be prevented. We present a slightly different attack to the one exhibited in the proof of Theorem~\ref{th:sel-qCM} but that makes even more obvious the need for our generalised definition.

\begin{example}\label{example:qea-three-states}
Let $\A$'s state after the learning phase be $\sigma_{in} = \ket{\phi^{in}_1}\otimes\ket{\phi^{in}_r}^{\otimes 2}$ and $\sigma_{out} = \ket{\phi^{out}_1}\otimes\ket{\phi^{out}_r}^{\otimes 2}$ where the query states have been chosen as follows:
\begin{equation}
    \ket{\phi_1} = \ket{m_1, 0} \quad \ket{\phi_r} = \delta \ket{m_1, 0} + \gamma \ket{m_2, 0} + \gamma \ket{m_3, 0}
\end{equation}
Where due to normalisation $|\delta|^2 + 2|\gamma|^2 = 1$, although we pick the $\delta = \sqrt{1-2\gamma^2}$ and $\gamma$ to be real values for simplicity, thus $\gamma^2 \leq \frac{1}{2}$. Also note that $\A$ has two identical copies of $\ket{\phi^{out}_r}$. The attack consists of running two separate emulations for $\ket{m_2, 0}$ and $\ket{m_3, 0}$.

Let $\ket{\phi_r}$ be the reference state for the emulation, and the target state to be $\ket{\psi} = \ket{m_2, 0}$ or $\ket{\psi} = \ket{m_3, 0}$. Note that as $\ket{\phi_1} = \ket{m_1, 0}$ is orthogonal to both states and the reference state is symmetric with respect to them, the emulation's fidelity will be the same for both these states. Relying on Theorem~\ref{th:qe-fins}, the output state of the QE algorithm with only one block will be:
\begin{equation}
\begin{split}
     \ket{\chi_f} =& \mbraket{\phi_r}{\psi} \ket{\phi_r}\ket{0} + \ket{\psi}\ket{1} - \mbraket{\phi_r}{\psi}\ket{\phi_r}\ket{1}-2\mbraket{\phi_1}{\psi}\ket{\phi_1}\ket{1} \\ & +2\mbraket{\phi_r}{\psi}\mbraket{\phi_r}{\phi_1}\ket{\phi_1}\ket{1}.
\end{split}
\end{equation}
Note that $|\mbraket{\phi_1}{\psi}| = 0$ and $|\mbraket{\psi}{\phi_r}|^2 = \gamma^2$ and $|\mbraket{\phi_1}{\phi_r}|^2 = 1 - 2\gamma^2$. Then according to Theorem~\ref{th:qe-fidel}, the fidelity of the emulation for both states is:
\begin{equation}
    F(\ket{\omega}\bra{\omega}, \Ue \ket{\psi}\bra{\psi} \Ue^{\dagger}) \geq \gamma^2(1+4(1-2\gamma^2)^2)
\end{equation}

Now we need to compare this probability with the $P_{ov}$ probability which is $P_{ov}(3, \mu) = 1-\mu^3$ since the size of the learning phase includes 3 queries. We write the effective success probability of the adversary as: 
\begin{equation}\label{eq:prob-win-selective-two-messages}
\begin{split}
    Pr_{forge}[\A(m_2)] & = Pr_{forge}[\A(m_3)] = Pr[1\leftarrow \GCM{\F}{3, \qSel, \mu}(\lambda, \A)] - P_{ov}(3, \mu) \\
    & = \gamma^4(1+4(1-2\gamma^2)^2)^2 - (1-\mu^3)
\end{split}
\end{equation}



Finally, we need to do a functional analysis of the above probability to see in which cases it becomes non-negligible. First, we note that the success probability of the emulation attack is not greater than the trivial success probability for all the values of $\mu$ which shows that if we allow for too much overlap, the trivial attack already has a very high probability which is higher than the emulation's fidelity in this case. Next, since the highest allowed overlap is achieved when $1 - \mu = \gamma^2$, we substitute the variable $\mu$ with $1 - \gamma^2$ to find the degrees of $\mu$ for which an effective adversary exists. Hence we rewrite the winning probability of the equation~\ref{eq:prob-win-selective-two-messages} as follows:
\begin{equation}
    Pr_{forge}[\A(m_2 \vee m_3)] = \gamma^2(1+4(1-2\gamma^2)^2) - (1 - (1-\gamma^2)^3)) = \gamma^2(2 - 5 \gamma^2 + 3 \gamma^4)
\end{equation}
Noting that the valid range for $\gamma$ is $0\leq \gamma \leq \frac{\sqrt{2}}{2}$, we plot the above function as it is shown in Figure~\ref{fig:example-prob-plot} and we can see that there is exist a valid range for $\mu$ such that the above forgery attack happens with non-negligible probability.

But more importantly, now having access to two copies of the reference state, the adversary can actually run the emulation attack twice, and produce the outputs of both $m_2$ and $m_3$ at the same time, with non-negligible probability. Thus for these values of $\mu$, we have presented an adversary who can produce effective forgery for three classical messages $m_1$, $m_2$ and $m_3$ (Note that the first learning phase query is $\ket{m_1, 0}$ which is basically a classical query and as a result, $\A$ will always have the output for $m_1$) from a classical query, and two copies of the same quantum state which shows an intuitive forgery, especially that the presented attack is independent of the size of the messages and the dimensionality of the Hilbert space of the oracle. This sort of attacks cannot be captured in the definitions of unforgeability that count the queries, such as \bz. Nevertheless, our approach in defining the notion of unforgeability is capable of showing such vulnerabilities against strong quantum adversaries.
\begin{SCfigure}
     \includegraphics[width=0.50\textwidth]{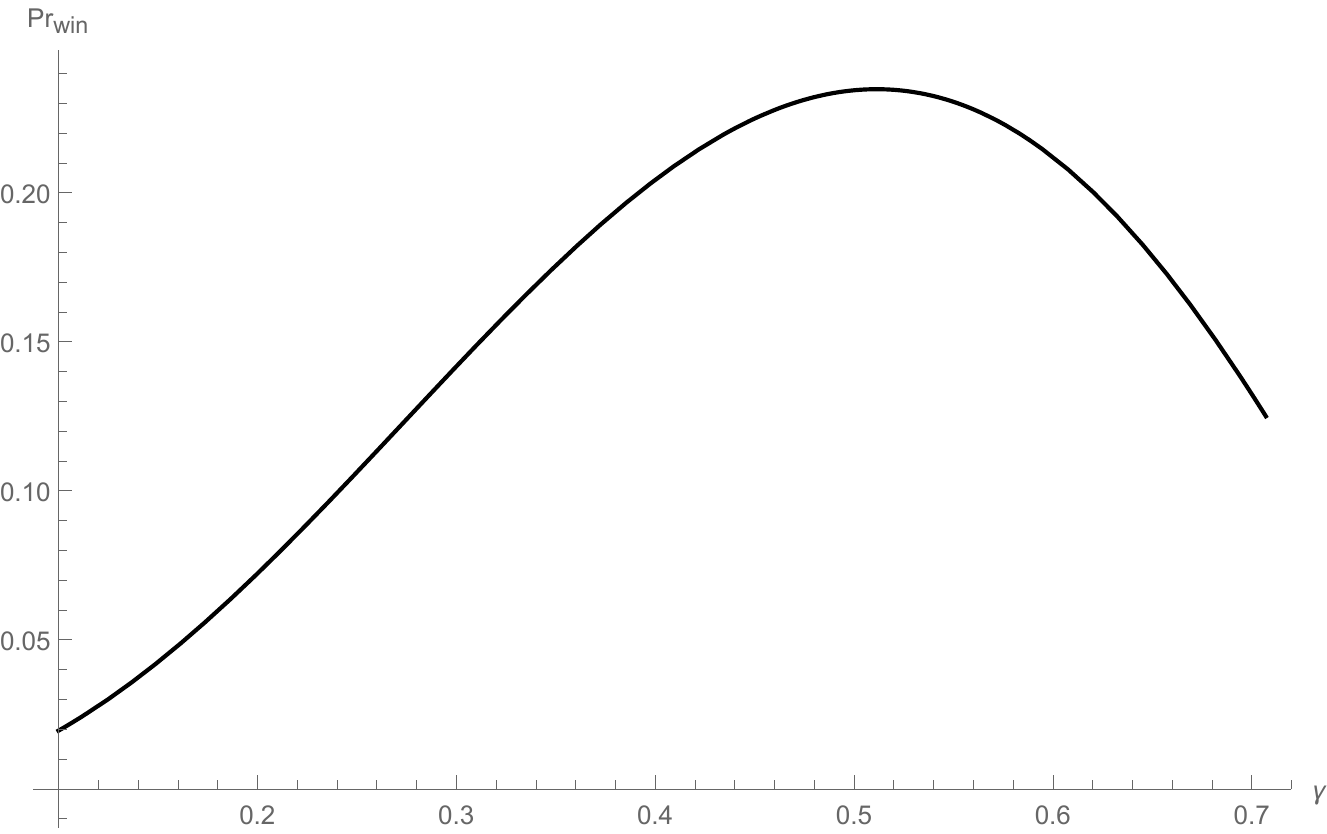}
     \caption{\small The winning probability of $\A$ to forge classical messages $\{m_2,m_3\}$ with the emulation attack. $\gamma$ represents the overlap between the learning phase query and the target message.}\label{fig:example-prob-plot}
\end{SCfigure}
\end{example}

\subsubsection{Randomised Schemes:}
In this section, we explore how to defend against general superposition adversaries, \emph{i.e.} that are allowed to exploit overlaps between previously queried messages and the target message. We show that selective unforgeability can be achieved in such a setting, by effective randomization. Concretely, we present a randomized construction for classical primitives that satisfies qGSU (\sufm\ for any $\mu$). The key ingredient that allows this construction to be secure is that the randomization has been used in an effective way such that the adversary is prevented from creating a known subspace for a specific unitary, even though they can query the challenge message in superposition. First, we formalise the desired characteristic for the family of the classical functions used in our construction.

\begin{definition}[Inter-function independent family:]\label{def:computational-function-pairwise} Let $F_k: \K \times \X \rightarrow \Y$ be a keyed family of functions with domain $\X$ and range $\Y$, where $\X=\{0,1\}^n$ and $\Y=\{0,1\}^m$. We say $F_k$ is an \emph{inter-function (pairwise) independent family} if for any efficient PPT adversary $\A$ and any two functions $F(k, .)$ and $F(k', .)$ picked uniformly at random from $F_k$, the probability of $\A$ finding an $x \in \X$ such that $F(k, x) = F(k', x)$, is negligible in the security parameter, i.e. the following condition should hold:
\begin{equation}\label{eq:computational-function-pairwise}
    \underset{k, k' \leftarrow \K}{Pr}[x \leftarrow \A(1^{\lambda}) \wedge F(k, x) = F(k', x)] = \negl(\lambda)
\end{equation}
\end{definition}

Now we show that a \prf\ family satisfies the above condition.
\begin{lemma}\label{lemma:prf-inter-function-ind}
A \prf\ is an inter-function independent family.
\end{lemma}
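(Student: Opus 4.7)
The plan is a standard hybrid reduction to pseudorandomness, replacing the two independently keyed instances of the PRF by truly random functions one at a time. Suppose for contradiction that there is a PPT adversary $\A$ and a non-negligible function $\epsilon(\lambda)$ such that
\begin{equation*}
\underset{k, k' \leftarrow \K}{Pr}[\, x \leftarrow \A(1^{\lambda}) \,\wedge\, F(k,x) = F(k',x)\,] \geq \epsilon(\lambda).
\end{equation*}
I would use $\A$ to build a distinguisher $D$ contradicting Definition~\ref{def:qprf}.

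Concretely, I would define $D^{\Ora}(1^{\lambda})$ as follows: sample a second key $k' \leftarrow \K$ internally, run $\A(1^{\lambda})$ to obtain a candidate point $x \in \X$, issue a single oracle query to obtain $y := \Ora(x)$, compute $F(k', x)$ classically using the sampled $k'$, and output $1$ iff $y = F(k', x)$. When $\Ora = F(k, \cdot)$ for random $k$, the acceptance probability of $D$ is exactly the collision probability assumed above, namely at least $\epsilon(\lambda)$. When $\Ora = f$ for $f \leftarrow \Y^{\X}$ uniform, the value $f(x)$ is information-theoretically uniform in $\Y$ and independent of both $F(k',x)$ and the internal randomness of $\A$, so $D$ accepts with probability exactly $1/|\Y| = 2^{-m}$, which is negligible.

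Combining these two cases yields a distinguishing advantage of at least $\epsilon(\lambda) - 2^{-m}$, which is non-negligible and contradicts the pseudorandomness of $F$ per Definition~\ref{def:qprf}. Hence no such $\A$ can exist, and $F_k$ satisfies the inter-function independence property of Definition~\ref{def:computational-function-pairwise}.

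The step I expect to require the most care is the clean separation of randomness in the reduction, in particular ensuring that the point $x$ output by $\A$ is independent of the ideal oracle $f$ in the truly-random case. This is immediate here because $\A$ in Definition~\ref{def:computational-function-pairwise} receives only $1^{\lambda}$ and no oracle access, so $D$ needs only a single oracle query (at the point $x$) and can simulate the second keyed instance entirely on its own. Were one to strengthen the definition by also granting $\A$ oracle access to both $F(k,\cdot)$ and $F(k', \cdot)$, the same reduction would go through: $D$ would forward the $k$-side queries to its own oracle and answer the $k'$-side queries using the locally sampled $k'$, and the analysis would be unchanged modulo the usual $2^{-m}$ loss per query in the ideal case.
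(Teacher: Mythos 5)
Your proof is correct, and it takes a genuinely more direct route than the paper. The paper's proof first builds the derived family $F'((k,k'),x) = F(k,x)\oplus F(k',x)$, invokes the (true but externally cited) composition fact that the XOR of two independently keyed PRFs is again a PRF, and then distinguishes $F'$ from a truly random function by querying the collision point $x'$ and testing for the output $0^m$. You instead reduce directly to the pseudorandomness of a \emph{single} instance of $F$: sample $k'$ internally, obtain $x$ from $\A$, make one oracle query, and compare against the locally computed $F(k',x)$. This avoids the auxiliary composition lemma entirely and makes the randomness accounting cleaner -- in particular your observation that $x$ is independent of the oracle because $\A$ in Definition~\ref{def:computational-function-pairwise} gets no oracle access is exactly the point on which the reduction hinges, and it gives the sharp ideal-world bound of $2^{-m}$. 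Both arguments yield the same conclusion with essentially the same loss; yours is more elementary and self-contained, while the paper's XOR construction would generalise more naturally if one wanted to detect collisions at adversarially chosen points under oracle access to both instances simultaneously. Two cosmetic remarks: calling your argument a ``hybrid'' slightly oversells it, since only one replacement step occurs (the second instance is simulated with a known key rather than idealised); and since the lemma is stated for \qprf s per Definition~\ref{def:qprf}, you should note explicitly that your classical single-query distinguisher is a special case of a quantum oracle algorithm, so the contradiction with quantum-secure pseudorandomness still goes through.
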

\begin{proof}
We want to show that any two randomly selected functions from a PRF family, satisfy the required pairwise-independency property of Definition~\ref{def:computational-function-pairwise}. Let $F_k: \K \times \X \rightarrow \Y$ be a PRF family of functions where $|\X| = 2^n$ and $|\Y| = 2^m$. We want to show that there is no efficient adversary that can find an $x$ such that $F(k, x) = F(k', x)$ for any two different, randomly picked keys $k, k'$.
We prove by contradiction. We assume that $F_k$ is a PRF but there exist an efficient adversary $\A$ that can find at least one $x \in \X$ such that for any two randomly picked functions from $F_k$ we have:
\begin{equation}\label{eq:prf-pairwise-false}
    \underset{k, k' \leftarrow \K}{Pr}[x \leftarrow \A(1^{\lambda}) \wedge F(k, x) = F(k', x)] = \nonnegl(\lambda).
\end{equation}

Now we construct a new family of functions from $F_k$ which is a PRF. Let $F'_{k,k'}: \K^2 \times \X \rightarrow \Y$ be constructed as follows:
\begin{equation}
    F'((k,k'), x) = F(k, x) \oplus F(k', x)
\end{equation}
It is a well-known example in the literature that if $F_k$ is a PRF, then $F'_{k,k'}$ is also a PRF.
Now we show that if the equation~(\ref{eq:prf-pairwise-false}) holds, then there also exist an adversary who can distinguish $F'((k,k'), x)$ form truly random function. Let $\A'$ query the same $x'$ that has been found by $\A$. If $\A'$ queries $F'((k,k'), x)$, since $F(k, x') = F(k', x')$ with non-negligible probability, then the queries to $F'((k,k'), x)$ on $x'$ should return $0^m$. On the other hand the queries to the truly random function will return random bit-strings. As a results, $\A'$ can distinguish $F'((k,k'), x)$ from a truly random function which is a contradiction and hence we have proved that \prf\ satisfies the Definition~\ref{def:computational-function-pairwise}. \qed
\end{proof}

We can now give our construction based on \prf s or more generally, based on any family of classical functions satisfying the Definition~\ref{def:computational-function-pairwise}.

\begin{construction}\label{const:classical-random-const-prf}
Let $F: \K \times \X \rightarrow \Y$ be a \prf\ (or any other family satisfying Definition~\ref{def:computational-function-pairwise}). Let $\R = \K = \{0,1\}^l$ be the randomness space. And let $\lambda$ be the security parameter and $l$ be polynomial in $\lambda$. The construction is defined by the following key generation algorithm, keyed evaluation algorithm, and keyed verification algorithm:
\begin{itemize}
    \item{\bf Key generation:} The secret key is picked uniformly at random from $\K$: $k \xleftarrow{\text{\$}} \K$
    \item{\bf Evaluation:} The evaluation under key $k$ on input $m$ picks randomness $r$ and applies $F(k\oplus r, \cdot)$ to $m$. Note that when responding to a quantum query, the same randomness is used for all the states of the superposition:
    \begin{itemize}
            \item On input $m\in \X$:
            \item $r \xleftarrow{\text{\$}} \R$
            \item Return $F(k \oplus r, m)||r$ 
    \end{itemize}
    \item{\bf Verification:} The verification under key $k$ of a pair $(m, (t, r))$, runs the evaluation algorithm on $m$ under $k$ with randomness $r$, and checks equality with $t$.
    \begin{itemize}
        \item On input $(m, (t, r))\in \X \times (\Y \times \R)$:
        \item If $F(k \oplus r, m) = t$ return $\top$, otherwise return $\bot$
    \end{itemize}
\end{itemize}
\end{construction}

Now we show that the construction satisfies \sufm\ security. 

\begin{theorem}\label{th:classical-random-const-prf-secure}
Construction~\ref{const:classical-random-const-prf} is qGSU secure.
\end{theorem}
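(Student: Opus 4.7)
The plan is to bound the adversary's winning probability by a case analysis on the randomness $r^{*}$ in the forgery $(m,t,r^{*})$. The key structural observation is that, since the oracle samples a fresh $r_{i}$ per query and hands it to the adversary, each learning-phase query $i$ is effectively an evaluation of $F(k\oplus r_{i},\cdot)$, while the verification step uses the key $k\oplus r^{*}$; these effective keys are each marginally uniform over the unknown $k$.

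For Case A, in which $r^{*}=r_{i}$ for some $i\in\{1,\dots,q\}$, the adversary must produce $F(k\oplus r_{i},m)$ having only the single superposition query $\rho_{i}^{out}$ that used this effective key. The selective commitment to $m$ together with the $\mu$-distinguishability of each $\rho_{j}^{in}$ from $\ket{m}$ gives $|\alpha_{i,m}|^{2}\le 1-\mu$, where $\alpha_{i,m}$ is the amplitude of $\ket{m}$ in $\rho_{i}^{in}$. I would then argue, via a hybrid replacing $F(k\oplus r_{i},m')$ with fresh randomness for every $m'\neq m$, that the only information route to $F(k\oplus r_{i},m)$ is through the $\ket{m}$ component of the output; pseudorandomness of the PRF blocks any cross-correlation attack. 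The per-query extraction probability is thus at most $1-\mu$, and composing over the $q$ indices gives exactly $P_{ov}(q,\mu)=1-\mu^{q}$.

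For Case B, in which $r^{*}\notin\{r_{1},\dots,r_{q}\}$, the verification key was never used in the learning phase. If $t$ is essentially an independent guess, PRF security implies $F(k\oplus r^{*},m)$ is pseudorandom and $\Pr[F(k\oplus r^{*},m)=t]=\negl(\lambda)$. If instead $t=F(k\oplus r_{j},m)$ for some $j$ (the adversary reuses a value that could have come from query $j$), then winning requires the collision $F(k\oplus r^{*},m)=F(k\oplus r_{j},m)$ at the pre-committed input $m$. Since $m$ is produced in the selective challenge phase without any information about the eventual offsets, this is exactly the situation ruled out by Definition~\ref{def:computational-function-pairwise}, and Lemma~\ref{lemma:prf-inter-function-ind} bounds its probability by $\negl(\lambda)$.

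Combining via a union bound yields $\Pr[1\leftarrow\GCM{\F}{q,\qSel,\mu}(\lambda,\A)]\le P_{ov}(q,\mu)+\negl(\lambda)$, which is exactly the qGSU bound. The main obstacle I anticipate is the collision sub-case of Case B: although $k\oplus r^{*}$ and $k\oplus r_{j}$ are each marginally uniform, they are correlated through the known offset $r^{*}\oplus r_{j}$, so invoking inter-function independence is not automatic. The crucial lever will be the selective ordering, which forces $m$ to be committed before any $r_{i}$ or $r^{*}$ is drawn, matching the hypothesis of Definition~\ref{def:computational-function-pairwise}. A secondary subtle point, in Case A, is ensuring that general quantum post-processing on $\rho_{i}^{out}$ cannot outperform naive measurement; the hybrid replacing $F$ off-$m$ with a random function is the mechanism I would use to close that gap.
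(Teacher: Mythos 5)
Your proposal is correct in substance and follows essentially the same route as the paper's proof: reduce to the case $r^*\in\{r_1,\dots,r_q\}$ (the paper's Game~0/Game~1 hop, your Case~A/Case~B split), use the \prf\ structure and inter-function independence to argue that each effective key $k\oplus r_i$ is informative only through its own single query, and bound the extraction of $F(k\oplus r^*,m)$ from that one query by the amplitude $|\alpha|^2\le 1-\mu$ forced by the $\mu$-distinguishability condition. The one step to repair is the claim that \emph{composing over the $q$ indices gives exactly} $1-\mu^{q}$: a union bound over the indices gives $q(1-\mu)$, which can exceed $1-\mu^{q}$, and independent composition is not justified; what the paper uses, and what suffices, is that the adversary commits to a single $r^*$, so its success probability in Case~A is at most $1-\mu=P_{ov}(1,\mu)\le P_{ov}(q,\mu)$, which already closes the argument. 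Your closing concern that $k\oplus r^*$ and $k\oplus r_j$ are correlated through the known offset, so that Definition~\ref{def:computational-function-pairwise} does not apply verbatim, is well placed --- the paper's own appeal to Lemma~\ref{lemma:prf-inter-function-ind} has the same feature and resolves it only informally by noting that each $F(k\oplus r,\cdot)$ is individually a \prf\ instance --- so this is not a gap relative to the paper's proof, but it is the point that would need the most care in a fully formal write-up.
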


\begin{proof}
We assume there exists a QPT adversary $\A$ who plays the \sufm\ game where the evaluation is according to Construction~\ref{const:classical-random-const-prf}, and wins with non-negligible probability in the security parameter \emph{i.e.} $\A$ wins the game by producing a valid tag $t^*$ for their selected message $m^*$ and randomness $r^*$ with the following probability:
\begin{equation}\label{eq:adv-prob-wining-suf}
    Pr[1\leftarrow \GCM{\F}{q, \qSel, \mu}(\lambda, \A)] - P_{ov}(q_r, \mu) = \nonnegl(\lambda)
\end{equation}
Where the verification algorithm checks if $F(k \oplus r^*,m^*) = t^*$. We introduce the following games:
\begin{itemize}
    \item \textbf{Game 0.} This game is the \sufm\ for Construction~\ref{const:classical-random-const-prf}, where $F(k \oplus r, .)$ is picked from $F$.
    \item \textbf{Game 1.} This game is similar to Game 1, except that $\A$ needs to produce forgery for a $r^*$ which is one of the previously received random values of $\{r_i\}^q_{i=1}$ in the learning phase.
\end{itemize}

First, it is straightforward that the probability of the adversary in winning \sufm\ in Game 0, is at most negligibly higher than winning Game 1. Since $r_i$ in both cases have been picked independently and uniformly at random and the probability of producing a forgery for a specific function with no query is negligible. Thus Game 0 and Game 1 are indistinguishable.

Now we recall the quantum random oracle for this construction. Let $\reO_c$ be the random oracle for both games:
\begin{equation}
    \reO_c: \sum_{m,y} \alpha_{m,y} \ket{r}_{\Ora}\ket{m,y} \rightarrow \sum_{m,y} \alpha_{m,y}\ket{r}_{\Ora}\ket{m, y \oplus (F(k \oplus r,m) || r)}
\end{equation}

Note that in each query a new function has been picked from $F$, but it is the same for all the messages in the superposition for that query.

Now we use the inter-function (pairwise) independent property of the family $F$.
The construction requires the $F$ to be a \prf\ family which is inter-function independent according to Definition~\ref{def:computational-function-pairwise}, for two randomly selected keys. Now we need to also show that $F(k\oplus r, .)$ is a \prf\ as well, with a key $k$ and any randomly selected randomness $r$, and as a result we can use the inter-function independent property. This is clearly the case as the key $k$ and any randomness $r$ have been picked independently at random and if there exist a non-negligible advantage for the adversary to distinguish a $F(k\oplus r, .)$ from a truly random function for a value of $r$, there also exist an equivalent non-negligible advantage to distinguish a $F(k', .)$ where $k' = k \oplus r$ is a key selected uniformly at random. This is still the case even if the value $r$ becomes public after the experiment. This is in contrast with the assumption that the family is PRF, hence we conclude that $F(k\oplus r, .)$ is a \prf. Now we can rely on the Lemma~\ref{lemma:prf-inter-function-ind} that $F(k\oplus r, .)$ also satisfies the inter-function independent property and the following holds for each of the two functions drawn in any of the two queries:
\begin{equation}\label{eq:inter-func-inside-proof}
    \underset{i,j (i\neq j)}{Pr}[x \leftarrow \A(1^{\lambda}) \wedge F(k \oplus r_i, x) = F(k \oplus r_j, x)] = \negl(\lambda)
\end{equation}

As a result, we show that the adversary can at most span a one-dimensional subspace of each $U_{k\oplus r}$. To show this we will calculate the probability of $\A$ in spanning at least a 2-dimensional common subspace from two different queries. This means that $\A$ needs to find at least two bases mapping to the same 2-dimensional subspace in the output Hilbert space. Moreover, we exclude that part of $\A$'s register that contains the classical value of the randomness in order to only capture the Hilbert space of each $U_{k\oplus r}$. Thus let the input bases be denoted by $\ket{b} = \ket{m, z}$ where $z$ is a subset of $y$ excluding the space for the randomness, for a specific $m$. Let $\ket{e_i} = U_{k\oplus r_i}\ket{b} = \ket{z \oplus F(k \oplus r_i, m)}$ and $\ket{e_j} = U_{k\oplus r_i}\ket{b} = \ket{z \oplus F(k \oplus r_j, m)}$ be the output states from two different queries. For these output bases to have overlap, the two functions $F(k \oplus r_i,.)$ and $F(k \oplus r_j, .)$ need to return the same classical output with high probability. Although from equation~(\ref{eq:inter-func-inside-proof}), we have that the probability of finding such inputs that leads to a common basis is negligible:

\begin{equation}
\begin{split}
    & \underset{i,j (i\neq j)}{Pr}[\{\ket{e_i}, \ket{e_j}\} \leftarrow \A(1^{\lambda}) \wedge \mbraket{e_i}{e_j} \neq 0] \\
    & = \underset{i,j (i\neq j)}{Pr}[\ket{b} \leftarrow \A(1^{\lambda}) \wedge \bra{b}U^{\dagger}_{k\oplus r_i} U_{k\oplus r_j}\ket{b} \neq 0] \\
    & = \underset{i,j (i\neq j)}{Pr}[\ket{b} \leftarrow \A(1^{\lambda}) \wedge \mbraket{z \oplus F(k \oplus r_i, m)}{z \oplus F(k \oplus r_j, m)} \neq 0] \\
    & = \underset{i,j (i\neq j)}{Pr}[m \leftarrow \A(1^{\lambda}) \wedge F(k \oplus r_i, m) = F(k \oplus r_j, m)] = \negl(\lambda)
\end{split}
\end{equation}
This means that finding an even 2-dimensional common subspace between the different unitaries of the set is hard for $\A$. Also since unitaries are distance preserving operators, this property holds for any sets of orthonormal basis, not necessarily the computational basis. Thus by selecting a uniformly random function for each query, we have shown that no more than a one-dimensional subspace can be spanned for each specific unitary. 

Now we calculate the upper-bound of $\A$'s probability from a single query to a fixed unitary $U_{k \oplus r^*}$ which we denote by $U^*$ for simplicity. We recall that this query should be $\mu$-distinguishable with the quantum encoding of $m^*$. Without loss of generality, let us write $\A$'s selected query for $r^*$ as follows:
\begin{equation}
\begin{split}
        & \ket{\phi_{r^*}} = \alpha\ket{m^*, z, 0} + \beta\ket{\Omega}\ket{0},\\
        & \ket{\phi^{out}_{r^*}} = (\alpha\ket{m^*, z \oplus F(k \oplus r^*, m^*)} + \beta U^*\ket{\Omega})\ket{r^*}
\end{split}
\end{equation}
where $\ket{\Omega}$ is a normalised state that includes a superposition of a set of messages $m \neq m^*$ and as a result $\mbraket{m^*,z}{\Omega} = 0$ and $\A$ sets the second part of the register to $0$, such that the output randomness is a separable state and it can be excluded in the rest of the proof. Due to the fact that $U^*$ is unitary, we know that $\bra{m^*, z\oplus F(k \oplus r^*, m^*)}U^*\ket{\Omega} = 0$ and hence the probability of outputting  $F(k \oplus r^*, m^*)||r^*$ from $\ket{\phi^{out}_{r^*}}$ is at most the probability of measuring it in the computation basis which is $|\alpha|^2$. This probability is maximum when $|\alpha| = |\alpha_{max}|$ which is when $\A$ uses the maximum allowed overlap of size $\sqrt{1-\mu}$. Hence we have:
\begin{equation}
    Pr[1\leftarrow \GCM{\F}{q_r,\qSel, \mu}(\lambda, \A)] \leq 1 - \mu
\end{equation}
But on the other hand we have $P_{ov}(1, \mu) = 1-\mu$, which is the lower bound for $P_{ov}(q, \mu)$, and also since there is only one query to each function selected by each $r$, and equation~(\ref{eq:adv-prob-wining-suf}) states that this probability is negligibly higher that $1 - \mu$. Thus we have reached a contradiction that concludes our proof. \qed
\end{proof}

Theorem~\ref{th:classical-random-const-prf-secure} shows that in addition to \prf , \qprf s can also be used in the construction to achieve selective unforgeability. Nevertheless, we have also provided a separate security proof for the \qprf\ family that does not need the Definition~\ref{def:computational-function-pairwise}. This proof can be found in Supplementary Materials~\ref{app:proof-randomised-suf}.

\subsection{Generalised Universally Unforgeable Schemes}
\label{sec:uni-unf}
In this section, we further weaken the notion of unforgeability and provide a generally positive result for universal unforgeability. We recall that here the adversary receives a challenge picked by the challenger uniformly at random from the full message space.

Despite the fact that universal unforgeability is a weaker notion, it is a practical definition and sufficient for many protocols, especially for the case of quantum primitives and quantum protocols~\cite{arapinis2019quantum,doosti2020client}. Here as we mostly focus on classical primitives, we mention this definition for the sake of completeness and we specify with the following theorem that \qprf\ is enough to achieve \uuf, due to the hierarchy of our definition.


\begin{theorem}\label{th:qprf-uuf}
\qprf s are \uuf\ secure.
\end{theorem}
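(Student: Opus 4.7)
The plan is to obtain this statement as an immediate corollary by chaining two results that have already been established earlier in the paper. First, Theorem~\ref{th:qprf-1euf} delivers that every \qprf\ is \suf-unforgeable (equivalently $1$-qGSU-secure); that result is itself obtained via the equivalence between \euf\ and \bu\ proved in Theorem~\ref{th:1guf-bu}, together with the fact, recorded as Corollary~4 in~\cite{eurocrypt-2020-30239}, that \qprf s are \bu-secure. Second, Theorem~\ref{th:suf-uuf} tells us that selective unforgeability implies universal unforgeability; as pointed out in its proof sketch, the implication is actually proved starting from \suf\ (the $\mu=1$ instance of \sufm) and then extends to \sufm\ for any $\mu$ by the hierarchy of Theorem~\ref{th:mu-smaller-stronger}.

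So the proof I would write is essentially the single chain of implications
\[
\qprf\ \ \Longrightarrow\ \suf\ \ \Longrightarrow\ \uuf,
\]
where the first arrow is Theorem~\ref{th:qprf-1euf} and the second arrow is Theorem~\ref{th:suf-uuf}. Concretely, assume for contradiction that some QPT $\A$ wins $\GCM{\F}{\qUni}(\lambda, \A)$ with non-negligible probability when $\F$ is instantiated with a \qprf. By Theorem~\ref{th:suf-uuf}, one can then build a QPT $\A'$ that wins the selective unforgeability game against $\F$ with non-negligible advantage beyond $P_{ov}$, contradicting Theorem~\ref{th:qprf-1euf}.

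The main (and only) thing to check is that the compatibility of parameter regimes is sound: Theorem~\ref{th:qprf-1euf} gives the weakest instance ($\mu=1$) of selective unforgeability, and Theorem~\ref{th:suf-uuf} explicitly needs exactly this instance as a hypothesis (its proof is carried out at $\mu=1$). There is no real technical obstacle here, since all the heavy lifting has already been done: the distinguishability analysis and the partitioning argument that power the \suf\ $\Rightarrow$ \uuf\ reduction live inside the proof of Theorem~\ref{th:suf-uuf}, and the pseudorandomness assumption on the \qprf\ is consumed entirely inside Theorem~\ref{th:qprf-1euf}. The proof of the present theorem is therefore a one-line corollary.
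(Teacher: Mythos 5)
Your proposal is correct and is exactly the paper's own argument: the paper proves this theorem as a one-line corollary by chaining Theorem~\ref{th:qprf-1euf} (\qprf s are \suf\ secure) with Theorem~\ref{th:suf-uuf} (\suf\ implies \uuf). Your additional check that the parameter regimes match (Theorem~\ref{th:suf-uuf} is proved at $\mu=1$, which is precisely the instance Theorem~\ref{th:qprf-1euf} supplies) is a sensible sanity check but does not change the route.
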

\begin{proof}
This is a direct implication of Theorem~\ref{th:qprf-1euf} where we have proved that \qprf s are \suf\ secure and Theorem~\ref{th:suf-uuf} showing that \suf\ implies \uuf. \qed
\end{proof}


In the Supplementary Materials~\ref{ap:uni-adaptive} we also give a general no-go result for the \uuf\ security of quantum primitive against universal but adaptive attack model, where the adversary can issue learning phase queries after receiving the random challenge that is selected by the challenger. We show that in this case there also exists an interesting entanglement-based attack which leads to breaking \uuf\ against this adversarial model.

\section{Generalization of the definition to quantum primitives}\label{sec:qGUnf-quantum}
In this section, we show how our framework can also capture the natural notion of unforgeability for quantum primitives. We refer to quantum primitives as the primitive where the input message space and the output space are both Hilbert spaces, and both the queries and the messages can be any arbitrary quantum state of the input Hilbert space. In addition, the quantum oracle describing the evaluation of quantum primitives are represented more generally as unitary matrices. We describe the generalisation of the primitive to the quantum primitives as follows.\\

Let $\F = (\ES, \E, \V)$ be a quantum primitive now, with $\ES$, $\E$, and $\V$ being the setup, evaluation, and verification algorithms respectively. The game defined in the Figure~\ref{fig:game}, captures the notion of unforgeability for quantum primitives same as the classical ones, with the following modifications:
\paragraph{\bf Setup:} In the setup phase the oracles are being instantiated according to the parameters generated by $\C$, Here the evaluation oracle is defined according to Equations~\ref{qoracle-quantum-det} and~\ref{qoracle-quantum-rand} for deterministic and randomised primitives respectively and the verification oracle implements a quantum test algorithm as defined in the Definition~\ref{def:test}.

\paragraph{\bf Learning phase:}
The learning phase is similar to the classical primitives, where $\{\rho^{in}_i\}^q_{i=1}$ represent input chosen message queries and $\{\rho^{out}_i\}^q_{i=1}$ is the respective outputs after the interaction with the oracle sent to $\A$ by the challenger.

\paragraph{\bf Challenge phase:} Here the main difference is that $\M = \HilD$ is a Hilbert space and $m = \rho_{m} \in \HilD$ is a quantum challenge in the $D$-dimensional Hilbert space. In the $\qUni$ challenge phase where the message is chosen by the challenger $\C$ uniformly at random from the set of all the messages, for quantum primitives it should be selected uniformly according to the Haar measure from $\HilD$. We also need to mention that for $\qSel$ challenge phases, $\A$ is required to submit the classical description of the quantum state $\rho_m$. This is important for the verification phase, as it allows the challenger to prepare the required number of copies of the correct output for the verification. 

\paragraph{\bf Guess phase:} In this phase, the adversary submits their forgery $t$ for challenge $m$, which are now both quantum states. They win the game if $t$ passes the verification algorithm with high probability. The distinguishability condition on the message with the learning phase queries needs to be satisfied exactly as was the case for classical primitives. Here it can be seen that this is the most natural way of characterising the forgery for quantum primitives since the difference between quantum states is usually measured by their indistinguishably and their quantum distance measures.

The main difference in this phase is the difference between the classical and quantum verification procedure. The verification is fairly simpler for classical primitives since the equality can be easily checked while as for quantum primitives both message and forgery are quantum states and the verification oracle $\vO_U$ should call a quantum test algorithm $\T$ that checks the equality of quantum states as in the Definition~\ref{def:test}. Note that the challenger can prepare copies of correct outputs locally.

With the above considerations, one can use the same security game and the definitions of existential, selective and universal unforgeability as defined in Section~\ref{sec:game}. Here we only need to discuss the notion of overlap probability for quantum primitives separately due to the generality of the quantum oracles.

\subsubsection{Overlap probability in \sufm\ Definition for quantum primitives}
For quantum primitives, it is clear that the adversary's success probability in finding the output by measurement strategy is almost zero and hence defining the $P_{ov}$ as defined by Definition~\ref{def:pov-standard-orc} leads to zero overlap probability. However, in this case, as well, there is another scenario that may lead to unavoidable attacks, which is due to the error produced by the quantum test algorithm in distinguishing the states with certain overlap. An example of this is the SWAP test which has a one-sided error of $\frac{1}{2}$ even for perfectly distinguishable states. This is a fundamental difference between the quantum world and classical primitives where equality can be checked deterministically. To have a general characterisation of $P_{ov}$ for the quantum primitives, this probability needs to be defined concerning the test algorithm as follows. 

\begin{definition}[$P_{ov}$ for quantum primitives]\label{def:pov-quantum}
Let $\rho_{max}$ be the input learning phase query with the maximum overlap with the challenge state $\ket{\psi}$, allowed by the $\mu$-distinguishability condition. Let the $\eO_U$ be the unitary oracle for the quantum primitive applying $\Ue$ to the quantum inputs and let $\vO$ implement a quantum test algorithm $\T$. Then $\rho^{out}_{max} = \Ue \rho_{max} \Ue^{\dagger}$ is the output of the query from the oracle and $\rho^{out} = \ket{\psi^{out}}\bra{\psi^{out}} = \Ue \ket{\psi}\bra{\psi} \Ue^{\dagger}$ is the correct output of the challenge $\ket{\psi}$. We define the $P_{ov}$ as the error probability of the test algorithm $\T$ on distinguishing $\rho^{out}_{max}$ and $\rho^{out}$ as follows:
\begin{equation}
    P_{ov} = Pr[1 \leftarrow \T((\rho^{out}_{max})^{\otimes \kappa}, (\rho^{out})^{\otimes \kappa})]
\end{equation}
\end{definition}

This definition also implies an intuitive and practical approach to determine the desired $\mu < 1$ for quantum primitives, as it states that for any specific quantum primitive or the protocols based on that primitive, the $\mu$ should not allow for above overlap attacks with a probability larger than the required security threshold. Nevertheless, if one assumes a reasonably good quantum test algorithm, this probability for quantum primitives is usually less than the classical ones due to quantum state distinguishability and lack of adversary's knowledge over the transformation of the output bases.

\section{Results for Generalised Quantum Unforgeability of Quantum Primitives}\label{sec:results-quantum}

In this section, we present unforgeable quantum primitives for each level of our generalised unforgeability framework. Most of our positive results are based on \pru\ assumption, which is the quantum equivalent of \qprf s.

\subsection{Existential and Selective unforgeable deterministic quantum primitives }
First we show that \pru\ implies \euf\ and \suf and hence deterministic quantum primitives under this assumption can satisfy this level of unforgeability.
\begin{theorem}\label{th:suf-pru}
$\pru$ quantum primitives are \suf\ (\euf) secure.
\end{theorem}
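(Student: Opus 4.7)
The plan is a standard hybrid reduction to the pseudorandomness of the PRU family. Since \euf\ implies \suf\ (the existential game is strictly harder on the primitive than the selective one at the same $\mu$), and the two arguments are structurally identical, I describe the reduction for \suf\ and note that \euf\ follows the same way.

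Assume toward contradiction that a QPT adversary $\A$ wins $\GCM{\F}{q,\qSel,1}(\lambda,\A)$ with advantage at least $\eta(\lambda)$ over $P_{ov}(q,1)$ for some non-negligible $\eta$, where $\F$'s evaluation oracle implements a PRU $\{U_k\}_{k\in\K}$. I construct a PRU distinguisher $\mathcal{D}$ that, given oracle access to a unitary $U$ (either $U_k$ for random $k\in\K$ or a Haar random unitary), simulates the \suf\ game for $\A$ using $U$ as the evaluation oracle, prepares $(U\ket{m})^{\otimes \kappa}$ locally from fresh copies of $\ket{m}$ after $\A$'s commitment, runs the verification test $\T$ (Definition~\ref{def:test}) at the guess phase on $\A$'s forgery against these copies, and outputs the test's bit.

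In the PRU hybrid, $\mathcal{D}$ outputs $1$ with probability at least $P_{ov}(q,1) + \eta(\lambda)$ by assumption. In the Haar hybrid, the $1$-distinguishability condition forces the committed challenge $\ket{m}$ to be orthogonal to every learning-phase input state. By unitary invariance of the Haar measure, $U$'s action on any direction orthogonal to the span of the queried inputs is independent of its restriction on the queried subspace; hence, conditioned on the learning-phase transcript, $U\ket{m}$ is distributed as a Haar random state in a subspace of dimension $D - \mathrm{poly}(\lambda)$ where $D = \exp(\Omega(\lambda))$. Consequently, any forgery $t$ produced by $\A$ has expected fidelity $O(1/D)$ with $U\ket{m}$, and by the limiting property of $\T$ on orthogonal inputs the verification accepts with probability at most $P_{ov}(q,1) + \negl(\lambda)$. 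The gap $\eta(\lambda)$ therefore yields a non-negligible PRU distinguishing advantage for $\mathcal{D}$, contradicting Definition~\ref{def:pru}.

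The principal obstacle is the independence claim in the Haar hybrid: the adversary may issue arbitrary superposition queries entangled with a private register, potentially learning $U$ on a polynomial-dimensional subspace of $\HilD$ in a complicated adaptive way. One must rigorously justify that a Haar random $U$, after its action on this subspace is effectively revealed through the queries, remains uniform on the orthogonal complement, so that $U\ket{m}$ for an orthogonal $\ket{m}$ (even one chosen adaptively based on the query responses, as in the \euf\ case) is unpredictable to $\A$ beyond the trivial $P_{ov}$ bound captured by the test algorithm's error on orthogonal states.
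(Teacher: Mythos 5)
Your proposal is correct and follows essentially the same route as the paper's own proof: a reduction that simulates the unforgeability game with the PRU-or-Haar oracle, obtains the true output on the committed (orthogonal) challenge by issuing one extra oracle query, and feeds it together with $\A$'s forgery into the equality test $\T$ so that the accept bit distinguishes the two hybrids. The independence obstacle you flag for the Haar hybrid is also left implicit in the paper, which simply asserts that estimating the output of a Haar-random unitary on the challenge state is negligible by definition, so your write-up is if anything slightly more careful on that point.
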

\begin{proof}
We prove by contradiction. Let $\A$ be an adversary who wins the $\suf$ game with non-negligible probability (Note that here $P_{ov} = 0$). $\A$ selects a message $m$ before (or after) the learning phase and then outputs the respective $t$ such that it passes the verification test with non-negligible probability. Also by definition of \suf, $m \notmu \rho^{in}$ for $\mu = 1$ and hence the message $\rho_m$ is completely orthogonal to  all $\rho^{in}_i$. Now we construct an adversary $\A'$ who is playing the \pru\ game. Let $\A'$ first query all the learning phase states of $\A$ and then also issue one more query which is $\rho_{m}$. Then $\A'$ calls $\A$ and receives the input-output pair of $(m,t)$ such that $\rho_t$ is non-negligibly close to the actual output, i.e.
\begin{equation}
    F(\rho_t, \Ue\rho_m\Ue^{\dagger}) = \nonnegl(\lambda)
\end{equation}
Now $\A'$ can use this last query as a distinguisher between PRU and a unitary picked from Haar measure since $\A'$ can estimate the output with non-negligible fidelity if the $U_k$ had been picked from the family. Let $\A'$ runs a quantum equality test as described in Definition~\ref{def:test} on the $U_k \ket{\psi}$ obtained in the learning phase and $\rho_t$. Also note that if $U$ is picked from the Haar measure family, the probability of producing the output is negligible by definition. Thus whenever the test shows equality, $\A'$ can conclude that the unitary has been picked from PRU. Thus for $\A'$ we have:
\begin{equation}
    \underset{U \leftarrow U_k}{Pr}[\A'^{U}(1^{\lambda})=1] - \underset{U_{\mu} \leftarrow \mu}{Pr}[\A'^{U_{\mu}}(1^{\lambda})=1]=\nonnegl(\lambda)
\end{equation}
Which is a contradiction and the theorem has been proved. \qed
\end{proof}

\subsection{Selective unforgeable randomised quantum primitives }\label{randomised-const-quantum}
Similar to the classical constructions, for quantum primitives too, we can use randomisation to effectively secure them. The main idea is to select a new unitary transformation for each query using a classical randomness register. In this case, we need to clarify how such randomised quantum oracles can be implemented in a way that the overall transformation remains a specific unitary. 
By recalling the abstract representation of the randomised quantum oracle that we have given in the preliminary, the input state $\ket{\psi_m} = \sum_i \alpha_i\ket{m_i}$ (where $\{\ket{m_i}\}$ is a set of orthonormal bases) is mapped to a state $U(r)\ket{\psi_m} = \sum_i \beta_i(r)\ket{m_i}$ where $U(r)$ depends on the randomness and different for each query i.e. the oracle uses its internal register $\ket{r}_{\Ora}$ to activate different $U(r)$ unitaries. However, for many constructions this randomness value $r$ or a function of it like $g(r)$, will be necessary for verification and hence need to also be outputted. On the other hand, the register $\ket{r}_{\Ora}$ is the internal register of the oracle re-initiated for each query and some problems may arise if the adversary gets access to this register (see Preliminary), thus in order to be able to output this value we expand the query space and we allow the input queries to be $\ket{\mathtt{0}}\otimes\ket{\psi_m}$. We formulate the oracle as follows:
\begin{equation}
\reO_U: \ket{r}_{\Ora}\otimes \ket{\mathtt{0}}\otimes\ket{\psi_m} \rightarrow [\mathcal{I}\otimes\mathcal{I}\otimes U(r)]\ket{r}_{\Ora}\ket{r}\ket{\psi_m}
\end{equation}

Note that for the purpose of our construction, in what follows, we assume that the ancillary state is initiated as a separable state $\ket{\mathtt{0}}$ for simplicity, although if the adversary's ancillary register has not been initiated to zero, the randomness can be XORed to that value. The above oracle can be realised in different ways but we give an explicit example in the circuit model, shown in Figure~\ref{fig:qrandomised-oracle-circuit}. The input to the unitary evaluation of the oracle consists of two parts; one part includes the query and the second part is the internal randomness register which is initiated to a new value or equivalently to a new basis, for each query. This part in general acts as control qubits for the gates in the other part of the register that leads to apply a new overall unitary on the main query state. We note that the randomness register itself will remain untouched throughout the evaluation and finally its value is recorded in the $\ket{\mathtt{0}}$ part of the input query. We note that this last recording part is not in contrast with the no-cloning theorem as the $\ket{r}_{\Ora}$ is always in the computational basis.

\begin{figure}
   \centering
     \includegraphics[width=1\textwidth]{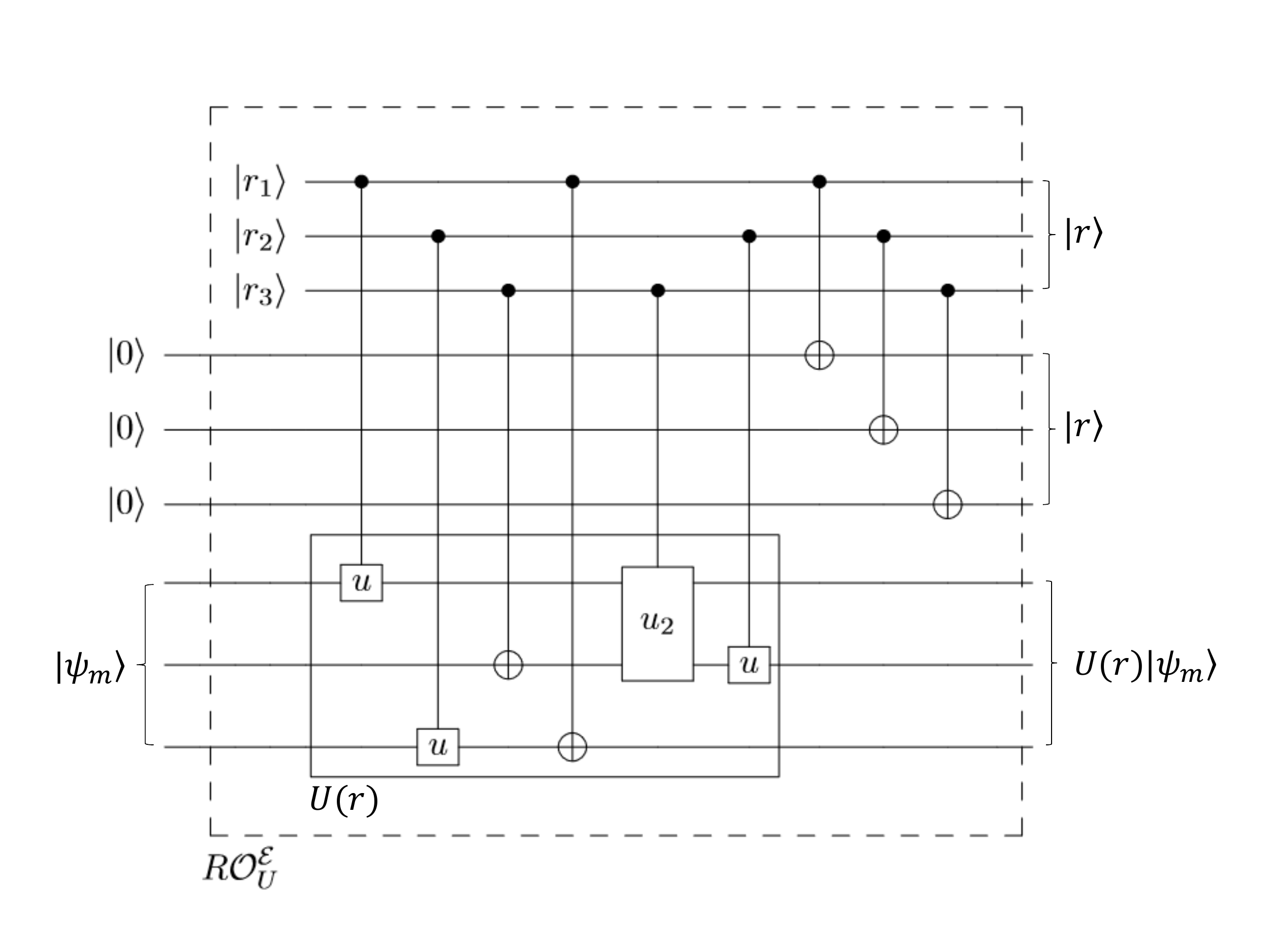}
     \caption{A sample circuit for randomised quantum oracle for quantum primitives. On each input query $\ket{\mathtt{0}}\ket{\psi_m}$, a new randomness is initialised and the random unitary $U(r)$ acts on $\ket{\psi_m}$. The random unitary $U(r)$ consists of single and 2-qubit unitary gates selected at random in the setup phase, from a gate set required to construct any unitary $U(r)$ in the family $\mathcal{U}$ specified by the construction. These single and two-qubit gates are controlled by the randomness values $\ket{r} = \ket{r_1, r_2, r_3}$. In the last step, the classical value of randomness is recorded in the ancillary qubits of the query to be returned for verification.}\label{fig:qrandomised-oracle-circuit}
 \end{figure}

Now it can be seen that in such randomised oracles, the security of the quantum primitive, lies on the assumptions on the family of $U(r)$s generated for each $r$. For instance, it is intuitive that a primitive where $U(r)$ are Haar random unitaries can be secure since the overall adversary's state after issuing polynomial queries to the oracle is almost indistinguishable from a totally mixed state. Although this assumption might be too strong. Hence we give a construction based on \pru\ which is also the quantum analogue of \qprf\ that we have used in our previous classical construction.  
 
\begin{construction}\label{const:quanum-random-pru}
Let $\PRIM=(\ES, \E, \V)$ be a quantum primitive with the evaluation unitary $\Ue: \HilR\otimes\HilD \rightarrow \HilR\otimes\HilD$ where $D$ is the overall dimension of the query and $\HilR$ is a $2^l$ dimensional Hilbert space for the randomness. And let $\lambda$ be the security parameter and $l$ and $\log(D)$ be polynomial in $\lambda$. Also, let $\mathcal{U}_{\pru} = \{U_r\}^{L}_{r=0}$ be a \pru\ family with a cardinality $L$ to be at least $2^l$. The construction is defined as follows:
\begin{itemize}
    \item{\bf Setup:} The required parameters \texttt{param} is generated to instantiate the oracles.
    \item{\bf Evaluation:} The evaluation picks randomness $r \xleftarrow{\text{\$}} \R$ uniformly, initialises the randomness register to $\ket{r}_{\Ora}$ and applies the following unitary, on each input query $\ket{\psi_m} = \sum_i \alpha_i\ket{m_i}$ where each $U(r) = U_r \in \mathcal{U}_{\pru}$
    \begin{equation}\label{eq:random-quantum-oracle-const3}
        \reO_U: \ket{r}_{\Ora}\ket{\mathtt{0}}\ket{\psi_m} \overset{U_{\mathcal{E}}}{\rightarrow} [\mathcal{I}\otimes\mathcal{I}\otimes U(r)]\ket{r}_{\Ora}\ket{r}\ket{\psi_m}
    \end{equation}
    \item{\bf Verification:} The verification oracle calls a quantum test algorithm $\T$ as defined in Definition~\ref{def:test} on $U(r)\ket{\psi_m}\bra{\psi_m}U(r)^{\dagger}$ and the tag state $\rho_t$:
    \begin{itemize}
        \item If $F(\rho_t, U(r)\ket{\psi_m}\bra{\psi_m}U(r)^{\dagger}) = 1 - \negl(\lambda)$ return $\top$ with a probability $1 - \negl(\lambda)$
        \item and $Pr[1 \leftarrow \T[(\Ue\rho_{\delta}\Ue^{\dagger})^{\otimes \kappa_1}, (\Ue \rho_m \Ue^{\dagger})^{\otimes \kappa_2}]] = \negl(\lambda)$ for any state $\rho_{\delta}$ with $\delta^2$-indistinguishable from $\rho_m$.
    \end{itemize}
\end{itemize}
\end{construction}

\begin{theorem}
Construction~\ref{const:quanum-random-pru} is \sufm\ secure for any $\mu \geq 1-\delta^2$.
\end{theorem}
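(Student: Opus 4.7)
The plan is to mirror the structure of Theorem~\ref{th:classical-random-const-prf-secure}, replacing the inter-function independence property of \prf s by the computational independence of independently-keyed unitaries drawn from the PRU family. I would proceed by contradiction, assuming a QPT adversary $\A$ wins $\GCM{\PRIM}{q,\qSel,\mu}(\lambda,\A)$ with non-negligible advantage above $P_{ov}$, and aim to produce either a PRU distinguisher or a direct contradiction with the verification rule.

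First I would set up a game-hopping argument. Game~0 is the real \sufm\ game for Construction~\ref{const:quanum-random-pru}, in which each query is answered by some $U(r)\in\mathcal{U}_{\pru}$ with independently sampled $r$. Game~1 replaces each invoked $U(r)$ by an independent Haar-random unitary. By Definition~\ref{def:pru} the adversary's winning probability differs between the two games by at most a negligible amount, since any non-negligible gap would immediately yield a multi-instance distinguisher between the PRU family and Haar-random unitaries via a standard hybrid over the polynomially many queries. I would then argue, exactly as in the classical proof, that the randomness $r^{*}$ associated with the forgery must coincide with one of the learning-phase values $\{r_i\}_{i=1}^{q}$ up to a negligible loss: otherwise $U(r^{*})$ is information-theoretically independent of $\A$'s view, so the probability of producing a tag that passes the verification test on an unseen Haar-random output is negligible by the second verification condition together with Definition~\ref{def:test}.

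Having reduced to the case $r^{*}=r_{i^{*}}$, the crucial step is to bound the fidelity $\A$ can achieve on the challenge output using only the single query that went through $U(r^{*})$. Without loss of generality write that query as $\ket{\phi_{r^{*}}} = \alpha\ket{\psi_m} + \beta\ket{\Omega}$ with $\mbraket{\psi_m}{\Omega}=0$ and $|\alpha|^{2}\le 1-\mu \le \delta^{2}$, as forced by the $\mu$-distinguishability condition committed in the selective challenge phase. Because, in Game~1, $U(r^{*})$ is Haar-random and independent of every other query's unitary, all information $\A$ holds about $U(r^{*})\ket{\psi_m}$ resides in $U(r^{*})\ket{\phi_{r^{*}}}$; by unitarity, $F(U(r^{*})\ket{\phi_{r^{*}}},U(r^{*})\ket{\psi_m}) = |\alpha|^{2}\le \delta^{2}$. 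A Haar-averaging argument, analogous to the one-dimensional common subspace analysis in Theorem~\ref{th:classical-random-const-prf-secure}, would then show that no CPTP post-processing combining this single query with the independent outputs of the other queries can raise the expected output fidelity with $U(r^{*})\ket{\psi_m}\bra{\psi_m}U(r^{*})^{\dagger}$ above $\delta^{2}+\negl(\lambda)$.

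Finally I would invoke the verification rule of Construction~\ref{const:quanum-random-pru}: a tag state whose fidelity with the true output is bounded by $\delta^{2}$ is exactly the worst case on which the test algorithm accepts with probability at most $P_{ov}$, by Definition~\ref{def:pov-quantum} instantiated with $\rho_{\max}$ realising the extremal overlap. Combining the three bounds, $\A$'s total acceptance probability is at most $P_{ov}+\negl(\lambda)$, contradicting the assumed non-negligible advantage, and the threshold $\mu\ge 1-\delta^{2}$ is seen to be exactly the value that makes this chain tight. The main obstacle I anticipate is the Haar-averaging / decorrelation step: one must rigorously argue, either by a direct calculation over the Haar measure or by a further hybrid reduction to \pru\ localised on the $r^{*}$-branch, that entanglement built up in $\A$'s auxiliary register through queries with $r\neq r^{*}$ cannot be leveraged in superposition to inflate the fidelity extractable on the $r^{*}$-branch beyond the single-query bound $|\alpha|^{2}$; this is the quantum counterpart of the inter-function independence argument in the classical proof, and making it hold against arbitrary entangling adversarial strategies is the real technical burden.
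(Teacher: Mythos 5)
Your proposal is correct in substance and reaches the same conclusion, but it is organised differently from the paper's proof. The paper keeps the PRU family in play throughout: it first invokes the fact that PRUs generate pseudorandom quantum states (citing~\cite{song2017quantum}) to argue that the adversary's joint output is Haar-indistinguishable and hence cannot span more than a one-dimensional subspace of any single $U(r)$; it then splits on whether $r^{*}$ was seen in the learning phase (the unseen case being dismissed as forgery-without-query); and only at the very end does it build a PRU distinguisher $\A'$ that simulates the other $q-1$ queries with locally generated Haar-random states and tests $\rho_t$ against $U\rho_m U^{\dagger}$. You instead front-load the computational step as a single hybrid replacing every $U(r)$ by an independent Haar-random unitary, which leaves a purely information-theoretic core: the single-query fidelity bound $|\alpha|^2\le 1-\mu\le\delta^2$ followed by the verification rule's rejection of $\delta^2$-indistinguishable states. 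Your decomposition is arguably cleaner in that it isolates where the \pru\ assumption is consumed and makes the role of the threshold $\mu\ge 1-\delta^2$ explicit, whereas the paper's route leans on the PRS property to get the one-dimensional-subspace claim and leaves the Haar-case negligibility of the final distinguisher largely asserted. Note, though, that the decorrelation step you flag as the ``real technical burden'' --- showing that entanglement accumulated across the other queries cannot inflate the fidelity on the $r^{*}$-branch beyond $|\alpha|^2+\negl(\lambda)$ --- is not actually discharged in the paper either; the paper's analogue is the unargued claim that the Haar-random branch of the distinguisher succeeds only negligibly. So your honesty about that gap does not put you behind the published argument; both proofs would need the same Haar-averaging lemma to be fully rigorous.
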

\begin{proof}
We prove by contradiction. Let $\A$ be a QPT adversary who plays the \sufm\ game where the evaluation oracle is as shown in the equation(~\ref{eq:random-quantum-oracle-const3}), and wins with non-negligible probability in the security parameter \emph{i.e.} $\A$, wins the game by producing a valid tag $\rho_t$ for their selected message $\rho_m$ and randomness $r^*$ with the following probability, after interacting with the oracle in the learning phase:
\begin{equation}
    Pr[1\leftarrow \GCM{\F}{\qSel, \mu}(\lambda, \A)] - P_{ov} = \nonnegl(\lambda)
\end{equation}
Where the $P_{ov} = Pr[1 \leftarrow \T(\rho^{out}_{max})^{\otimes \kappa_1}, (\Ue \rho_m \Ue^{\dagger})^{\otimes \kappa_2}]$ according to Definition~\ref{def:pov-quantum}, and $\rho^{out}_{max}$ is query with maximum allowed overlap from $\mu$-distinguishability condition. Since the construction implies that $P_{ov} = \negl(\lambda)$, this means:
\begin{equation}\label{eq:const3-adversary-wins-probability}
    Pr[1\leftarrow \GCM{\F}{\qSel, \mu}(\lambda, \A)] = \nonnegl(\lambda)
\end{equation}
Consequently, $\A$ can produce an output $\rho_t$ with non-negligible fidelity with the actual output $\U(r^*)\rho_m\U(r^*)^{\dagger}$, for a $\U_{r^*} \in \mathcal{U}_{\pru}$.
Now we consider two cases. Either $r^*$ is one of the randomnesses that $\A$ has received during the learning phase, which means $\A$ can closely approximate the output of a random unitary $U(r^*)$ from a single query, or $r^*$ is a new randomness value, for a new random unitary $U(r^*)$ where $\A$ has no query on it. We will show that each case leads to a contradiction.

First, we show that $\A$'s output state after the learning phase, i.e. $\sigma_{out}$ cannot include more than a one-dimensional subspace of each of the $U(r)$ unitaries. To cover a subspace with a dimension of at least two, $\A$ needs to find a common output basis from two different queries. On the other hand, we note that as shown in~\cite{song2017quantum}, any PRUs are generators of Pseudorandom Quantum States (PRS) that are a family of quantum states computationally indistinguishable from Haar measure. Hence the joint output states $\sigma_{out}$ is also indistinguishable from Haar random states for $\A$ who is a QPT adversary. Now if $\A$ can find a common output subspace, it means that there are at least two states, corresponding to the bases of the 2-dimensional subspace, that are indistinguishable (or $0$-distinguishable according to Definition~\ref{def:dist}), and hence $\A$ can use those queries to distinguish the distribution of states $\sigma_{out}$ and a Haar random distribution which contradicts the fact that the oracle will generate a PRS set of states after $q$ queries.
Now we show that each case will lead to a contradiction. We start with the second case where if $\A$ produces an indistinguishable (concerning $\T$) output for a random unitary with no query, then $\A$ can perform the learning phase locally without any interaction with the oracle and hence produce the output of any unitary picked from a family indistinguishable to Haar measure, which is a clear contradiction.
For the first case, relying on the previous argument, we rewrite the learning phase states of the $\A$ after $q$ queries, as follows:
\begin{equation}
    \sigma_{in} = \ket{\phi_{r^*}}\bra{\phi_{r^*}} \otimes \sigma^{q-1}_{in}, \quad \sigma_{out} = U_{r^*}\ket{\phi_{r^*}}\bra{\phi_{r^*}}U_{r^*}^{\dagger} \otimes \sigma^{q-1}_{out}
\end{equation}
where $\ket{\phi_{r^*}}$ is the query associated to $U_{r^*}$ for which $\A$ produces a forgery and $\sigma^{q-1}_{in}$ and $\sigma^{q-1}_{out}$ are the input and output states of the remaining $q-1$ query respectively. We note that $\sigma^{q-1}_{out}$ consists of $q-1$ quantum states with a distribution $\delta$ over a $D'$-dimensional Hilbert space s.t. $\delta$ is Haar-indistinguishable. Furthermore, the ancillary register where the $r$ is encoded consists of $q$ independent random values. Now let us construct an adversary $\A'$ who is a \pru\ distinguisher. Let $\A'$ interact with a unitary $U$ either selected from $\mathcal{U}_{\pru}$ or from Haar measure, and query a state $\ket{\phi_{r^*}}$ as described above, and returns $U\ket{\phi_{r^*}}$ together with an ancillary register $\ket{r}$ where $r$ picked uniformly at random. Then $\A'$ also locally creates $q-1$ Haar-random states and returns to $\A$ as the $ \sigma^{q-1}_{out}$. Then $\A'$ also queries $\rho_m$ from the oracle. Now $\A'$ uses the same test algorithm $\T$ to check the output of $\A$ i.e. $\rho_t$ with the the oracle's output for the last query which is $U\rho_m U^{\dagger}$. From equation~(\ref{eq:const3-adversary-wins-probability}), we know that this probability is non-negligible, while as for a Haar random unitary the probability is negligible, thus can conclude that 
\begin{equation}
    |\underset{r \leftarrow \R}{Pr}[\A'^{U_r}(1^{\lambda})=1] - \underset{U \leftarrow Haar}{Pr}[\A'^U(1^{\lambda})=1]| = \nonnegl(\lambda).
\end{equation}
which is a contradiction and the theorem has been proved. \qed
\end{proof}

\subsection{Universal unforgeable quantum primitives}
It has been previously shown in~\cite{arapinis2019quantum} that certain quantum primitives, like quantum PUFs where their evaluation function satisfies \uu\ condition, can be secure \emph{wrt.} this level of unforgeability\footnote{There as the unforgeability has been studied in the context of PUFs this level of unforgeability is called \emph{selective unforgeability} while as here we call it universal unforgeability.}. Here we generalise this result for general quantum primitives to the \pru\ assumption.

Now we establish our general positive result for quantum primitives.

\begin{theorem}
Deterministic quantum \pru\ and \uu\ schemes are \uuf\ secure.  
\end{theorem}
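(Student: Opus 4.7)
The plan is to prove the theorem in two parts, one for each assumption, since \pru\ and \uu\ admit rather different reductions.

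For the \pru\ case, I would simply invoke the hierarchy already established in the framework: Theorem~\ref{th:suf-pru} shows that any deterministic \pru\ quantum primitive is \suf\ (equivalently $1$-qGSU) secure, and Theorem~\ref{th:suf-uuf} shows that $\mu$-qGSU implies \uuf\ for any valid $\mu$ (in particular $\mu = 1$). Composing these two yields \uuf\ security for deterministic \pru\ schemes with no additional work.

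For the \uu\ case a direct reduction is needed. Suppose for contradiction that $\A$ is a QPT adversary winning $\GCM{\F}{\qUni}(\lambda, \A)$ with non-negligible probability against an oracle implementing some $U \leftarrow U^u$. After the learning phase, upon receiving the Haar-random challenge $\rho_m = \ket{\psi_m}\bra{\psi_m}$, $\A$ produces a forgery $\rho_t$ that passes the test $\T$; by the soundness conditions of Definition~\ref{def:test}, this forces $F(\rho_t, U\rho_m U^\dagger)$ to be non-negligible with non-negligible probability over the randomness of the game. I would then package the entire interaction --- all $q = \text{poly}(\lambda)$ learning-phase queries plus the postprocessing --- into a single QPT estimator $\A'$ that, on input $\ket{\psi_m}$ and with oracle access to $U$, outputs the forgery. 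Applying the \uu\ assumption to $\A'$,
\begin{equation*}
\left| \underset{U \leftarrow U^u}{Pr}[F(\A'(\ket{\psi_m}), U\ket{\psi_m}) \geq \nonnegl(\lambda)] - \underset{U_\mu \leftarrow \mu}{Pr}[F(\A'(\ket{\psi_m}), U_\mu\ket{\psi_m}) \geq \nonnegl(\lambda)] \right| = \negl(\lambda),
\end{equation*}
so since the left-hand probability is non-negligible by construction, the right-hand one must be as well.

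The hard part will then be to rule out this right-hand probability: for a Haar-random unitary $U_\mu$ and a Haar-random challenge $\ket{\psi_m}$, no QPT adversary --- even with polynomial oracle access to $U_\mu$ during the learning phase --- can output a state whose fidelity with $U_\mu\ket{\psi_m}$ exceeds $\nonnegl(\lambda)$ with non-negligible probability. The key observation is that in a Hilbert space of dimension $D = 2^{\text{poly}(\lambda)}$, conditioned on the transcript of the polynomially many learning-phase queries, the image $U_\mu\ket{\psi_m}$ remains (up to negligible bias) Haar-distributed on the subspace orthogonal to the span of the queried outputs, which still has exponential dimension. A standard concentration bound on the sphere (Levy's lemma) then caps the probability that any fixed guess has fidelity at least $\nonnegl(\lambda)$ with such a state by a negligible quantity, contradicting the consequence above and completing the proof.
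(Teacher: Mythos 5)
Your \pru\ half is exactly the paper's argument: compose the hierarchy (Theorem~\ref{th:suf-pru} gives \suf\ security for \pru\ primitives, Theorem~\ref{th:suf-uuf} gives \suf\ $\Rightarrow$ \uuf), and in fact you cite the correct intermediate result where the paper's own proof mis-cites Theorem~\ref{th:qprf-1euf} (the \qprf\ statement) in place of Theorem~\ref{th:suf-pru}. Where you genuinely diverge is the \uu\ half: the paper does not prove it at all, but instead defers to the prior result of~\cite{arapinis2019quantum} mentioned just before the theorem, whereas you attempt a self-contained reduction. Your reduction outline is sound --- packaging the game transcript into an estimator $\A'$, using the soundness limits of Definition~\ref{def:test} to convert ``passes $\T$'' into ``non-negligible fidelity with $U\ket{\psi_m}$'', invoking Definition~\ref{def:uu} to transfer the success probability to a Haar-random unitary, and then arguing that estimating $U_\mu\ket{\psi_m}$ for Haar-random $U_\mu$ and Haar-random $\ket{\psi_m}$ is infeasible. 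What your approach buys is independence from the external reference; what it costs is that the final concentration step is only sketched, and that step is essentially the entire technical content of the cited result: you must argue that, since the learning-phase queries are issued \emph{before} the challenge is revealed, $\ket{\psi_m}$ is with overwhelming probability nearly orthogonal to the polynomial-dimensional queried subspace, so that conditioned on the transcript the target $U_\mu\ket{\psi_m}$ is (up to negligible bias) Haar-distributed on an exponential-dimensional complement, and the adversary's single unmeasured copy of $\ket{\psi_m}$ gives no handle on it. If you flesh out that Levy-type bound carefully, your proof stands as a more complete treatment than the paper's; as written it is a correct plan with the decisive lemma still owed.
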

\begin{proof}
We prove this implication independent of the results of~\cite{arapinis2019quantum},  from our previously established results. From Theorem~\ref{th:qprf-1euf} we know that \pru\ primitives are \suf\ secure. Also from Theorem~\ref{th:suf-uuf} we have shown that \uuf\ is weaker than \suf. Thus any \pru\ primitive is \uuf\ secure.\qed
\end{proof}

\section{Conclusion and future directions}
\label{sec:conclusion-disc}
We have presented new fine-grained definitions of quantum unforgeability that unify different levels of unforgeability, different types of primitives, and better capture the properties of quantum adversaries. In particular, the parameterised definitions for selective and existential unforgeability lead to some non-trivial no-go results. More precisely, our Theorem~\ref{th:sel-qCM} shows that non-randomised MAC schemes such as HMAC and NMAC cannot satisfy existential and selective unforgeability except for $\mu = 1$ and hence are always vulnerable against more powerful quantum adversaries. On the other hand, our randomised construction shows a fix to this problem and presents an approach towards proper randomization of classical primitives such that they can resist emulation type of attacks. Furthermore, we have shown that a similar technique can be applied to quantum primitives to construct randomised \sufm\ secure schemes (\ref{randomised-const-quantum}). Nevertheless, constructing efficient randomised oracles for quantum primitives using random quantum circuits or t-designs is an interesting future research direction. We have also shown that universal unforgeability is a level of security that both deterministic quantum and classical primitives can achieve. Although this is a weaker definition, it is enough for many practical purposes where unforgeability is the desired property, such as identification. Finally, it would be interesting to see the applicability of our definition and framework in practice to specific quantum primitives such as quantum money and classical public-key primitives such as digital signatures, which we also leave as a future research direction. A summary of all the possibility and impossibility results in this paper have been given in Table~\ref{table:summary}.

\begin{table}
\centering
\captionsetup{font=small}
\resizebox{\textwidth}{!}{
\begin{tabular}{|c|c|c|c|c|c|}
\hline
\backslashbox{Primitives}{qGU.level} & \euf & \eufm $(\mu \neq 1)$ & \suf & \sufm $(\mu \neq 1)$ & \uuf \\ \hline
\multirow{2}{*}{Classical}& \multirow{2}{*}{\qprf} & \multirow{2}{*}{$\times$} & \multirow{2}{*}{\qprf} & det: $\times$ & \multirow{2}{*}{\qprf}\\\cline{5-5}
    & & & & rand: Construction~\ref{const:classical-random-const-prf} & \\
\hline
\multirow{2}{*}{Quantum}& \multirow{2}{*}{\pru} & \multirow{2}{*}{$\times$} & \multirow{2}{*}{\pru} & det:$\times$ & \multirow{2}{*}{\pru, \uu}\\\cline{5-5}
    & & & & rand: Construction~\ref{const:quanum-random-pru} & \\
\hline
\end{tabular}}
\caption{Summary of the possibility and impossibility results in the quantum Generalised Unforgeability definition for classical and quantum primitives. \qprf\ and \pru\ refer to non-randomised primitives with an evaluation selected from such families, and $\times$ denotes that there are no primitives secure in that level of unforgeability.}
\label{table:summary}
\end{table}



\bibliographystyle{ieeetr}
\bibliography{qunforgeability.bib}
\newpage

\appendix
\section*{\LARGE Supplementary Materials}

\section{Weak and strong Quantum Generalised Unforgeability}\label{sec:weak-strong-unf}
We have formally defined our different instances of unforgeability definition as a quantum analogue of \emph{weak unforgeability}. However, the same definition with small modification can be applied to capture \emph{strong unforgeability}. First, we note that the difference between strong and weak unforgeability is only relevant to randomised primitives and for non-randomised primitives these definitions are equivalent. In the classical strong unforgeability, it is sufficient for the adversary to output a new pair to win the game and hence the adversary is allowed to pick one of the learning phase messages as the challenge and produce a new output with a fresh randomness. In our definition, it is sufficient to expand the $\mu$-distinguishability condition to the overall input of the oracle including the randomness i.e. adversary's challenge state $\ket{r^*}\bra{r^*}\otimes\rho_m$ needs to be $\mu$-distinguishable from all the learning phase states with their randomness registers which can be written as $\ket{r_i}\bra{r_i}\otimes\rho^{in}_i$. Once again for $\mu=1$ this will capture the same definition as is expected.

\section{Security proofs}
\subsection{Proof of Theorem~\ref{th:suf-uuf}: \sufm\ implies \uuf}\label{app:proof-suf-uuf}
\begin{proof}
In order to show this implication we will show that if a QPT adversary $\A$ can win in \uuf, then $\A$ can also win against \sufm. Although for simplicity we restrict the proof for the case of $\mu=1$ and the generalisation to any $\mu$ is straightforward from the hierarchy of the definition for different $\mu$ showed in the previous section. Also we recall that \suf\ and \euf\ are equivalent. Let $\A$ play the game $\GCM{\F}{\qUni}(\lambda, \A)$ by picking a set of learning phase state $\{\ket{\phi_i}\}^K_{i=1}$. Let the dimension of the unitary oracle $\eO$ be $D = 2^n$ and let the subspace of $\sigma_{in}$ be of dimension $d=poly(n)$. If $\A$ wins the game, then the average probability of $\A$ generating the an acceptable output for any $x \in \M$ picked uniformly at random by $\C$ is non-negligible:
\begin{equation}
    Pr[1\leftarrow \GCM{\F}{\qUni}(\lambda, \A)] = \underset{x \in \M}{Pr}[1\leftarrow \A(x)] = \nonnegl(\lambda).
\end{equation}
where $\underset{x \in \M}{Pr}[1\leftarrow \A(x)]$ denotes the success probability of the adversary wining the game for input $x$. Now to be able to translate this game to the \suf\ game, first we need to make sure that the set of states that $\A$ picks the challenge from them, satisfy the distinguishability condition for $\mu=1$ i.e. they are orthogonal to all the learning phase states. Let $\M'$ be the set of all the challenges with no overlap with any of the learning phase states $\rho^{in}_i$. Then we can rewrite the average success probability as follows:
\begin{equation}
\begin{split}
    \underset{x \in \M}{Pr}[1\leftarrow \A(x)] & = \underset{x \in \M'}{Pr}[1\leftarrow \A(x)]Pr[x \in \M'] + \underset{x \not\in \M'}{Pr}[1\leftarrow \A(x)]Pr[x \not\in \M'] \\
    & = \nonnegl(\lambda).
\end{split}
\end{equation}
since the dimension of the subspace that $\sigma_{in}$ spans is $d$ and it is polynomial with respect to the size of $\M$ then $\frac{|\M'|}{|\M|} \approx 1$. Hence $Pr[x \in \M'] \approx 1$ but $Pr[x \not\in \M'] = 1-Pr[x \in \M'] = \negl(\lambda)$. As a result the second term will be negligible and for the whole expression to become non-negligible, the following should hold:
\begin{equation}
\underset{x \in \M'}{Pr}[1\leftarrow \A(x)] = \nonnegl(\lambda).
\end{equation}
Now let $\A'$ be an adversary who wants to win the game $\GCM{\F}{\qSel, \mu}(\lambda, \A')$ by using $\A$. As $\A'$ picks the challenge of their choice, we will show that there is a strategy for $\A'$ to win the game relying on the average success probability of $\A$ being non-negligible over $\M'$. But also as $\A'$ is a QPT, we will show there exist a poly size subspace of $\M'$ in which $\A'$ will win with non-negligible probability. First we assume that $\M'$ is partitioned into $K$ different subsets (or subspaces) $S_i$ with equal size (or dimension in the quantum case) $|S_1|=\dots =|S_K|= l = poly(\lambda)$. Note that this partitioning is only for simplicity and any random partitioning of $\M'$ into the equal size subspace will be enough for our purpose. Now let $\A'$ pick one of the subsets of message space which consists of picking one of the $S_i$ with probability $\frac{1}{K}$. We want to show that if $\A'$ picks the $S_i$ at random and calls $\A$ on that $S_i$ the probability that in the picked subspace the following condition holds is non-negligible:
\begin{equation}\label{eq:suf-uuf-subset-prob}
\underset{x \in S_i}{Pr}[1\leftarrow \A(x)] = \nonnegl(\lambda)
\end{equation}
If this is the case, then by the definition of the average probability there exist at least one $x^*$ for which the $Pr[1\leftarrow \A(x^*)] = \nonnegl(\lambda)$ and hence the $\A'$ has won the game with a non-negligible probability. Thus we need to find the number of the success probability of $\A'$ picking a desirable subset. This probability is given by:
\begin{equation}
    Pr_{succ} = \frac{\#(S_i: \underset{x \in S_i}{Pr}[1\leftarrow \A(x)] = \nonnegl(\lambda))}{K} = \frac{Q}{K}
\end{equation}
where $Q$ denotes the number of subsets $S_i$ which satisfy the condition and $K = O(|\M'|)$. We then only need to show that $\frac{Q}{K}$ is non-negligible in the security parameter. For simplicity let us replace average probability of $\A$ in wining the game over $\M'$, with the expected value of wining probability of $\A$ over all the different elements of $\M'$ i.e.
\begin{equation}
\underset{x \in \M'}{Pr}[1\leftarrow \A(x)] = \nonnegl(\lambda) \Rightarrow \underset{\M'}{\mathbb{E}}[\A(x)] = \nonnegl(\lambda)
\end{equation}
Then we rewrite the expectation value in terms of all the subsets of $\M'$. As $\M' = S_1 \cup S_2 \cup \dots \cup S_K$, we have:
\begin{equation}
    \underset{\M'}{\mathbb{E}}[\A(x)] = \frac{1}{K}\sum^K_{i=1} \mathbb{E}_i = \nonnegl(\lambda)
\end{equation}
where $\mathbb{E}_i = \underset{S_i}{\mathbb{E}}[\A(x)]$. We then rearrange all the $\mathbb{E}_i$ descending such that the $Q$th term shows the last smallest $\mathbb{E}_i$ for which the condition is satisfied. Hence we have:
\begin{equation}
    \underset{\M'}{\mathbb{E}}[\A(x)] = \frac{1}{K}\sum^Q_{i=1} \mathbb{E}_i + \frac{1}{K}\sum^K_{i=Q+1} \mathbb{E}_i = \nonnegl(\lambda)
\end{equation}
The above equality holds if at least one of the two sums is non-negligible. If the first sum is non-negligible we have:
\begin{equation}
    \frac{1}{K}\sum^Q_{i=1} \mathbb{E}_i \geq \frac{Q\mathbb{E}_Q}{K}
\end{equation}
As $\mathbb{E}_i$s have been ordered and $\mathbb{E}_Q$ is the smallest one which is still non-negligible. Then we can conclude that:
\begin{equation}
    \frac{Q}{K} = \nonnegl(\lambda)
\end{equation}
which is what we wanted to show. The second case is when the first sum is negligible and the second sum needs to be non-negligible for the equality to hold. Similar to the previous case due to the descending ordering, we have: 
\begin{equation}
    \frac{1}{K}\sum^K_{i=Q+1} \mathbb{E}_i \leq \frac{(K-Q)\mathbb{E}_{Q+1}}{K}
\end{equation}
But followed by our assumption the $\mathbb{E}_{Q+1}$ is itself negligible and $0 < \frac{K-Q}{K} < 1$, thus this sum can never converge to a non-negligible function of $\lambda$. Hence we conclude that necessarily the first sum, and as a result $\frac{Q}{K}$ is non-negligible. Thus we have shown the equation~\ref{eq:suf-uuf-subset-prob}, and there exist a strategy for $\A'$ to win the game by calling $\A$. This concludes that \suf (\sufm) implies \uuf\ and the proof is complete.
\end{proof}

\subsection{Proof of Theorem~\ref{th:sel-qCM}: \sufm\  impossibility for deterministic primitives}\label{app:proof-selective-unf-nogo}
In this section we give a proof of Theorem~\ref{th:sel-qCM} with full details and probability analysis.
\begin{proof}
We show there is a QPT adversary $\A$ that wins the game with non-negligible probability. Let $\Ue$ be the unitary transformation corresponding to $\eO$. $\A$ runs the algorithm pictured in Figure~\ref{fig:qsel-attack}. To show that $\A$ wins the game we need to show the probability of producing a correct response for either $m$ by $\A$ is non-negligibly higher than $P_{ov}(q,\mu)$ as given by Theorem~\ref{def:pov-standard-orc}. After interacting with the oracle in the learning phase, $\A$ has the following states representing their queries and responses:
\begin{equation}
    \ket{\phi_1}\otimes\ket{\phi_r} \quad
    \ket{\phi^{out}_1}\otimes\ket{\phi^{out}_r}
\end{equation}

Now $\A$ can run a quantum emulation algorithm by setting the $\ket{\phi_r}$ as the reference state, and picking the target state to be $\ket{\psi} = \ket{m, 0}$. $\A$ uses and emulation algorithm with one block and relying on Theorem~\ref{th:qe-fins}, the output state of Stage 1 of the QE algorithm is:
\begin{equation}
\begin{split}
    \ket{\chi_f} =& \mbraket{\phi_r}{\psi} \ket{\phi_r}\ket{0} + \ket{\psi}\ket{1} - \mbraket{\phi_r}{\psi}\ket{\phi_r}\ket{1} -2\mbraket{\phi_1}{\psi}\ket{\phi_1}\ket{1}\\ & +2\mbraket{\phi_r}{\psi}\mbraket{\phi_r}{\phi_1}\ket{\phi_1}\ket{1}.
\end{split}
\end{equation}
Note that $\mbraket{\phi_1}{\psi} = 0$ and $|\mbraket{\psi}{\phi_r}|^2 = \gamma^2$ and $|\mbraket{\phi_1}{\phi_r}|^2 = 1 - \gamma^2$. Then according to Theorem~\ref{th:qe-fidel}, the fidelity of the emulation for both states is:
\begin{equation}
    F(\ket{\omega}\bra{\omega}, \Ue^{\dagger} \ket{\psi}\bra{\psi} \Ue) \geq \gamma^2(1+4(1-\gamma^2)^2)|
\end{equation}

\begin{boxfig}{$(\qSel, \mu)$-QEA: adversary's algorithm against game $\GCM{\F}{\qSel, \mu}(\lambda, \A)$}{fig:qsel-attack}\underline{\emph{$(\qSel, \mu)$-QEA}}
\\\\
{\bf Challenge phase:}
  \begin{itemize}
    \item pick $m$ as the challenge\footnote{The challenge state is $\ket{m, 0}$ which is one of the computational basis of $\Ue$.}
  \end{itemize}
{\bf First learning phase:}
  \begin{itemize}
    \item choose $\ket{\phi_1} = \ket{m', 0}$
    \item choose $\ket{\phi_r} = \sqrt{1-\gamma^2}\ket{m', 0} + \gamma \ket{m, 0}$\footnote{Set $\gamma$ to $\gamma_{max} = \sqrt{1-\mu}$ such that $m$ satisfies $m \notmu \sigma_{in}$.}
    \item Interact with the evaluation oracle $\eO_f$ and generate $\sigma$\footnote{We have $\sigma_{in} = \ket{\phi_1}\otimes\ket{\phi_r}$ as a known quantum state, and $\sigma_{out} = \ket{\phi^{out}_1}\otimes\ket{\phi^{out}_r}$ as an unknown quantum state where $m$ and $m'$ are classical bitstrings.}
  \end{itemize}
{\bf Guess phase:}
  \begin{itemize}
    \item run the quantum emulation algorithm:
    \item $\ket{\omega} \gets QE(m, \sigma_{in}, \sigma_{out})$\footnote{set the reference state of QE to $\ket{\phi_r}$.}
    \item measure $\ket{\omega}$ in the comp. basis and get $t$:
    \item output $(m, t)$
  \end{itemize}
\end{boxfig}

In general, $\gamma^2$ which is the overlap between the challenge state and the learning phase state can be as large as $1 - \mu$ allowed by the definition, thus we set the maximum allowed value of overlap which is $\gamma = \gamma_{max} = \sqrt{1 - \mu}$. Now we need to also determine $P_{ov}$ and to show whether the adversary can boost the success probability by a non-negligible value. Here one of the queries is orthogonal to the challenge and there is only one query ($\ket{\phi_r}$) with overlap, thus according to Theorem~\ref{def:pov-standard-orc} we have $P_{ov}(2, \mu) = 1 - \mu^2$. As a result
\begin{equation}
\begin{split}
    Pr[1\leftarrow \GCM{\F}{\qSel, \mu}(\lambda, \A)] - P_{ov} & = (1 - \mu)[1+4(1 - (1 - \mu))^2] - (1 - \mu^2)\\
    & = \mu(1-\mu)(4\mu - 1)
\end{split}
\end{equation}
Since $\frac{1}{4} + \nonnegl(\lambda) \leq \mu \leq 1 - \nonnegl(\lambda)$, then all the terms are non-negligible in the security parameter and this concludes the proof.
\end{proof}

\subsection{Proof of \sufm\ security for Construction~\ref{const:classical-random-const-prf} with \qprf}\label{app:proof-randomised-suf}
In this section we give a complementary proof for a \qprf\ based construction that does not need the computational definition of inter-function independence defined in Definition~\ref{def:computational-function-pairwise}. Instead, we establish the following lemma for truly random functions:

\begin{lemma}\label{lemma:pairwise-property-random-function}
Let $F: \X\rightarrow \Y$ be the family of all the functions with domain $\X$ and range $\Y$, where $\X=\{0,1\}^n$ and $\Y=\{0,1\}^m$. For any two functions $f$ and $g$ picked uniformly at random from $F$, the following pairwise property holds:
\begin{equation}\label{eq:random-function-pairwise}
    \forall x \in \X: \quad
    \underset{f, g \leftarrow F}{Pr}[f(x) = g(x)] = \negl(m)
\end{equation}
\end{lemma}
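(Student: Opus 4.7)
The plan is to observe that the family $F$ of all functions from $\X$ to $\Y$ can be decomposed coordinate-wise: specifying a function $f\in F$ is equivalent to independently specifying $f(x)\in\Y$ for each $x\in\X$. Consequently, drawing $f\leftarrow F$ uniformly at random is the same as, for each input $x$, drawing $f(x)$ independently and uniformly from $\Y$. This decomposition will be the workhorse of the argument and reduces the statement to a simple collision probability on a single coordinate.

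Next, I would fix an arbitrary $x\in\X$ and use the above decomposition to conclude that $f(x)$ and $g(x)$ are two independent uniform samples from $\Y$ when $f$ and $g$ are drawn independently and uniformly from $F$. By a standard union over the coincident values in $\Y$,
\begin{equation*}
\underset{f,g\leftarrow F}{\Pr}[f(x)=g(x)] \;=\; \sum_{y\in\Y}\Pr[f(x)=y]\cdot\Pr[g(x)=y] \;=\; |\Y|\cdot\frac{1}{|\Y|^{2}} \;=\; \frac{1}{2^{m}}.
\end{equation*}
Since $2^{-m}$ is negligible in $m$, this immediately gives the claimed bound, and because $x$ was arbitrary the statement holds for all $x\in\X$ simultaneously with the same negligible function.

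There is no real obstacle here since the argument is essentially a counting observation: the only subtlety worth making explicit is that the independence of $f(x)$ and $g(x)$ follows because (i) $f$ and $g$ are drawn independently, and (ii) for a uniformly random function the value at any fixed coordinate is marginally uniform on $\Y$. These two facts suffice and no further structure of $F$ is used. This lemma will then be invoked as the information-theoretic analogue that justifies why, in the earlier computational argument, replacing $F(k,\cdot)$ and $F(k',\cdot)$ by truly random functions yields only a $2^{-m}$ collision probability on any adversarially chosen $x$.
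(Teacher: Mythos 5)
Your proof is correct and follows essentially the same route as the paper: fix $x$, use the independence of $f$ and $g$, and sum the collision probability over all values of $\Y$ to get $|\Y|\cdot|\Y|^{-2}=2^{-m}$. Your version is in fact slightly cleaner, since invoking the coordinate-wise decomposition gives the exact marginal $\Pr[f(x)=y]=2^{-m}$ directly, whereas the paper counts functions explicitly (with a small slip in the numerator, $(2^m-1)^{2^n-1}$ in place of $(2^m)^{2^n-1}$) and only reaches the answer up to an approximation.
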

\begin{proof}
First we calculate the probability of selecting a random $f$ such that $f(x) = c$ where $c \in \Y$ is a specific element of the range. This probability is equal to the number of all the functions which return $c$ on input $x$ divided by number of all the functions in $F$ which is:
\begin{equation}
    \underset{f}{Pr}[f(x) = c] = \frac{(2^m - 1)^{(2^n - 1)}}{(2^m)^{2^n}} = \frac{1}{(2^m - 1)}(1 - \frac{1}{(2^m)^{2^n}}) \approx \frac{1}{(2^m - 1)}
\end{equation}
Since $g$ has also been picked uniformly and independently from $f$, the same probability holds for $g$. As a result $\underset{f}{Pr}[f(x) = c] = \underset{f}{Pr}[g(x) = c]$. Now we are interested in the probability where $f$ and $g$ simultaneously return $c$ which is:
\begin{equation}
    \underset{f, g}{Pr}[f(x) = g(x) = c] = \underset{f, g}{Pr}[f(x) = c \wedge g(x) = c] = (\underset{f}{Pr}[f(x) = c])^2 = \frac{1}{(2^m - 1)^2}
\end{equation}
Finally, we since we are not interested in any particular $c$, we get the following probability by considering all $c \in \Y$:
\begin{equation}
    \underset{f, g}{Pr}[f(x) = g(x)] =  |\Y|\times(\underset{f}{Pr}[f(x) = c])^2 = \frac{2^m}{(2^m - 1)^2} \approx \frac{1}{2^m} = \negl(m)
\end{equation}
Thus the proof is complete. \qed
\end{proof}

\begin{theorem}
Construction~\ref{const:classical-random-const-prf} where $F$ is a \qprf, is \sufm\ secure for any $\mu$.
\end{theorem}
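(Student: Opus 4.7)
The plan is to follow the structure of Theorem~\ref{th:classical-random-const-prf-secure} but replace its reliance on the computational inter-function independence of Definition~\ref{def:computational-function-pairwise} with a hybrid argument that reduces to the information-theoretic pairwise property of Lemma~\ref{lemma:pairwise-property-random-function}. Suppose for contradiction that a QPT adversary $\A$ wins the $\mu$-qGSU game against Construction~\ref{const:classical-random-const-prf} with advantage non-negligibly above $P_{ov}(q, \mu) = 1-\mu^q$. The key observation is that, in the construction, each query is answered with $F(k\oplus r_i, \cdot)$ for an independently sampled uniform $r_i$; hence each effective key $k \oplus r_i$ is itself uniform and mutually independent across queries, and may be replaced by a random function one query at a time.

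First I would introduce a sequence of hybrids $H_0, H_1, \ldots, H_{q+1}$, where $H_0$ is the real game and $H_i$ replaces the functions $F(k\oplus r_1,\cdot), \ldots, F(k\oplus r_i,\cdot)$ used on the first $i$ (evaluation or verification) queries by truly random functions drawn fresh per query. A standard reduction shows that any efficient distinguisher between $H_{i-1}$ and $H_i$ yields a QPT distinguisher against the underlying qPRF: the reduction samples all other $r_j$'s and all other per-query keys itself, embeds its single-key qPRF/random-function oracle only on the $i$-th query, and returns the random $r_i$ to $\A$ as the classical side information specified by the construction. By a union bound over $q+1 = \mathrm{poly}(\lambda)$ hybrids, $\A$'s advantage in $H_{q+1}$ remains non-negligibly above $P_{ov}(q, \mu)$.

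Next, in $H_{q+1}$ every query is answered by an independently sampled truly random function, so Lemma~\ref{lemma:pairwise-property-random-function} applies directly: for any input $x$ and any two such functions $f_i, f_j$, we have $\Pr[f_i(x) = f_j(x)] = \negl(m)$. Reproducing the algebraic argument from the proof of Theorem~\ref{th:classical-random-const-prf-secure}, this forces the probability that $\A$ finds two distinct queries whose output unitaries share a common output basis to be negligible, so $\A$'s output register covers at most a one-dimensional subspace of each per-query unitary (with overwhelming probability). Focusing on the single query associated with the forgery randomness $r^\ast$, write it as $\alpha \ket{m^\ast, z, 0} + \beta\ket{\Omega, 0}$ with $\mbraket{m^\ast, z}{\Omega} = 0$; the $\mu$-distinguishability condition caps $|\alpha|^2 \leq 1-\mu$, and unitarity of $U_{k\oplus r^\ast}$ bounds the probability that measuring the image produces the correct tag by $|\alpha|^2$. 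Hence $\A$'s winning probability in $H_{q+1}$ is at most $1-\mu = P_{ov}(1, \mu) \leq P_{ov}(q, \mu)$, contradicting the assumed non-negligible advantage.

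The main obstacle will be justifying the per-query hybrid in the quantum setting: the reduction must faithfully simulate $\A$'s superposition queries to the $i$-th oracle call using only its single external oracle (either the qPRF or a random function), while locally sampling independent classical $r_j$'s and preparing the standard-oracle responses for all other queries. Since $r_i$ is returned to $\A$ in the clear and the internal randomness register is reinitialised per query (as noted in Section~\ref{sec:prelim-oracles}), no coherent cross-query dependencies arise, and the swap is clean. Once this hybrid is in place, the rest of the argument is essentially a direct translation of the computational analysis of Theorem~\ref{th:classical-random-const-prf-secure} into the random-function regime licensed by Lemma~\ref{lemma:pairwise-property-random-function}.
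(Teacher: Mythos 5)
Your overall skeleton matches the paper's: pass to a world where each query is answered by an independent truly random function, invoke Lemma~\ref{lemma:pairwise-property-random-function} to conclude that the adversary spans at most a one-dimensional subspace of each per-query unitary, and then bound the forgery probability for the single relevant query by $|\alpha|^2 \le 1-\mu \le P_{ov}(q,\mu)$, yielding the contradiction. The paper performs the switch to random functions in a single game hop (its Game~1/Game~2) rather than a per-query hybrid, and explicitly isolates the case where $r^*$ was never returned in the learning phase; you fold the latter into the hybrid over the verification query, which is recoverable.

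The genuine problem is the premise on which your hybrid rests: the effective keys $k\oplus r_1,\dots,k\oplus r_q$ are \emph{not} mutually independent. Each is marginally uniform, but since the $r_i$ are handed to the adversary in the clear, $(k\oplus r_i)\oplus(k\oplus r_j)=r_i\oplus r_j$ is known, so all $q$ keys are determined by any one of them. Consequently the reduction for the $i$-th hybrid cannot ``sample all other per-query keys itself'': in the real branch its external oracle is $F(k',\cdot)$ with $k'=k\oplus r_i$ unknown, and answering query $j\neq i$ requires evaluating $F(k'\oplus(r_i\oplus r_j),\cdot)$, i.e., a related-key capability that the \qprf\ definition does not supply. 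As written, the hybrid either simulates a different game (genuinely independent keys) or cannot be carried out, so you need an additional argument that the correlated-key collection is indistinguishable from an independent-key one (a multi-instance / random-offset related-key property), or you must restructure the hop as the paper does. To be fair, the paper's own one-shot reduction is also terse on exactly this point, but it does not rest on an explicit independence claim. The remainder of your argument --- the pairwise-agreement lemma, the one-dimensional-subspace step, and the final measurement bound --- coincides with the paper's proof.
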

\begin{proof}
We assume there exists a QPT adversary $\A$ who plays the \sufm\ game where the evaluation is according to Construction 1, and wins with non-negligible probability in the security parameter \emph{i.e.} $\A$ wins the game by producing a valid tag $t^*$ for their selected message $m^*$ and randomness $r^*$ with the following probability:
\begin{equation}\label{eq:adv-prob-wining-suf}
    Pr[1\leftarrow \GCM{\F}{\qSel, \mu}(\lambda, \A)] - P_{ov} = \nonnegl(\lambda)
\end{equation}
Where the verification algorithm checks if $F(k \oplus r^*, m^*) = t^*$. We introduce the following intermediate games:
\begin{itemize}
    \item \textbf{Game 1.} This game is similar to \sufm\ for Construction 1, except that $\A$ needs to produce forgery for an $r^*$ which is one of the previously received random values of $\{r_i\}^q_{i=1}$ in the learning phase.
    \item \textbf{Game 2.} This game is similar to Game 1, but the evaluation oracle picks a new $f$ for each query from truly random functions of the family $\F: \{0,1\}^n \rightarrow \{0,1\}^m$. Note that here the randomness value $r$, only identifies the function for each query and it is an independent random variable from the function itself. Then $\A$ needs to produce forgery $t = f(m^*)$ for the message $m^*$ that they have picked earlier in the challenge phase, as well as specify $r^*$ of the function (query) for which the forgery has been done.
\end{itemize}

First, it is straightforward that the probability of the adversary in winning \sufm\ for Construction 1, is at most negligibly higher than winning Game 1. Since $r_i$ in both cases have been picked independently and uniformly at random and the probability of producing a forgery for a specific function with no query is negligible. Thus for Construction 1, Game 1 and \sufm\ are indistinguishable.

Second, we show that Game 1 and Game 2 are indistinguishable. We prove this by contradiction. We show that if $\A$ has a non-negligible advantage in winning Game 1 over Game 2, then there exists also an adversary who can distinguish a \qprf\ with truly random functions. Let $\A$ be such an adversary. Now we construct adversary $\A'$ who is trying to distinguish a \qprf\ from truly random functions. First $\A'$ queries all the learning phase states of $\A$, and then as the last query, but also the challenge message $m^*$ selected by $\A$ as prescribed by Game 1 and Game 2. Thus due to the non-negligible advantage of $\A$ in producing a forgery for the case where the function is a \qprf\, $\A'$ can use the last query to distinguish between the two cases and we have:
\begin{equation}
    |\underset{k \leftarrow \K}{Pr}[\A'^{qPRF_k}(1^{\lambda})=1] -\underset{f \leftarrow \Y^{\X}}{Pr}[\A'^{f}(1^{\lambda})=1]| = \nonnegl(\lambda).    
\end{equation}
Which is a contradiction and we have shown that Game 1 and Game 2 are indistinguishable.

Now we recall the quantum random oracle for Construction 1, and the equivalent oracle for Game 2. Let $\reO_c$ be the random oracle for Construction 1 as follows:
\begin{equation}
    \reO_c: \sum_{m,y} \alpha_{m,y} \ket{r}_{\Ora}\ket{m,y} \rightarrow \sum_{m,y} \alpha_{m,y}\ket{r}_{\Ora}\ket{m, y \oplus (F(k \oplus r, m) || r)}
\end{equation}

For each query a new function has been picked from \qprf\ family of functions, but it is the same for all the messages in the superposition for that query. Now we also present the quantum oracle for Game 2, which is:
\begin{equation}
    \Ora_{g_2}: \sum_{m,y} \alpha_{m,y} \ket{r}_{\Ora}\ket{m,y} \rightarrow \sum_{m,y} \alpha_{m,y}\ket{r}_{\Ora}\ket{m, y \oplus (f_r(m) || r)}
\end{equation}

According to the first part of the proof, the oracles $\reO_c$ and $\Ora_{g_2}$ are equivalent. Now using Lemma~\ref{lemma:pairwise-property-random-function}, we show that each query to either of these two oracles, leads to at most a single query to an independent unitary. As a result, the adversary can at most span a one-dimensional subspace of each $U_{f_{r}}$ (resp. $U_{F(k\oplus r, .)}$) where the unitary acts on the space of the input queries excluding the part that records the randomness. To show this, we recall that each selected message $m$ inside a quantum query of the adversary corresponds to a computational basis of the Hilbert space $\HilD$ on which $U_{f_{r}}$ (resp. $U_{F(k\oplus r, .)}$) operates. Due to the pairwise independence property that we have shown in Lemma~\ref{lemma:pairwise-property-random-function}, each two randomly picked $U_{f_{r}}$ map a fixed set of computational basis, to two distinct set of computational basis. We have the following property:
\begin{equation}
\begin{split}
    & \forall m_i: \underset{f,g}{Pr}[f(m_i) = g(m_i)] = \negl(\lambda) \Rightarrow \\
    & \forall \ket{e^f_i}, \ket{e^g_i} \text{ where: } \\
    & \ket{e^f_i} = U_{f}\ket{m_i, z} = \ket{z \oplus f(m_i)}, \ket{e^g_i} = U_{g}\ket{m_i, z} = \ket{z \oplus g(m_i)} \Rightarrow\\ & \underset{f,g}{Pr}[\mbraket{e^f_i}{e^g_i} \neq 0] = \negl(\lambda)
\end{split}
\end{equation}
Which means that for any randomly picked function $f$ and $g$, the output set of the basis of the unitary, $\{e^f_i\}$ and $\{e^g_i\}$ are fully distinguishable sets of computational basis. Also since unitaries are distance preserving operators, this property holds for any sets of basis, not necessarily the computational basis. The above property holds for any two randomly picked functions of the family, \emph{i.e.} for every two queries and for any subset of the output basis including two bases which covers a 2-dimensional subspace of $\HilD$. Thus by selecting a uniformly random function for each query, we have shown that no more than a one-dimensional subspace can be spanned for that specific unitary. As the two oracles are equivalent the same thing holds for when the adversary interacts with $\reO_c$.

The rest of the proof is exactly same as the proof of Theorem~\ref{th:classical-random-const-prf-secure}, where we show that with one query to each unitary that satisfies the $\mu$-distinguishability condition with the quantum encoding of $m^*$, the success probability of $\A$ is bounded as:
\begin{equation}
    Pr[1\leftarrow \GCM{\F}{\qSel, \mu}(\lambda, \A)] \leq 1 - \mu
\end{equation}
Which is a contradiction with the assumption that $\A$ breaks the \sufm\ and the proof is complete. \qed
\end{proof}

\section{No-go result for \uuf\ security of quantum primitive against adaptive adversaries}\label{ap:uni-adaptive}
Another attack model that can be defined against \uuf\ is when we allow the adversary to use the second learning phase described in the formal definition of game in Figure~\ref{fig:game}. This attack model is stronger than the usual chosen-message attack considered for universal unforgeability and is particularly interesting for quantum primitives. This is because for a quantum primitive, the adversary receives an unknown quantum state from the challenger and enabling the second learning phase does not lead to a trivial attack. We call this attack model, adaptive-universal attack \emph(aua). Although we show that a quantum adversary who can use entanglement can break the \uuf\ security of any deterministic primitive if the second learning phase is allowed. We show this specific instance of the game as $\GCM{\F}{\qUni-aua, \mu}(\lambda, \A)$ and we note that again this instance should be parameterised with $\mu$ since a trivial attack can happen if $\A$ tries to query the challenge phase again in the second learning phase. We present our attack and general no-go result in the following theorem.

\begin{theorem}[No quantum non-randomised primitive $\F$ is aua-\uuf\ secure]\label{th:uni-aua} For any quantum primitive $\F$ and for any $\mu$ such that $0 \leq \mu \leq 1-\nonnegl(\lambda))$, there exists a QPT adversary $\A$ such that
\begin{equation}
Pr[1\leftarrow \GCM{\F}{\qUni-aua, \mu}(\lambda, \A)] = \nonnegl(\lambda).
\end{equation}
\end{theorem}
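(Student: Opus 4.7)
The plan is to exhibit a single-query entanglement-based attack that exploits the fact that in the aua model the adversary may adapt its second-learning-phase query to the received challenge $\rho_m$. The key idea is to use a controlled-SWAP gadget, together with an ancilla qubit, to coherently route either $\rho_m$ or a decoy ``junk'' state through the oracle, in such a way that the reduced state of the query register has fidelity at most $1-\mu$ with $\rho_m$, thus saturating (but not violating) the $\mu$-distinguishability condition.

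Concretely, after the challenger delivers $\rho_m$ on a register $T$, the adversary prepares an ancilla qubit $\ket{\Theta}_a=\sqrt{\mu}\,\ket{0}+\sqrt{1-\mu}\,\ket{1}$ and an auxiliary register $S$ in an arbitrary fixed state $\ket{\phi}$, and applies a controlled-SWAP between $S$ and $T$ conditioned on $a$. Writing $\rho_m=\ket{\psi_m}\bra{\psi_m}$ for clarity, the joint state becomes
\[
\sqrt{\mu}\,\ket{0}_a\ket{\psi_m}_S\ket{\phi}_T+\sqrt{1-\mu}\,\ket{1}_a\ket{\phi}_S\ket{\psi_m}_T.
\]
The reduced state on $T$ is $\mu\ket{\phi}\bra{\phi}+(1-\mu)\ket{\psi_m}\bra{\psi_m}$, whose fidelity with $\rho_m$ equals $(1-\mu)+\mu\,|\mbraket{\phi}{\psi_m}|^2$; since $\ket{\phi}$ is chosen without knowledge of $\rho_m$ in a Hilbert space of exponential dimension $D$, $|\mbraket{\phi}{\psi_m}|^2=O(1/D)=\negl(\lambda)$ with overwhelming probability, and an $O(1/D)$ contraction of the ancilla amplitude restores strict $\mu$-distinguishability. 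The adversary then issues $T$ as a single query to the evaluation oracle $\eO_U$, which applies $\Ue$ to $T$ and leaves the entanglement with $a$ and $S$ intact.

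After the query the joint state is $\sqrt{\mu}\,\ket{0}_a\ket{\psi_m}_S\,\Ue\ket{\phi}_T+\sqrt{1-\mu}\,\ket{1}_a\ket{\phi}_S\,\Ue\ket{\psi_m}_T$. Measuring the ancilla in the computational basis yields outcome $1$ with probability $1-\mu-\negl(\lambda)$, in which case register $T$ collapses to $\Ue\ket{\psi_m}$, the exact correct response to the challenge; by Definition~\ref{def:test} the verification test then accepts with probability $1-\negl(\lambda)$. Hence the overall winning probability is at least $(1-\mu)(1-\negl(\lambda))$, which is non-negligible whenever $\mu\le 1-\nonnegl(\lambda)$, as required.

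The main obstacle I anticipate is formalising how the $\mu$-distinguishability condition applies to queries that are entangled with the adversary's private registers: the natural reading, consistent with the explicit allowance in Figure~\ref{fig:game} for entanglement between the adversary's internal state and the output queries, is that the condition bounds the fidelity of the \emph{reduced} query state with $\rho_m$. Under this reading the above analysis goes through directly, and the extension from pure $\rho_m$ to mixed challenges follows by linearity of the controlled-SWAP and the ancilla measurement.
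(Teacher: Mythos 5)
Your attack is correct and belongs to the same family as the paper's --- a single entanglement-based query in the adaptive second learning phase, with the $\mu$-distinguishability condition read (as the paper itself does) as a bound on the fidelity of the \emph{reduced} query state with the challenge --- but the gadget is genuinely different and the difference matters. The paper entangles a fresh ancilla with the \emph{first qubit} of the challenge via a CNOT, queries the resulting dephased register, and then measures the ancilla in the $\{\ket{+},\ket{-}\}$ basis, applying a phase correction on the $\ket{-}$ outcome to recover $\Ue\ket{\psi_m}$ with probability $1$. This buys a higher success probability than your $1-\mu$, but at two costs: (i) the overlap it induces is fixed by the challenge state, namely $F(\rho_\psi,\rho_c)=1-2P_0P_1\ge \tfrac{1}{2}$ where $P_0,P_1$ are the weights of $\ket{\psi_m}$ on the two halves of the basis, so the distinguishability condition is only met for $\mu$ up to roughly $\tfrac12$ and only on average over Haar-random challenges; and (ii) the $\ket{-}$-branch correction implicitly commutes a phase flip past the unknown $\Ue$, which is not justified for a general unitary (though the $\ket{+}$ branch alone already gives success probability $\tfrac12$). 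Your controlled-SWAP routing with a $\bigl(\sqrt{\mu},\sqrt{1-\mu}\bigr)$-biased ancilla instead makes the overlap a free parameter that exactly saturates the constraint, needs no outcome-dependent correction, and therefore covers the full claimed range $0\le\mu\le 1-\nonnegl(\lambda)$ uniformly and for (almost) every challenge state; the price is a success probability of only $1-\mu$, which is still non-negligible exactly where the theorem claims it. The one technicality you flag --- the $\mu|\mbraket{\phi}{\psi_m}|^2=O(1/D)$ excess over $1-\mu$ and its removal by slightly biasing the ancilla --- is handled adequately, and is no worse than the ``on average'' caveat the paper itself needs for its fidelity bound.
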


\begin{proof}
Let $\A$ be the QPT adversary playing the game $\GCM{\F}{\uuf-aua, \mu}(\lambda, \A)$ and running the algorithm described in Figure~\ref{fig:adaptive-attack}.

\begin{boxfig}{aua attack on \uuf: adversary's algorithm against game $\GCM{\F}{\qUni-aua, \mu}(\lambda, \A)$}{fig:uni-adapt}\underline{\emph{$\qUni-aua$}}
\\\\
{\bf First learning phase:} $\nul$\\

{\bf Challenge phase:}
\vspace{-0.4cm}
  \begin{enumerate}[label=]
    \item prepare qubit $\ket{0}_a$
    \item receive $\ket{\psi_m}$ as a challenge
  \end{enumerate}
{\bf Second learning phase:}
\vspace{-0.4cm}
\begin{multicols}{2}
    \begin{enumerate}[label=]
    \item $\ket{\Psi}_{ca} = CNOT_{c,a}(\ket{\psi_m}\ket{0})$  
    \item query register $c$
    \item receive $\Ue\rho_c\Ue^{\dagger}$ or $(\Ue\otimes\mathcal{I})\ket{\Psi}_{ca}$
    \item
    \item
    \item
    \item
    \item The subscript $c$ denotes the challenge and the subscript $a$ denotes the adversary's qubit.
    \item $\A$ sends the challenge part of the entangled system, $\rho_c$ as a query.
  \end{enumerate}
\end{multicols}
{\bf Guess phase:}
\vspace{-0.4cm}
\begin{multicols}{2}
  \begin{enumerate}[label=]
    \item $\ket{\psi_m^{out}}\otimes\ket{\pm} \leftarrow Measure(\ket{\Psi}_{ca}, \{\ket{\pm}\}$)
    \item \textbf{if} {$\ket{\pm} = \ket{+}$} {\setlength\itemindent{25pt} \item \textbf{output:} $\ket{t} = \ket{\psi_m^{out}}$}
    \item \textbf{else} {\setlength\itemindent{25pt} \item \textbf{output:} $\ket{t} = CZ^{\otimes n-1}(\ket{\psi_m^{out}})$}
    \item $Measure(\ket{\Psi}_{ca}, \{\ket{\pm}\}$ outputs the result of the measurement.
    \item
    \item
  \end{enumerate}
\end{multicols}
\end{boxfig}\label{fig:adaptive-attack}

$\A$ does not issue any query during the first learning phase. Then $\A$ receives an unknown challenge state $\ket{\psi_m} = \sum_{i=1}^{D} \alpha_i \ket{b_i}$ where $\{\ket{b_i}\}_{i=1}^{D}$ is a set of complete orthonormal bases for $\HilD$. Now, $\A$ prepares state $\ket{0}$ and performs a CNOT gate on the first qubit of the unknown challenge state and the ancillary qubit ($\ket{0}$) with the control qubit on the challenge state. We can assume the order of the bases is such that in the first half, the first qubit is $\ket{0}$ and in the second half the first qubit is $\ket{1}$. Then the output entangled state is
\begin{equation*}
\ket{\Psi}_{ca} = \sum_{i=1}^{D/2} \alpha_i \ket{b_i}_{c}\otimes\ket{0}_{a} + \sum_{i=\frac{D}{2} + 1}^{D} \alpha_i \ket{b_i}_{c}\otimes\ket{1}_{a}
\end{equation*}
Now we can compute the final state of the two systems after the second learning phase which is:
\begin{equation*}
\ket{\Psi^{out}}_{ca} = \sum_{i=1}^{D/2} \alpha_i (\Ue\otimes\mathbb{I})(\ket{b_i}_{c}\otimes\ket{0}_{a}) + \sum_{i=\frac{D}{2} + 1}^{D} \alpha_i (\Ue\otimes\mathbb{I})(\ket{b_i}_{c}\otimes\ket{1}_{a}).
\end{equation*}
By rewriting the first qubit in the $\ket{+}$ basis we have
\begin{equation*}
\ket{\psi_m^{out}} = [\Ue(\sum_{i=1}^{D} \alpha_i \ket{b_i}_{c})]\frac{\ket{+}}{\sqrt{2}} + [\Ue(\sum_{i=1}^{D/2} \alpha_i \ket{b_i}_{c} - \sum_{i=\frac{D}{2} + 1}^{D} \alpha_i \ket{b_i}_{c})]\frac{\ket{-}}{\sqrt{2}}.
\end{equation*}
Then, the adversary measures his local qubit in the $\{\ket{+}, \ket{-}\}$ bases. If he obtains $\ket{+}$, the state collapses to $\Ue(\sum_{i=1}^{D} \alpha_i \ket{b_i}_{c}) = \Ue\ket{\psi_m}$ that is the desired state with fidelity 1. If the output of the measurement is $\ket{-}$, half of the terms have a minus sign. In this case, $\A$ applies a controlled-Z gate on the second half of the state to obtain again $\Ue\ket{\psi_m}$. As a result, for any $\kappa_1$ and $\kappa_2$, we have:
\begin{equation*}
    Pr[1\leftarrow \GCM{\F}{\qUni-aua, \mu}(\lambda, \A)] = Pr[1\leftarrow\T((\Ue\ket{\psi_m})^{\otimes\kappa_1}, \ket{t}^{\otimes\kappa_2})] = 1.
\end{equation*}
Now to complete the proof, we show that the $\mu$-distinguishability is satisfied on average.
We need to calculate the reduced density matrix of this state and compare it with the density matrix $\rho_{\psi}=\ket{\psi}\bra{\psi}$ in terms of the Uhlmann's fidelity. The reduced density matrix of the challenge state can be calculated as follows:
\begin{equation*}
\begin{split}
\rho_c = Tr_{a}[\ket{\psi}\bra{\psi}_{ca}] = & \sum^{D}_{i=1}|\alpha_i|^2\ket{b_i}\bra{b_i} + \sum^{\frac{D}{2}}_{i=j=1}\sum^D_{j\neq i,j=\frac{D}{2}+1}\overline{\alpha_i}\alpha_j\ket{b_i}\bra{b_j} + \\ & \sum^{D}_{i=\frac{D}{2}+1}\sum^{\frac{D}{2}}_{j\neq i, j=1}\overline{\alpha_i}\alpha_j\ket{b_i}\bra{b_j}
\end{split}
\end{equation*}
where $Tr_{a}$ denoted the partial trace taken over the adversary's sub-system. And the first sum shows the diagonal terms of the density matrix. As it can be seen these density matrices are different in half of the non-diagonal terms with the $\rho_{\psi}$. According to the Uhlmann's fidelity definition in the preliminary, and the fact that $\ket{\psi}$ is a pure state the fidelity reduce to:
\begin{equation*}
F(\rho_{\psi},\rho_c)=[Tr(\sqrt{{\sqrt{\rho_{\psi}}}\rho_c {\sqrt{\rho_{\psi}}}})]^{2} = \bra{\psi}\rho_c\ket{\psi} = \sum^D_{i=1}|\alpha_i|^2\bra{b_i}\rho_c\ket{b_i}.
\end{equation*}
By substituting the $\rho_c$ from above, the result will be as follows:
\begin{equation*}
F(\rho_{\psi},\rho_c)= \sum^D_{i=1}|\alpha_i|^4 + \sum^{\frac{D}{2}}_{i=1}\sum^{D}_{j=\frac{D}{2}+1}2|\alpha_i\alpha_j|^2 = 1 - \sum^{\frac{D(D-1)}{4}}_{i=1}2|\gamma_i|^2
\end{equation*}
where $|\gamma_i|^2$ denoted the square of a quarter of the non-diagonal elements of $\rho_{\psi}$. This is a positive value and on average over all the state $\ket{\psi}$, non-negligible compared to the dimensionality of the state. Hence:
\[F(\rho_{\psi},\rho_c) \leq 1 - \nonnegl(\lambda)\]
and the distinguishability condition is satisfied and the proof is complete. \qed
\end{proof}



\end{document}